\documentclass[11pt]{article}

\usepackage[preprint]{acl}

\usepackage{times}
\usepackage{latexsym}
\usepackage[T1]{fontenc}
\usepackage[utf8]{inputenc}
\usepackage{microtype}
\usepackage{inconsolata}
\usepackage{graphicx}

\usepackage{amsmath,amssymb,amsthm}
\usepackage{booktabs}
\usepackage{multirow}
\usepackage{array}
\usepackage{enumitem}
\usepackage{algorithm}
\usepackage{algorithmic}
\setlist{leftmargin=*, itemsep=0pt}
\setlist[itemize,1]{leftmargin=10pt}
\renewcommand{\topfraction}{.95}
\renewcommand{\dbltopfraction}{.95}
\renewcommand{\textfraction}{.05}
\newtheorem{theorem}{Theorem}

\theoremstyle{definition}

\theoremstyle{remark}
\newtheorem{remark}{Remark}

\DeclareMathOperator*{\argmax}{arg\,max}

\title{Pin It on Anyone: Certified Targeted Displacement of\\Multi-Bit LLM Watermarks Without the Key}

\author{%
  Pawe\l{} Batorski\textsuperscript{1}\quad
  Przemys\l{}aw Spurek\textsuperscript{2,3}\quad
  Paul Swoboda\textsuperscript{1}\\[3pt]
  \normalfont\normalsize\textsuperscript{1}Heinrich Heine University D\"usseldorf\quad
  \textsuperscript{2}Jagiellonian University\quad
  \textsuperscript{3}IDEAS Research Institute}

\begin{document}
\maketitle

\begin{abstract}
Multi-bit watermarking embeds a message, typically an account identifier, into LLM output so that harmful text can be traced to its origin.  
We show an attack technique that allows to alter the watermarking message.
This allows to, for example, exchange the embedded account ID to an arbitrary different one. 
The attacker mainly needs access to a few accounts for generating LLM output and model logits, but crucially not the secret key used to watermark.
The attacker then rewrites the text, keeping its meaning, so that the watermarking message decoder recovers the watermarking message embedded by the attacker.
We are, to our knowledge, the first to propose such an attack, which we call a \emph{targeted displacement}.
Theoretically, we present upper and lower bounds on the probability of our attack succeeding.
This bound depends on the number of accounts the attacker has access to, the length of the message and some characteristics of the given text.
Our attack can succeed whenever we can bound the ambiguity of the generated text well enough.
Empirically, we test our attack technique against five watermarking methods and four LLMs.
Attack success tracks our predicted bounds, allowing us to alter watermarking messages in almost all cases with high accuracy, given enough accounts.
Our code can be found at \url{https://github.com/Batorskq/PinIt}.
\end{abstract}

\section{Introduction}

\begin{figure}[t!]
 \centering
 \includegraphics[width=\columnwidth]{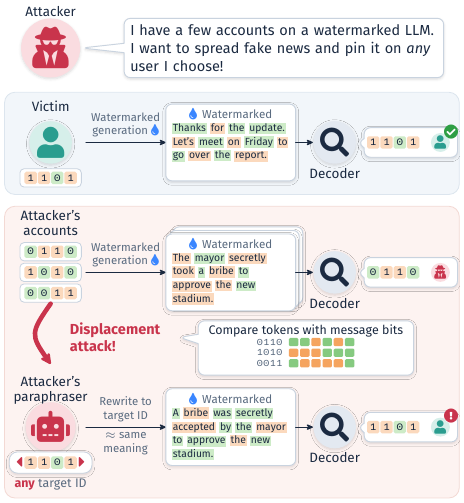}
 \caption{Targeted displacement of a multi-bit watermark.  The provider embeds
 each account's identifier into generated text, and a decoder recovers it to
 attribute a text.  Text from the attacker's own accounts is traced back to the
 attacker.  By comparing the tokens its accounts favor with their known message
 bits, the attacker learns the hidden watermark structure without the key and
 paraphrases its text, keeping the meaning, so that the decoder outputs any
 identifier of the attacker's choice, here that of an innocent user.}
 \label{fig:teaser}
\end{figure}

Watermarks let a provider recognize text that its language model generated \citep{kirchenbauer2023watermark}.
Multi-bit watermarks also hide a message in the text, typically the identifier of the account that requested it, so that a decoder holding the secret key can attribute a harmful text to its account \citep{fernandez2023three,yoo-etal-2024-advancing,qu-etal-2025-provably,li-etal-2024-identifying,jiang-etal-2025-stealthink,xu-etal-2026-xmark}.
We show that an attacker can effectively alter the watermark message (Figure~\ref{fig:teaser}).
With a few accounts of its own and without the secret key, it rewrites a text that carries a \emph{source message $m$}, keeping its semantics, so that the decoder reads a chosen \emph{target message $m'$}, such as the identifier of a third user.
We call this \emph{targeted displacement}.
Obviously, such a watermarking attack can have serious security and privacy implications, when pinning harmful or sensitive text on an innocent party.

Prior attacks remove the watermark \citep{krishna2023paraphrasing,zhang2024watermarks} or spoof its presence \citep{jovanovic2024watermark,gu2024learnability,pang2024no,jiang-etal-2025-stealthink}.
They hence change whether the text is recognized as watermarked or not.
That a multi-bit watermark might be steered toward a specific user was anticipated \citep{jovanovic2024watermark,gloaguen2025discovering} and stated as a security goal \citep{qu-etal-2025-provably}, but no known attack, to our knowledge, was able to alter an underlying watermarking message.
We develop the first such attack and a theory of what any such attack can achieve.

We study attacks on so called distortion-based watermarking methods, which include popular green-list and red-list methods \citep{kirchenbauer2023watermark,kirchenbauer2024reliability,zhao2024provable,lee-etal-2024-wrote,liu2024semantic,lu-etal-2024-entropy} and their multi-bit variants \citep{fernandez2023three,yoo-etal-2024-advancing,wang2024codable,qu-etal-2025-provably}.
Such schemes make the generator favor, at each step, a subset of the vocabulary that depends on the message and the secret key.
The decoder checks how consistently a text follows the subsets of each candidate message.
The attacker cannot compute these subsets without the key, but a few accounts whose messages are known can reveal enough information about the favored subset.
When comparing an account's next-token distribution with the base model, we see which tokens are favored, and comparing accounts with one another shows the tokens the target message $m'$ would favor.
A paraphraser restricted to these tokens then rewrites the text (Figure~\ref{fig:method}).
Not every target message is reachable however: when the accounts are too few or their underlying watermark messages are too similar, several targets look the same to the attacker.
For such cases we prove an upper bound on the efficacy of any possible attacker.
Additionally, we provide lower bounds for our attacker: 
For any target message $m'$ can compute a certificate, without access to the key, for a lower bound on the probability that the decoder will decode $m'$.
We evaluate the attack on five published watermarking methods using four LLMs
(Section~\ref{sec:experiments}).

Our contributions are the following.
\begin{itemize}
\item To our knowledge, we develop the first attack on a multi-bit watermarking method that can inject arbitrary target messages without access to the key.
 Its three steps depend on the scheme.
\item We give upper bounds on attack success for any attack method on multi-bit watermarking schemes and a lower bound that certifies the probability of attack success for our method. We exactly characterize in closed form reachable target messages.
\item We evaluate the attack on five published watermarking methods and four LLMs, confirm the bound, show which schemes fall to one account, to multiple colluding accounts, or resist, and measure the price in fidelity for generating watermarked text with a target message.
\end{itemize}

\section{Related Work}
\label{sec:related}

\paragraph{Multi-bit watermarks.}
Zero-bit watermarks bias the next-token distribution toward a keyed token
set \citep{chen-etal-2024-watme,huo2024token,kirchenbauer2023watermark,kirchenbauer2024reliability,lee-etal-2024-wrote,liu2024semantic,liu2024adaptive,lu-etal-2024-entropy,wang-etal-2025-morphmark,zhao2024provable},
sample from it without distortion
\citep{chen-etal-2025-improved,christ2024undetectable,fu-etal-2024-gumbelsoft,golowich2024edit,he2025theoretically,hu2024unbiased,mao-etal-2025-watermarking,tsur2025heavywater,wu2024resilient,zhao2025permute},
or select among drafts \citep{giboulot2024watermax,hou-etal-2024-semstamp}.
They only detect presence.  Multi-bit watermarks, like linguistic
steganography \citep{shen-etal-2020-near,ziegler-etal-2019-neural}, hide a
message, typically a user identifier.  CycleShift cyclically shifts one keyed
set by the message \citep{fernandez2023three}, MPAC gives each context a
message position and a vocabulary coloring \citep{yoo-etal-2024-advancing},
RSBH spreads a Reed--Solomon codeword over public segments
\citep{qu-etal-2025-provably}, DepthW keys a depth permutation by the whole
code \citep{li-etal-2024-identifying}, StealthInk reweights a keyed
permutation without distortion \citep{jiang-etal-2025-stealthink}, and XMark
suppresses keyed vocabulary quarters within public blocks
\citep{xu-etal-2026-xmark}.  Other schemes edit or paraphrase the text
\citep{lau-etal-2024-waterfall,xu2025robust,yang2022tracing,yoo-etal-2023-robust},
balance, stack, or symmetrize partitions
\citep{feng2025bimark,park2026watermod,wang2024codable,zhu-etal-2026-quantilemark},
correct, verify, or layer the payload
\citep{chen2026selective,kim2026blockwise,li2026argh}, personalize it
\citep{an2026ipmark,yu2025saemark}, or serve diffusion LMs, agents, and code
models
\citep{huang-wan-2026-tdma,huang-etal-2026-agentmark,yang-etal-2026-second,zhang-etal-2025-clmtracing}.
Attribution has been formalized \citep{jiang2026watermark,song2026ideal}, and
CertMark certifies the decoded message \citep{batorski2026certmark}.  These
works protect the message against erasure by edits and paraphrases.  We ask
which messages an attacker can make the decoder read.

\paragraph{Attacks on watermarks.}
Paraphrasing, editing, translation, and ensembling remove watermarks
\citep{cheng2025revealing,cheng2025adversarial,diaa2025optimizing,harel-canada-etal-2025-sandcastles,he-etal-2024-watermarks,huang-etal-2025-b4,huang2026rlcracker,hwang2026llm,krishna2023paraphrasing,pan-etal-2025-llm,rastogi-pruthi-2024-revisiting,wu2026linear,zhang2024watermarks},
benchmarks compare schemes
\citep{pan-etal-2024-markllm,tu-etal-2024-waterbench,wu2026analyzing}, and
queries reveal whether and how a model is watermarked
\citep{gloaguen2025blackbox,liu2025can}.  Closer to us, stealing estimates a
zero-bit scheme's keyed sets from queries
\citep{chen2025demark,jovanovic2024watermark,wu-chandrasekaran-2024-bypassing}
and distillation copies a watermark into another model
\citep{an-etal-2026-ditto,gu2024learnability,sander2024watermarking}, so that
an attacker can scrub it or spoof it on any text.  Robustness alone permits
piggyback spoofing \citep{pang2024no}.  Defenses harden watermarks against
spoofing or detect it
\citep{an2025defending,an-etal-2026-reinforcement,gloaguen2025discovering,liu2024unforgeable,shen2025enhancing,zhou2024bileve},
and image watermarks can likewise be forged or overwritten
\citep{dong2025wmcopier,muller2025blackbox,saberi2024robustness,senthuran2026self,soucek2025transferable,yang2024can}.
A zero-bit spoof forges one verdict, whereas spoofing a multi-bit watermark could
incriminate a specific user \citep{gloaguen2025discovering}, and we study
this attack.  Combining outputs under several keys removes a watermark
\citep{pang2024no,wu2026linear}.  Our attacker combines accounts under one key
to forge a message, and our symmetry bound limits it as impossibility and
trade-off results limit the defender
\citep{pratihar2026catch,zhang2024watermarks}.  The rewrite steers a
paraphraser toward a token set, as in lexically constrained and controlled
decoding
\citep{dathathri2020plug,hu-etal-2019-improved,kajiwara-2019-negative,liu-etal-2021-dexperts,lu-etal-2021-neurologic,mudgal2024controlled,qin2022cold,yang-klein-2021-fudge},
also used to obfuscate authorship \citep{fisher-etal-2024-jamdec}.

\section{Method}
\label{sec:method}

\begin{figure*}[t]
 \centering
 \includegraphics[width=.999\textwidth]{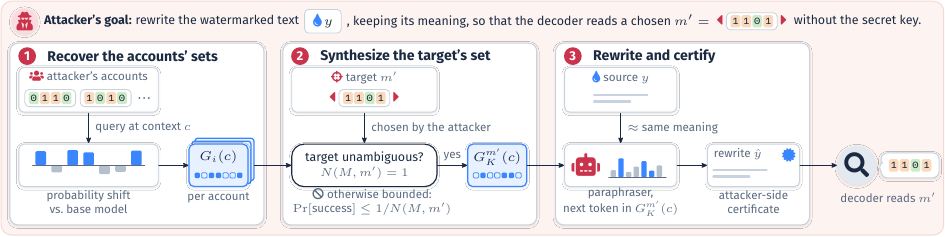}
 \caption{Overview of targeted displacement.  The attacker wants to rewrite a
 text that carries watermark message $m$, keeping its meaning, so that the
 decoder attributes it to a chosen message $m'$, without the secret key.  It
 controls accounts with known messages and, at every context used by the
 rewrite, compares an account's next-token distribution with the base model to
 recover the account's green list (step 1).  The recovered green lists are
 turned into the target's green list.  This succeeds exactly when the target's
 relative bits are constant within every twin class; otherwise the ambiguity
 bound of Theorem~\ref{thm:targeting-bound} caps the success probability
 (step 2).  A
 paraphraser rewrites the source under a token-level constraint toward the
 target's set, and the attacker certifies the rewrite from its own posterior
 without the key (step 3), so that the provider's decoder reads the target
 message.  Only step 2 depends on the watermark scheme.}
 \label{fig:method}
\end{figure*}

We study \emph{targeted displacement}: rewriting a text that carries a watermark message $m$ so that the decoder instead recovers the target message $m'$.
We first state the attacker's objective and describe the watermarks we study.

\paragraph{Watermarking (Attacks).}
\label{sec:problem-formulation}

Let $\mathcal M = \{0,1\}^L$ be the message space.
Let $p_{K,m}$ be the next-token distribution for encoding watermarked text with message $m$ and secret key $K$.
Typically $m$ will be the account identifier of the person generating text.
For text $y \sim p_{K,m}$ the decoder $D_K$ will output with high probability that $D_K(y) = m$.

The attacker's goal is as follows: 
For a target message $m'\in\mathcal M$, the attacker outputs text $\hat y=A(y,m',O_M)$ such that 
(i)~the decoder will output $m'$ as watermarking message and (ii)~the rewritten text $\hat{y}$ is semantically close to the original text $y$ and (iii) $O_M$ contains all the information the attacker has at his disposal.
First, the attacker has access to a number of accounts with messages 
\begin{equation}
 M=(m^{(1)},\ldots,m^{(n)})\in\mathcal M^n,\qquad n\geq1,
 \label{eq:known-messages}
\end{equation}
so he can generate text for $p_{K,m^{(1)}}$, $\ldots$, $p_{K,m^{(n)}}$.
Furthermore, the attacker has access to the base model, the tokenizer, the watermarking scheme, and relevant hyperparameters (decoding temperature, etc.) but has no access to the secret key $K$ nor to the decoder for retrieving the watermark message from text. 

The threat is direct for open-weight models, where every user of a watermarked service can download the base model.
For closed models it becomes an insider threat.

\paragraph{Distortion Watermarking.}
\label{sec:preliminaries}
We study \emph{distortion watermarks}, which, for a fixed key, embed the message by changing the model's next-token distribution \citep{kirchenbauer2023watermark}.
When generating text $y$, at every position $t$ the method takes the last $\ell$ tokens (the context length) of $y$, i.e.\ $c = y_{t-\ell:t}$.
Together with the key $K$, the message $m\in\mathcal M$, the \emph{green list} of tokens is computed:
\begin{equation}
 G_K^m(c)\subseteq\mathcal V,
 \label{eq:label-set}
\end{equation}
The green list is used as follows:
\begin{enumerate}[label=\textnormal{(P\arabic*)},leftmargin=*,labelsep=0.3em]
 \item generation under message $m$ replaces the base distribution
 $p(\cdot\mid c)$ by $p_{K,m}(\cdot|c) = \Phi\bigl(p(\cdot\mid c),\,G_K^m(c)\bigr)$ for a fixed,
 key-independent rule $\Phi$, and
\item the decoder recomputes the green lists for all messages $m \in \mathcal M$ and chooses the message that maximizes the probability of having generated the watermarked text $y$, i.e.\ $m \in \argmax{\tilde{m} \in \mathcal M} p_{K,\tilde{m}}(y)$.
\end{enumerate}
The rule $\Phi$ usually shifts probability toward the green list. 
Alternatively, some schemes such as XMark suppress the green list set instead.
We can also express this by taking $G_K^m(c)$ to be the complement of the original green set (Appendix~\ref{app:transformations}).
For simplicity, for a controlled account $i$ with known message $m^{(i)}$, we write $G_i(c):=G_K^{m^{(i)}}(c)$.

\paragraph{Green-List $G_K^m$ Construction.}
The watermarking method uses key $K$ and context $c$ to compute a pseudorandom permutation $\pi_K(c)$ of the vocabulary and splits the permuted tokens into two halves.
This assigns every token $v$ a hidden color $C_K(c,v)\in\{0,1\}$.
The key also determines a message position $J_K(c)\in[L]$.
For message $m$, the green list is computed as those tokens whose color agrees with the message $m$ at position  $J_K(c)$, namely 
\begin{equation}
 G_K^m(c)=\{v\in\mathcal V:C_K(c,v)=m_{J_K(c)}\}.
 \label{eq:binary-label-set}
\end{equation}
The attacker has no access to the key, hence cannot compute the permutation $\pi_K(c)$, nor message position $J_K$ or hidden color $C_K$.
It observes only the resulting probability shift.
Figure~\ref{fig:hidden-set} shows why $G_K^m(c)$ is therefore the relevant object.

\begin{figure}[!ht]
 \centering
 \includegraphics[width=.969\columnwidth]{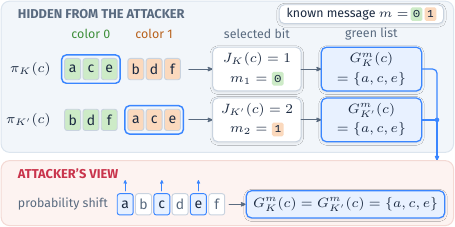}
 \caption{Different hidden permutations and selections can induce the same
 observable green list set.}
 \label{fig:hidden-set}
\end{figure}

\subsection{Attack}
\label{sec:attack}

\paragraph{Recovering $G_i(c)$.}
\label{sec:recovery}
Fix any context $c$.
Let $p(\cdot\mid c)$ be the base model's next-token distribution, and let $p_i(\cdot\mid c):=p_{K,m^{(i)}}(\cdot\mid c)$ be the observable distribution of controlled account $i$.
For general rules $\Phi$, the true green list set $G_i(c)$ is unknown, since the attacker does not know $K$.
However, for all the used watermarking methods function $\Phi$ is simple and we can obtain $G_i(c)$ analytically.
For example, for MPAC $\Phi$ increases the probability for all tokens in $G_i(c)$ and reduces them for the complement.
Hence $G_i(c) = \left\{v \in \mathcal V : p(v|c) < p_i(v|c) \right\}$.

We present a general inversion approach that computes $G_i(c)$ consistent with arbitrary $\Phi$ and give analytic inversion formulas for every used watermarking method in Appendix~\ref{app:transformations}.

\begin{figure}[t]
\centering
\includegraphics[width=.89\columnwidth]{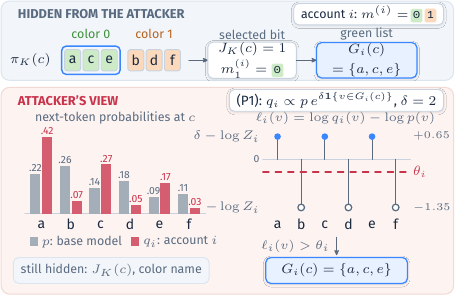}
\caption{Recovering an account's set for MPAC at one context $c$.  Top: the
key colors the vocabulary and selects position~1, where account $i$'s
message has bit~0, so generation boosts $G_i(c)=\{a,c,e\}$.  The attacker
sees none of this.  Bottom: the attacker queries the account and the base
model at $c$.  By the public rule (P1), every token of the hidden set has
its probability multiplied by $e^{\delta}$ before renormalization, so the
log-ratio $\ell_i(v)=\log q_i(v)-\log p(v)$ takes only two values,
$\delta-\log Z_i$ inside the set and $-\log Z_i$ outside it, here with
$Z_i=1+(e^{\delta}-1)\sum_{v\in G_i(c)}p(v)=3.9$.  The midpoint $\theta_i$
separates them and returns $\{a,c,e\}$ exactly, without the key.  The
position $J_K(c)$ and the name of the color stay hidden.
Appendix~\ref{app:transformations} gives the rule for every scheme.}
\label{fig:recover-set}
\end{figure}

\paragraph{Recovering green list of target $m'$.}
\label{sec:synthesis}
The attacker controls $n$ accounts and, at context $c$, holds their green lists $G_1(c),\ldots,G_n(c)$.
We detail below how the attacker uses them to compute the target set  $G_K^{m'}(c)$.
Recall that in \eqref{eq:binary-label-set} the key assigns $c$ the position $j=J_K(c)$, which the attacker does not know.
It knows, however, that $G_i(c)$ is the class of tokens with color $m_j^{(i)}$, and that the two color classes partition $\mathcal V$.
Two accounts therefore have the same green list at $c$ exactly when their messages have the same bit at the hidden position,
\begin{equation}
 G_i(c)=G_{i'}(c)
 \iff
 m_j^{(i)}=m_j^{(i')},
 \label{eq:same-set}
\end{equation}
and complementary sets otherwise.

To relate these comparisons to positions, we take account $1$ as the reference and define, for every message position $j\in[L]$, its
\emph{account signature}
\begin{equation}
 \begin{aligned}
 x_j&=(x_{j,2},\ldots,x_{j,n})\in\{0,1\}^{n-1},\\[-2pt]
 x_{j,i}&=\bigl(m_j^{(i)}+m_j^{(1)}\bigr)\bmod 2,
 \end{aligned}
 \label{eq:account-signature}
\end{equation}
so $x_{j,i}=0$ if account $i$ has the same bit as the reference at position $j$, and $x_{j,i}=1$ if it has the opposite bit.
The attacker computes every $x_j$ from the known messages, and by \eqref{eq:same-set} the sets reveal the signature of the hidden position, $x_{J_K(c),i}=\mathbf 1\{G_i(c)\neq G_1(c)\}$.
Positions $j$ and $j'$ are
\emph{twins} if $x_j=x_{j'}$.  We write $\mathcal T$ for the resulting
partition of $[L]$ into \emph{twin classes} and $T(c)$ for the class of
$J_K(c)$.  Matching the observed signature against the known ones identifies
$T(c)$, but no comparison of sets can separate twins
(Figure~\ref{fig:twin-classes}).

\begin{figure}[t]
\centering
\includegraphics[width=.89\columnwidth]{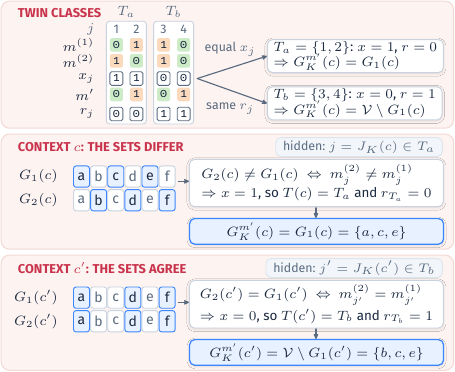}
\caption{Building the set of the target $m'$ from twin classes.  Top: two
accounts with known messages, $m^{(1)}$ as the reference, and the signature
$x_j$ that marks where $m^{(2)}$ differs from it.  Positions with equal
$x_j$ are twins, here $T_a=\{1,2\}$ and $T_b=\{3,4\}$.  The target $m'$
differs from the reference by $r_j$, which is constant on each class, so
every class fixes one choice between $G_1(c)$ and its complement
\eqref{eq:target-set}.  Middle: at a context $c$ the two accounts' green
lists differ, which by \eqref{eq:same-set} means that their messages differ
at the hidden position $j=J_K(c)$, so $j\in T_a$ and the attacker boosts
$G_1(c)$.  Bottom: at a context $c'$ the sets agree, so $j'\in T_b$ and the
attacker boosts the complement of $G_1(c')$.  Neither step needs the position
or the names of the colors.  Bit 0 is green, bit 1 is orange, and the set to
boost is blue.}
\label{fig:twin-classes}
\end{figure}

Let $r_j=(m'_j+m_j^{(1)})\bmod 2$ be the target's bit relative to the reference.
Since $G^{m'}_K(c)$ is the class of tokens with color $m'_{J_K(c)}$, it equals $G_1(c)$ if $r_{J_K(c)}=0$ and $\mathcal V\setminus G_1(c)$ otherwise.
If $r$ is constant on every twin class, with value $r_T$ on class $T$, the attacker can make this choice from $T(c)$ alone:
\begin{equation}
 G^{m'}_K(c)=
 \begin{cases}
  G_1(c), & r_{T(c)}=0,\\[-1pt]
  \mathcal V\setminus G_1(c), & r_{T(c)}=1.
 \end{cases}
 \label{eq:target-set}
\end{equation}
The attacker never needs the hidden position or the name of either color.
This covers $2^{|\mathcal T|}$ targets, one per choice of the values $r_T$
(all $2^L$ when every position has a distinct signature).  It remains to use
these sets to steer the rewrite.

\paragraph{Redirecting the decoded message to $m'$.}
\label{sec:constructive-attack}
Let $y$ be the source text.
We will use a paraphraser model $p_{para}$, a chat model that is asked to rewrite $y$ faithfully (Appendix~\ref{app:prompt}).
Let $z$ be the rewritten text and $z_{<t}$ the rewrite prefix at position $t$.
We write
\[
 p_t(v):=p_{\mathrm{para}}(v\mid y,z_{<t})
\]
for the paraphraser's next-token distribution.
The prompt never mentions the watermark, which enters only through the decoding rule below.
At the watermark context $z_{<t}$ we have the green-list $G^{(m')}_K(z_{<t})$.
We use the \emph{hard bias sampler}
\begin{equation}
 r_t(v) \propto \begin{cases} p_t(v),& v \in G^{(m')}_K(z_{<t}) \\ 0,& \text{otherwise} \end{cases}.
 \label{eq:hard-attack}
\end{equation}
We apply \eqref{eq:hard-attack} when $Z_t\geq\tau$ and $G^{(m')}_K(z_{<t})$ contains an available token.
Otherwise we retain the full vocabulary and use a \emph{soft bias sampler}.
Choose a $\delta_{\mathrm{atk}}>0$ and let
\begin{equation}
 r_t(v)\propto
 p_t(v)\exp\!\left(\delta_{\mathrm{atk}}\mathbf 1\{v\in G^{(m')}_K(z_{<t}\}\right).
 \label{eq:soft-attack}
\end{equation}
Thus $\tau=0$ forces a target-compatible token whenever possible, while larger $\tau$ leaves low-mass contexts closer to the paraphraser.

The decoder relies on distinct context-token pairs for its efficacy. 
Since context-token pairs can repeat too often (recall that context-token pairs can reappear due to fixed context length $\ell$), and constrained decoding tends to loop, we block repeated 4-grams \citep{paulus2018deep}.
Algorithm~\ref{alg:mpac-attack} in Appendix~\ref{app:pseudocode} lists the complete attack for binary MPAC.

\begin{figure*}[t]
\centering
\includegraphics[width=\textwidth]{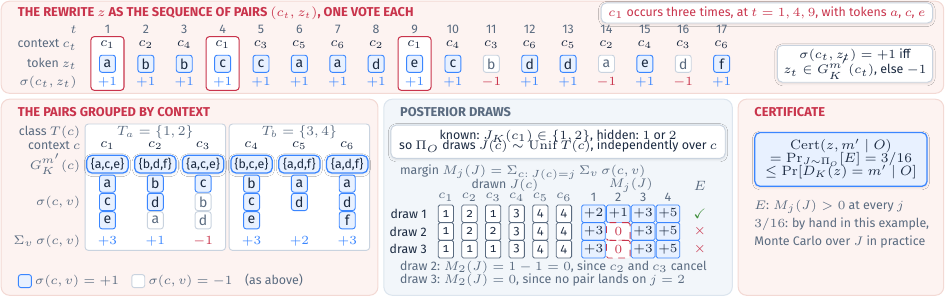}
\caption{The certified attack on a rewrite $z$, for $L=4$ with twin classes
$T_a=\{1,2\}$ and $T_b=\{3,4\}$.  Top: $z$ read as the sequence of pairs
$(c_t,z_t)$, one vote each, with $\sigma(c_t,z_t)=+1$ when $z_t$ lies in the
recovered target green list $G_K^{m'}(c_t)$ and $-1$ otherwise.  The context
$c_1$ occurs three times, at $t=1,4,9$.  Bottom left: the pairs grouped by
context with the vote sum of each context.  Bottom middle: the attacker knows
$J_K(c_1)\in\{1,2\}$ but not whether it is $1$ or $2$, so $\Pi_O$ draws the
position of every context uniformly from its class, independently across
contexts.  Each draw gives the margins $M_j(J)$, and $E$ holds when every
margin is positive.  $E$ fails when $c_3$ cancels $c_2$ (draw 2) or when no
pair lands on position $2$ (draw 3).  Bottom right: the share of draws with
$E$ is the certificate, a lower bound on the probability that the decoder
reads $z$ as $m'$.  In this small example $\Pr[E]$ is computed by hand,
$2/8$ for $T_a$ (positions of $c_1$ and $c_2$ differ and $c_3$ joins $c_1$)
times $6/8$ for $T_b$ (not all three contexts on one position), whereas the
experiments estimate it by Monte Carlo over draws of $J$.}
\label{fig:certificate}
\end{figure*}

\subsection{Certification}
\label{sec:certification}

\paragraph{Upper bound on attack success.}
The comparisons in Section~\ref{sec:synthesis} reveal the twin class $T(c)$
of the hidden position $J_K(c)$, but cannot distinguish the positions inside
that class.  Recall that
$r_j=(m'_j+m_j^{(1)})\bmod 2$ records whether the target differs from the
reference account at position $j$.  For each twin class $T$, define
\begin{equation}
 a_T:=\sum_{j\in T}r_j.
 \label{eq:class-differences}
\end{equation}
Thus $a_T$ is the number of positions in $T$ at which the target differs from
the reference.  The attacker knows this number, but the observations do not
identify which $a_T$ of the $|T|$ hidden positions carry those differences.
There are $\binom{|T|}{a_T}$ possible placements.  The choices combine across
twin classes, giving the \emph{ambiguity multiplicity}
\begin{equation}
 N(M,m'):=\prod_{T\in\mathcal T}\binom{|T|}{a_T}.
 \label{eq:ambiguity-count}
\end{equation}
This count uses only the known account messages and the chosen target; it does
not require the secret key.

\begin{theorem}[Targeting bound]
\label{thm:targeting-bound}
Under the symmetry assumptions of Appendix~\ref{app:orbit-proof}, every
attack $A$ against the distortion watermarking satisfies, for every target $m'$,
\begin{equation}
 \Pr\!\left[D_K(A(O))=m'\mid O\right]
 \leq\left(\prod_{T\in\mathcal T}\binom{|T|}{a_T}\right)^{-1}.
 \label{eq:targeting-bound}
\end{equation}
\end{theorem}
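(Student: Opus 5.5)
We argue by symmetry: the attacker's information $O$ cannot distinguish $m'$ from $N(M,m')$ equally plausible alternatives, and at most one of these can be the decoded message. Concretely, we build a group $\Gamma$ of permutations of $[L]$ that preserves every twin class, i.e.\ $\Gamma=\prod_{T\in\mathcal T}\mathrm{Sym}(T)$. Each $\gamma\in\Gamma$ acts on messages by permuting coordinates, $(\gamma m)_j=m_{\gamma^{-1}(j)}$.

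Since $\gamma$ only permutes positions sharing the same account signature $x_j$, every known message is fixed up to a common relabelling. More precisely, the relative bits $x_{j,i}$ are invariant, and we can also assume the reference bits $m^{(1)}_j$ are invariant; if not, we refine $\mathcal T$ by including $m^{(1)}$ in the signature. The symmetry assumptions of Appendix~\ref{app:orbit-proof} are presumably of the following form: the key distribution is invariant under replacing $J_K$ by $\gamma\circ J_K$, with the colors $C_K$ and the permutation $\pi_K$ left unchanged.

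\textbf{Step 1 (key-level symmetry).} We show that the map $K\mapsto \gamma K$, defined by $J_{\gamma K}=\gamma\circ J_K$ with everything else unchanged, preserves the prior on $K$. We then show that it leaves every observable the attacker has unchanged. For each account, $G^{m^{(i)}}_{\gamma K}(c)$ uses the bit $m^{(i)}_{\gamma(J_K(c))}$. Since $\gamma(J_K(c))$ lies in the same twin class as $J_K(c)$ and the accounts' bits are constant there, this bit equals $m^{(i)}_{J_K(c)}$. Hence all $p_{K,m^{(i)}}$ coincide with $p_{\gamma K,m^{(i)}}$. The base model, the source text $y$ (as a function of the source account) and the public rule $\Phi$ are untouched. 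So the posterior of $K$ given $O$ is $\Gamma$-invariant, and the attack output $A(O)$, possibly randomized, has the same law under $K$ and $\gamma K$.

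\textbf{Step 2 (equivariance of the decoder).} From (P2) and \eqref{eq:binary-label-set} we get $G^{m}_{\gamma K}(c)=G^{\gamma^{-1}m}_{K}(c)$, since $m_{\gamma(j)}=(\gamma^{-1}m)_j$. Therefore $p_{\gamma K,m}=p_{K,\gamma^{-1}m}$, and the argmax gives $D_{\gamma K}(z)=\gamma D_K(z)$ for every text $z$. A tie-breaking rule compatible with $\Gamma$ is needed here, which we would list among the symmetry assumptions. Combining this with Step 1, for each $\gamma$,
\[
\Pr[D_K(A(O))=m'\mid O]=\Pr[D_{\gamma K}(A(O))=m'\mid O]=\Pr[D_K(A(O))=\gamma^{-1}m'\mid O].
\]

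\textbf{Step 3 (counting the orbit).} The orbit $\Gamma m'$ consists of the messages obtained by rearranging, within each class $T$, the $a_T$ positions where $m'$ differs from $m^{(1)}$. Since $m^{(1)}$ is constant on $T$ after the refinement above, this orbit has exactly $\prod_T\binom{|T|}{a_T}=N(M,m')$ elements. All of them have equal conditional success probability by Step 2. The events $\{D_K(A(O))=\tilde m\}$ for distinct $\tilde m$ are disjoint, so the probabilities sum to at most $1$, which yields \eqref{eq:targeting-bound}.

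\textbf{Main obstacle.} The main difficulty is Step 1. We must verify that nothing else in $O$ breaks the symmetry: the source text $y$, generated under the unknown source message $m$, must not leak within-class information about $J_K$. This needs either the source message to be $\Gamma$-invariant or averaging over it. We would also need to check that schemes with structure beyond \eqref{eq:binary-label-set}, such as RSBH segments or XMark blocks, admit a version of $\Gamma$ that preserves the key distribution. I expect this is exactly what the appendix's symmetry assumptions postulate, so I would state them as the invariance of the joint law of $(K,O)$ under $\Gamma$ together with decoder equivariance, and then carry out the argument above.
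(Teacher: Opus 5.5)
There is a genuine gap, and it is in the group you choose. Your overall structure is right and matches the paper: an invariant posterior under relabelings that fix the controlled messages, decoder equivariance with a compatible tie rule, and equal probabilities on the orbit summing to at most one. The problem is that the twin classes $\mathcal T$ in the statement are defined by the \emph{relative} signatures $x_{j,i}=(m^{(i)}_j+m^{(1)}_j)\bmod 2$. So within a class $T$ the reference bits $m^{(1)}_j$ need not be constant.

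A pure position permutation $\gamma\in\prod_T\mathrm{Sym}(T)$ that moves a position with $m^{(1)}_j=0$ onto one with $m^{(1)}_j=1$ complements every account's green list at the affected contexts, so it changes the observations. Your refinement by the absolute bits avoids this, but it changes the quantity being bounded. It splits each $T$ into $T_0,T_1$, and your Step 3 then counts $\binom{|T_0|}{a_{T_0}}\binom{|T_1|}{a_{T_1}}$. By Vandermonde this is at most, and in general strictly less than, $\binom{|T|}{a_T}$, so the claim that it equals $N(M,m')$ is false. Concretely, take $n=1$, $L=2$, $m^{(1)}=(0,1)$ and $m'=(1,1)$. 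Then $\mathcal T=\{\{1,2\}\}$ with $a_T=1$, so the statement asserts the bound $1/2$. Your refined group is trivial and yields only the vacuous bound $1$.

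The missing idea is to pair each within-class permutation $\sigma$ with per-position color flips $f_j=(m^{(1)}_j+m^{(1)}_{\sigma(j)})\bmod 2$. These act on keys by $J_K(c)\mapsto\sigma(J_K(c))$ and $C_K(c,v)\mapsto(C_K(c,v)+f_{J_K(c)})\bmod 2$, and on messages by $[\rho(m)]_{\sigma(j)}=(m_j+f_j)\bmod 2$. This relabeling has four properties:
\begin{itemize}
\item It preserves every green list when key and message are relabeled together.
\item It fixes $m^{(1)}$ by construction, and every other $m^{(i)}$ because $x_{\sigma(j)}=x_j$.
\item It leaves the ideal prior invariant, since a uniform balanced two-coloring is invariant under exchanging the color names. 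This is stronger than the pure position invariance you postulated.
\item In relative coordinates $r(m)=(m+m^{(1)})\bmod 2$ it only permutes bits within classes and flips none.
\end{itemize}
The orbit of $m'$ is therefore exactly the set of rearrangements of the $a_T$ differing positions within each $T$, of size $\prod_T\binom{|T|}{a_T}$. Your description in Step 3 is really this orbit, not the orbit of the pure-permutation group. In the example above the relabeling is $(b_1,b_2)\mapsto(1-b_2,1-b_1)$, which fixes $(0,1)$ and exchanges $(1,1)$ with $(0,0)$. With this larger group your Steps 1--3 go through unchanged, except that decoder equivariance must now also cover the flips. It does for membership-based scoring with a label-symmetric tie rule.
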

The proof, including adaptive observations, attack randomness, and decoder
ties, is given in Appendices~\ref{app:orbit-proof} and
\ref{app:closed-orbit-proof}.

\begin{remark}
    In Appendix~\ref{app:orbit-proof}  we present an elegant group-theoretic characterization of the twin sets and relate the group's orbit to the ambiguity multiplicity.
\end{remark}

\begin{figure*}[t]
\centering
\includegraphics[width=.89\textwidth]{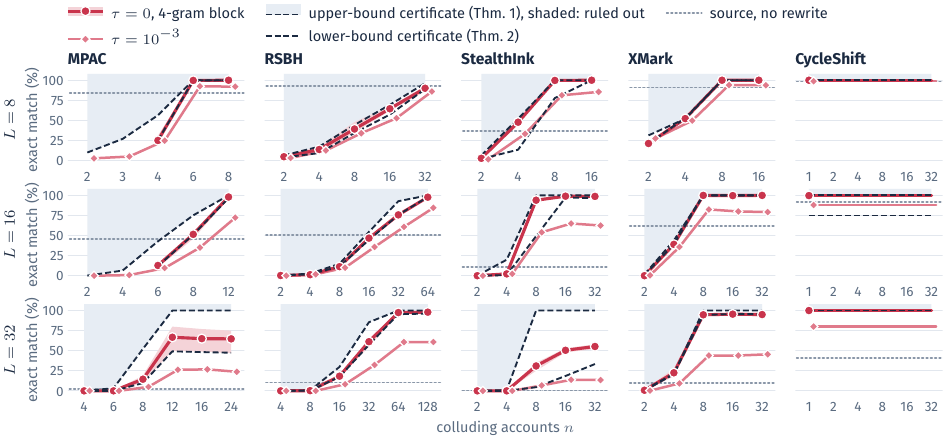}
\caption{Exact-match success against the number of colluding accounts for
five schemes and message lengths $L\in\{8,16,32\}$ at 200 tokens per text,
mean over the four generators Llama-2-7B, Llama-3.1-8B, Qwen3.5-4B, and
Qwen3.5-9B (per-generator panels in
Appendix~\ref{app:collusion}, 64-bit messages and other text lengths in
Appendix~\ref{app:tlen}).
Dashed: the upper-bound certificate, the mean scheme-specific symmetry bound
over the drawn targets (shaded: ruled out),
and the lower-bound certificate, the mean of \eqref{eq:posterior-certificate}
over the $\tau=0$ rewrites.  Circles: the $\tau=0$ rule, band: its range over
the generators.  Diamonds: $\tau=10^{-3}$.  Dotted: the decoder's
exact-match rate on the untouched source.  CycleShift falls to a single
account, so its panels show a random target from one account at $n=1$,
extended flat because any larger coalition can use one account, with the
chunked variant of Appendix~\ref{app:setup} at 32 bits.  Its lower-bound certificate is zero
for Llama-2-7B at 16 bits, whose vocabulary is smaller than the $2^{16}$ shifts,
so one account reveals only part of the keyed set.}
\label{fig:collusion}
\end{figure*}

\paragraph{Attacker-side lower bound.}
\label{sec:certificate}
Theorem~\ref{thm:targeting-bound} limits the success of \emph{any} attack.
Next, we will compute certificates that evaluate how good our attach has inserted the target watermark message $m'$ into the rewrite $z$.
The decoder scores the distinct context--token pairs $\mathcal P(z):=\{(z_{t-\ell:t},z_t):1\leq t\leq|z|\}$.
Each context-token pair induces a hidden position $J_K(z_{t-\ell:t})$.
For every $(c,v)\in\mathcal P(z)$, the attacker computes the twin class $T(c)$ of $J_K(c)$, with $T(c):=[L]$ if no class matched.
Then we collect votes for estimating whether the next token $v$ is in the target message green list or not.
\begin{equation}
 \sigma(c,v):=
 \begin{cases}
  +1, & \text{recovery shows } v\in G^{(m')}_K(c),\\
  -1, & \text{otherwise}.
 \end{cases}
\end{equation}
Recall that the exact position of $J_K(c)$ inside $T(c)$ stays hidden.
We therefore define $\Pi_O$ as the distribution that draws an assignment $J$ uniform on $T(c)$, independently across distinct contexts.
We next define a margin $M_j(J)$ that counts how often we have sampled from the green list as follows:
\begin{equation}
  M_j(J):=\textstyle\sum_{(c,v)\in\mathcal P(z):\,J(c)=j}\sigma(c,v).\\
\end{equation}
Finally, $E$ contains all realization of J for which we have a positive margin for every message position:
\begin{equation}
  E:=\bigl\{J:\ M_j(J)>0\ \text{for every }j\in[L]\bigr\}.
\end{equation}
\begin{theorem}[Certificate validity]
\label{thm:certificate-validity}
Under the posterior $\Pi_O$ and any decoder $D_K$
\begin{equation}
 \begin{aligned}
  \operatorname{Cert}(z,m'\mid O)
  &:=\Pr_{J\sim\Pi_O}[E]\\
  &\leq \Pr[D_K(z)=m'\mid O],
 \end{aligned}
 \label{eq:posterior-certificate}
\end{equation}
with equality if $G^{(m')}_K$ was recovered exactly.
\end{theorem}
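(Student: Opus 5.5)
The plan is to show that, conditionally on the attacker's observations $O$, the event $\{D_K(z)=m'\}$ contains the event $E$ evaluated at the true hidden assignment $J_K$. The certificate inequality then follows by averaging over the posterior $\Pi_O$.

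\textbf{Step 1: rewrite the decoder as margins.} For binary distortion watermarks with the green lists \eqref{eq:binary-label-set}, the decoder (P2) decomposes over the distinct pairs $\mathcal P(z)$. Each pair $(c,v)$ contributes only to position $j=J_K(c)$, in the following way:
\begin{itemize}
\item it adds a fixed positive score to messages $\tilde m$ with $C_K(c,v)=\tilde m_j$;
\item it adds a fixed non-positive score to all other messages.
\end{itemize}
Hence $\log p_{K,\tilde m}(z)$ is separable across positions, and $\argmax_{\tilde m}$ is obtained bitwise: $\tilde m_j$ equals the majority colour among the pairs assigned to $j$. For MPAC-type rules $\Phi$ the normalizer $Z$ is shared by both bit values at a fixed context, so the likelihood ratio between $\tilde m_j=0$ and $\tilde m_j=1$ is monotone in the vote count. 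I would state this as a lemma for the schemes of Appendix~\ref{app:transformations}. With $\sigma$ defined relative to $G^{(m')}_K$, it says that $D_K(z)=m'$ holds whenever $M_j(J_K)>0$ for all $j$. Ties ($M_j=0$) are excluded by the strict inequality, so decoder tie-breaking does not matter.

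\textbf{Step 2: condition on $O$.} The attacker's recovery gives $\sigma(c,v)$ correctly whenever $G^{(m')}_K(c)$ is recovered correctly. By \eqref{eq:target-set}, this holds when $r$ is constant on $T(c)$, because then the sign depends only on the class. Where $r$ is not constant, or where no class matched so that $T(c)=[L]$, I would define $\sigma$ pessimistically. A pair is then counted as $+1$ only if it is green for the target under every $j\in T(c)$; otherwise it is counted as $-1$. With this convention the true margin dominates the computed one, $M_j^{\mathrm{true}}(J_K)\ge M_j(J_K)$. The inequality therefore survives imperfect recovery, and it is an equality when recovery is exact.

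\textbf{Step 3: average over the posterior.} Steps 1 and 2 give $\mathbf 1\{E(J_K)\}\le \mathbf 1\{D_K(z)=m'\}$ pointwise in the key. Taking expectation over $K$ given $O$ yields
$$\Pr[E(J_K)\mid O]\le\Pr[D_K(z)=m'\mid O].$$
It remains to identify the law of $J_K$ given $O$ with $\Pi_O$. Here I would use the symmetry assumptions of Appendix~\ref{app:orbit-proof}: $J_K(c)$ is uniform on $[L]$ and independent across distinct contexts, and it is independent of the colouring. The observations reveal only the signature, that is, $J_K(c)\in T(c)$. Bayes' rule then gives independent uniform draws on each $T(c)$, which is exactly $\Pi_O$. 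The equality clause follows because under exact recovery Step 2 is tight and $E(J_K)$ coincides with the decoding event.

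\textbf{Main obstacle.} I expect the hardest part to be Step 3's posterior identification when observations are adaptive. The rewrite $z$ and the queried contexts depend on earlier outputs, and contexts inside $z$ may coincide with queried ones. I would handle this as in the proof of Theorem~\ref{thm:targeting-bound}: the permutations of positions within twin classes act on the key while preserving the law of $O$. This makes the conditional law of $J_K$ invariant under those permutations, and hence uniform on the product of classes.
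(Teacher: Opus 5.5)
Your Steps~2 and~3 follow the paper's proof. Conservative votes give $\sigma\le\sigma_K$, hence $M_j(J_K)\le M_j^K$ and $\mathbf 1\{E(J_K)\}\le\mathbf 1\{D_K(z)=m'\}$ pointwise in the key. The independent uniform ideal key, with observations revealing only $T(c)$, then gives $J_K\mid O\sim\Pi_O$. The step that fails is Step~1, the claim that the likelihood decoder of (P2) is the distinct-pair majority vote that $E$ encodes. It fails for two reasons.

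\begin{enumerate}
\item $\log p_{K,\tilde m}(z)=\sum_t\log p_{K,\tilde m}(z_t\mid c_t)$ runs over all token positions. A pair occurring $k$ times therefore enters the likelihood $k$ times but $M_j$ only once.
\item The normalizer is not shared. $Z_\delta(p,V_{K,c,0})=1+(e^\delta-1)\,p(V_{K,c,0}\mid c)$ differs from $Z_\delta(p,V_{K,c,1})$ unless the two colour classes carry equal base mass. A pair of colour $b$ therefore shifts the log-likelihood toward bit $b$ by $\delta+\log\bigl(Z_\delta(p,V_{K,c,1-b})/Z_\delta(p,V_{K,c,b})\bigr)$, a context-dependent weight anywhere in $[0,2\delta]$.
\end{enumerate}

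For a concrete failure, take $\delta=2$ and $p(V_{K,c,0}\mid c)=0.9$. The weights are then about $0.58$ for colour~$0$ and $3.42$ for colour~$1$. Suppose position $j$ receives only two distinct colour-$0$ pairs and one colour-$1$ pair at such a context, and $m'_j=0$. Then $M_j=+1$, yet the (P2) decoder outputs bit~$1$. So the lemma you intend to state is false, and $E(J_K)$ does not imply $D_K(z)=m'$ for the likelihood decoder.

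The paper never goes through (P2). Its proof takes $D_K$ to be the strict-majority vote over the distinct pairs $\mathcal P(z)$, i.e.\ the decoder (V1) of Appendix~\ref{app:theory}. Its output at position $j$ is $m'_j$ exactly when $M_j^K>0$, with ties and empty positions counting as failures. The theorem's ``any decoder'' has to be read as this decoder, and replacing your Step~1 by that assumption makes the rest of your argument go through.

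The clause about ties is needed for the equality statement, so your remark that tie-breaking is irrelevant holds only for the inequality. A decoder that resolves a tie toward $m'_j$, or at random, succeeds with positive probability on $\{M_j^K=0\}$. That makes the inequality strict even under exact recovery.

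Your fallback for adaptivity would also not suffice on its own. The stabilizer $\mathcal H_M$ of Theorem~\ref{thm:targeting-bound} applies one permutation to all contexts simultaneously, which yields only exchangeability of $(J_K(c))_c$. To get the product-uniform $\Pi_O$ you need one of two things:
\begin{itemize}
\item relabelings $(\sigma_c,f^{\sigma_c})\in\mathcal H_M$ chosen separately at each context, which preserve every account's green list and the ideal prior; or
\item the per-context factorization of the transcript likelihood that your Step~3 and the paper use.
\end{itemize}
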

The proof is given in Appendix~\ref{app:certificate-proof}.  We estimate the
certificate by Monte Carlo over draws of $J$ (Algorithm~\ref{alg:certificate}
in Appendix~\ref{app:pseudocode}), and it is zero when $N(M,m')>1$.  Under the $\tau=0$ rule every $\sigma$ is $+1$, so $M_j(J)$
counts the pairs at $j$ and $E$ only asks that every position be covered.
The fallback and soft rules also need positive margins.
Theorem~\ref{thm:vote-certificate} (Appendix~\ref{app:theory}) adds an
analytic bound in terms of the true vote counts.

\section{Experiments}
\label{sec:experiments}

\begin{table*}[t]
\centering\footnotesize\setlength{\tabcolsep}{3.5pt}
\resizebox{\ifdim\width>\textwidth\textwidth\else\width\fi}{!}{
\begin{tabular}{@{}llrrrrrr@{\hspace{3\tabcolsep}}rrrrrr@{}}
\toprule
 & & \multicolumn{6}{c}{decoding accuracy (\%) $\uparrow$} & \multicolumn{6}{c}{fidelity to the source $\uparrow$} \\
\cmidrule(lr){3-8}\cmidrule(lr){9-14}
 & & clean & no attack & \multicolumn{2}{c}{own acc.\ $\to\bar m$} & \multicolumn{2}{c}{$n$ accounts} & \multicolumn{2}{c}{BERTScore} & \multicolumn{2}{c}{ROUGE-L} & \multicolumn{2}{c}{LLM judge (1--5)} \\
\cmidrule(lr){5-6}\cmidrule(lr){7-8}\cmidrule(lr){9-10}\cmidrule(lr){11-12}\cmidrule(lr){13-14}
Generator & $L$ & BA & BA & BA & EM & BA & EM & para. & attack & para. & attack & para. & attack \\
\midrule
\multirow{2}{*}{Llama-2-7B} & 8 & 96.5 & 50.2 & 100.0 & 100 & 100.0 & 100 & 0.89 & 0.84 & 0.48 & 0.32 & 3.4 & 2.7 \\
 & 16 & 93.5 & 49.2 & 100.0 & 100 & 99.9 & 99 & 0.90 & 0.84 & 0.48 & 0.32 & 3.4 & 2.9 \\
\multirow{2}{*}{Llama-3.1-8B} & 8 & 98.0 & 49.2 & 100.0 & 100 & 100.0 & 100 & 0.90 & 0.86 & 0.39 & 0.28 & 3.6 & 3.1 \\
 & 16 & 94.8 & 49.4 & 100.0 & 100 & 99.9 & 99 & 0.90 & 0.87 & 0.40 & 0.29 & 3.5 & 3.2 \\
\multirow{2}{*}{Qwen3.5-4B} & 8 & 99.0 & 48.0 & 100.0 & 100 & 100.0 & 100 & 0.90 & 0.86 & 0.43 & 0.30 & 3.2 & 2.5 \\
 & 16 & 93.6 & 50.2 & 99.7 & 96 & 99.6 & 96 & 0.90 & 0.86 & 0.42 & 0.29 & 3.1 & 2.3 \\
\multirow{2}{*}{Qwen3.5-9B} & 8 & 98.4 & 52.5 & 100.0 & 100 & 100.0 & 100 & 0.91 & 0.86 & 0.46 & 0.29 & 3.5 & 2.7 \\
 & 16 & 95.8 & 50.0 & 99.8 & 97 & 99.9 & 98 & 0.91 & 0.86 & 0.45 & 0.29 & 3.3 & 2.5 \\
\bottomrule
\end{tabular}
}
\caption{
Steering binary MPAC on four generators.  BA: bit accuracy of the decoded
message to the attacker's target, EM: exact-match rate, both in percent.
Every cell averages the 100 watermarked texts of one generator and message
length, each rewritten once per condition (800 sources and 2{,}400 rewrites
in all).  $\uparrow$: higher is better.
\underline{Clean}: BA of the watermarked source to its own message.
\underline{No attack}: the same paraphraser, with the
same 4-gram block but no watermark knowledge, scored against a random target,
i.e., chance.  
\underline{Own account}: the attacker uses only its own account and targets the complement of its own message $\bar m$.
\underline{$n$ accounts}: $n=8$
($L=8$) or $n=12$ ($L=16$) colluding accounts and a uniformly random target.
Fidelity to the source, for the paraphrase without the attack (para.) and
for the $n$-account rewrite (attack): BERTScore F1 and ROUGE-L F1 against the watermarked
source, and the mean score of two LLM judges on a 1 to 5 scale
(Appendix~\ref{app:setup}).  
}
\label{tab:models}
\end{table*}

\paragraph{Setup.}
We attack five published multi-bit schemes: \textbf{MPAC}
\citep{yoo-etal-2024-advancing}, \textbf{RSBH} \citep{qu-etal-2025-provably},
\textbf{StealthInk} \citep{jiang-etal-2025-stealthink}, \textbf{XMark}
\citep{xu-etal-2026-xmark}, and \textbf{CycleShift} \citep{fernandez2023three}.  The
watermarked texts are completions of prompts from the RealNewsLike subset of
C4 \citep{raffel2020exploring}, written by four generators, Llama-2-7B
\citep{touvron2023llama}, Llama-3.1-8B \citep{grattafiori2024llama},
Qwen3.5-4B, and Qwen3.5-9B \citep{qwen2026qwen35}.
The attacker uses the same model as chat model for rewriting.
We measure attack efficacy by the exact-match rate, the fraction of rewrites whose decoded message equals the target, and by bit accuracy.
We compare both metrics with the symmetry upper bound of Theorem~\ref{thm:targeting-bound} in the
binary position--color case, with its scheme-specific counterpart otherwise, and
with the attacker's certificate \eqref{eq:posterior-certificate}.
Appendix~\ref{app:setup} gives the corpora, decoding settings, attack
conditions, the formal definitions of every metric, and the
implementation details.

\paragraph{Steering MPAC across generators.}
Table~\ref{tab:models} steers MPAC on all four generators, from the
attacker's own account and from enough colluding accounts to separate every
message position.  Both reach the target in almost every text, for 8- and
16-bit messages alike, while a paraphrase without watermark knowledge stays
at chance.  The rewrites keep the source's topic and part of its wording but
drift from its content.  Targets
that the theory declares unreachable are never decoded exactly
(Figure~\ref{fig:collusion}).  Appendix~\ref{app:schemes} reports the same
table for the other four schemes (Table~\ref{tab:schemes}).

\newcommand{\certcounts}{9{,}200 to 12{,}900 rewrites per scheme and rule}

\begin{table}[!b]
\centering\footnotesize\setlength{\tabcolsep}{4pt}
\begin{tabular}{@{}lrrrrr@{}}
\toprule
 & \multicolumn{4}{c}{certificate} & \\
\cmidrule(lr){2-5}
 & $=1$ & \multicolumn{2}{c}{$\in(0,1)$} & $=0$ & \\
\cmidrule(lr){2-2}\cmidrule(lr){3-4}\cmidrule(lr){5-5}
Scheme & succ. $\uparrow$ & cert. & succ. & succ. & flagged $\downarrow$ \\
\midrule
\multicolumn{6}{@{}l}{\textit{rule } $\tau=0$} \\
\quad MPAC & 100 & 70 & 77 & 4 & 0/104 \\
\quad RSBH & 100 & 8 & 14 & 1 & 6/92 \\
\quad StealthInk & 100 & 60 & 93 & 5 & 0/108 \\
\quad XMark & 100 & 40 & 39 & 0 & 13/108 \\
\addlinespace
\multicolumn{6}{@{}l}{\textit{rule } $\tau=10^{-3}$} \\
\quad MPAC & 100 & 63 & 68 & 2 & 0/129 \\
\quad RSBH & 100 & 12 & 19 & 3 & 3/92 \\
\quad StealthInk & 100 & 43 & 69 & 2 & 2/108 \\
\quad XMark & 100 & 35 & 33 & 0 & 13/108 \\
\bottomrule
\end{tabular}

\caption{Certificate against realized success over all rewrites at 200
tokens under the $\tau=0$ and $\tau=10^{-3}$ rules (\certcounts), grouped
by certificate value: the success rate where the certificate is one, the
mean certificate and the success rate where it lies strictly between zero
and one, and the success rate where it is zero, in percent.  Flagged:
conditions (corpus and account setting) whose mean certificate exceeds the
realized success by more than two standard errors.  $\uparrow$
($\downarrow$): higher (lower) is better.}
\label{tab:cert}
\end{table}

\paragraph{Collusion across schemes.}
Figure~\ref{fig:collusion} sweeps the number of colluding accounts for every
scheme at 8, 16, and 32 bits.  Success tracks the symmetry upper bound and
meets it once the accounts make the target structurally reachable, so each
scheme's residual ambiguity decides how many accounts it withstands:
CycleShift falls to a
single account, MPAC and StealthInk once the accounts separate every message
position or show every symbol value, XMark once every symbol value has
appeared at every block, and RSBH only to coalitions large enough to cover
its codeword symbols.  Message length changes none of this up to 16 bits.
At 32 bits the bound still reaches one, but 200 tokens spread too few votes
over each position for a reliable majority, so success falls short of the
bound until the text grows (Appendix~\ref{app:tlen}), and 64-bit messages
exceed what 200 tokens carry for every scheme.  Across schemes, a paraphrase
without watermark knowledge stays at chance.

\paragraph{Empirical check of the certificate.}
Table~\ref{tab:cert} compares the certificate of every rewrite with whether
the decoder read the target, under the forced rule and under the rule that
leaves low-mass contexts to the paraphraser.  A certificate of one is never
wrong, a certificate of zero almost never sees the target decoded, and in
between the certificate stays below the realized success for MPAC, RSBH,
and StealthInk and matches it for XMark, so the attacker can treat it as
the lower bound it claims to be and release only the rewrites it vouches
for.  The few conditions in which the mean certificate exceeds the realized
success by more than its sampling error concentrate on XMark and RSBH,
whose public positions leave the posterior only the unrevealed colors to
draw.

\section{Conclusion}
Targeted displacement is, to our knowledge, the first attack that steers a multi-bit LLM watermark to a chosen message without the key.
On five published schemes its success meets the predicted upper bound once enough accounts collude.
We argue that our attack can be successfully executed for insiders on closed source LLMs, with potential security implications, including identity theft.

\clearpage
\captionsetup{skip=10pt}
\setlength{\textfloatsep}{20pt plus 2pt minus 4pt}
\setlength{\dbltextfloatsep}{20pt plus 2pt minus 4pt}
\setlength{\floatsep}{12pt plus 2pt minus 2pt}
\setlength{\dblfloatsep}{12pt plus 2pt minus 2pt}
\setlength{\abovedisplayskip}{11pt plus 3pt minus 6pt}
\setlength{\belowdisplayskip}{11pt plus 3pt minus 6pt}
\setlength{\abovedisplayshortskip}{0pt plus 3pt}
\setlength{\belowdisplayshortskip}{6.5pt plus 3.5pt minus 3pt}
\renewcommand{\topfraction}{.7}
\renewcommand{\dbltopfraction}{.7}
\renewcommand{\textfraction}{.2}
\setcounter{topnumber}{2}
\setcounter{bottomnumber}{1}
\setcounter{totalnumber}{3}
\setcounter{dbltopnumber}{2}
\section*{Limitations}
The limitations of our paper are the assumptions of its threat model.  Our
attacker knows the watermarking scheme and its public hyperparameters but
not the secret key.  It has the unwatermarked base model and its tokenizer.
It controls accounts whose embedded identifiers it knows, all watermarked
under the provider's single key, and it reads the next-token distributions
of these accounts at the contexts of its rewrite.  These assumptions are,
however, standard in the watermarking literature.  That the attacker knows
everything except the key is Kerckhoffs' principle, which the watermark
stealing threat model adopts explicitly \citep{jovanovic2024watermark}.
Access to the unwatermarked distribution of the model is the available base
responses setting of the same threat model and of its successors
\citep{jovanovic2024watermark,chen2025demark}, and it is the attacker
against which stealthiness is argued \citep{jiang-etal-2025-stealthink}.
Accounts with known identifiers are the spoofing adversary against which
RSBH is argued secure \citep{qu-etal-2025-provably}, and a single provider
key with the identifier as the message is how every scheme we study embeds
it
\citep{fernandez2023three,yoo-etal-2024-advancing,qu-etal-2025-provably,jiang-etal-2025-stealthink,xu-etal-2026-xmark}.
Next-token log-probabilities are returned by public APIs and assumed by
prior attacks \citep{pang2024no,chen2025demark}.  Our attacker reads the
full distribution, which idealizes the top-$k$ entries that APIs return.
A provider that hides account identifiers, keys the labeling per account,
or withholds log-probabilities removes what the attacker reads, and we do
not measure the query cost of sample access, under which the symmetry bound
still holds.

\section*{Ethics Statement}
This work was conducted in accordance with the ACL Code of Ethics and the
ACM Code of Ethics and Professional Conduct, and we have adhered to the
rules throughout this research.  We recognize that the attack we describe
could be misused to make a watermark decoder attribute a harmful text to a
user who did not write it, and that decoded identifiers may feed moderation
or legal decisions.  Our intention is to show providers and scheme
designers that attribution by a multi-bit watermark can be displaced
without the key, so that a decoded identifier is not treated as conclusive
evidence on its own and so that future schemes are designed against this
threat.  The symmetry bound also tells a provider which targets no attacker can
reach.  All experiments use public prompts, open-weight models, and
accounts that we simulate locally, and no deployed service or real user was
involved.  We also acknowledge that appropriate safeguards and further
evaluation are necessary to mitigate risks in real world applications.

\bibliography{references}

\appendix

\section{Explicit transformations and recovery rules}
\label{app:transformations}

This appendix instantiates the transformation $\Phi$ for every watermark
studied in this paper.  Fix a context $c$ and abbreviate
$p_v=p(v\mid c)$, $q_{i,v}=q_i(v\mid c)$, and
$\ell_i(v)=\log q_{i,v}-\log p_v$.  The equations below describe population
distributions.  Under sample access, $q_i$ is replaced by its empirical
estimate.

\paragraph{General inversion.}
Given an arbitrary function $\Phi$, recall that (P1) states
\begin{equation}
 p_i(\cdot\mid c)
 =\Phi\bigl(p(\cdot\mid c),G_i(c)\bigr).
 \label{eq:set-recovery}
\end{equation}
For some $\Phi$ inverting through it to compute $G_i(c)$ given $p$ and $p_i$ might not be possible.
However, $\Phi$ defines which candidate sets are consistent with the observed distribution.
Under exact distribution access, these candidates are
\begin{equation}
 \begin{aligned}
 \mathcal S_i(c)
 &:=
 \left\{S\subseteq\mathcal V:
 \Phi\bigl(p(\cdot\mid c),S\bigr)=q_i(\cdot\mid c)\right\}\\
 &=
 \underset{S\subseteq\mathcal V}{\arg\min}\;
 d_{\mathrm{TV}}\!\left(
 q_i(\cdot\mid c),
 \Phi\bigl(p(\cdot\mid c),S\bigr)
 \right),
 \end{aligned}
 \label{eq:recover-set}
\end{equation}
where $d_{\mathrm{TV}}(r,s):=\tfrac12\sum_{v\in\mathcal V}|r(v)-s(v)|$ is
total-variation distance.  This inverts (P1): each scheme uses its own public
$\Phi$ (Appendix~\ref{app:transformations}).  If $\mathcal S_i(c)$ is a singleton,
it identifies $G_i(c)$, and otherwise we constrain only memberships on which
all candidates agree, replacing $q_i$ by its empirical estimate under sample
access.  Figure~\ref{fig:recover-set} works through the MPAC case: the
log-ratio of the two distributions takes only two values, and the midpoint
threshold between them returns the set exactly.  By (P2), these memberships
are the only way the key enters decoding, so controlling them suffices to
control the decoder without recovering $K$.

\paragraph{A shared additive transformation.}
MPAC, CycleShift, RSBH, and XMark use the same operation once their
message-conditioned set has been constructed.  For $S\subseteq\mathcal V$ and
bias $\delta>0$, define
\begin{equation}
 \begin{aligned}
  \mathsf A_\delta(p,S)_v
  &:=
  \frac{p_v\exp\!\left(\delta\mathbf 1\{v\in S\}\right)}
       {Z_\delta(p,S)},\\
  Z_\delta(p,S)
  &:=1+\bigl(e^\delta-1\bigr)\sum_{u\in S}p_u.
 \end{aligned}
 \label{eq:app-additive}
\end{equation}
If $q_i=\mathsf A_\delta(p,G_i)$, write
$Z_i:=Z_\delta(p,G_i)$.  Then
\begin{equation}
 \ell_i(v)=
 \begin{cases}
  \delta-\log Z_i,&v\in G_i,\\
  -\log Z_i,&v\notin G_i.
 \end{cases}
 \label{eq:app-additive-levels}
\end{equation}
Thus the exact inversion rule is
\begin{equation}
 \begin{aligned}
 \theta_i&:=\tfrac12\left(\max_v\ell_i(v)+\min_v\ell_i(v)\right),\\
 G_i(c)&=\{v\in\mathcal V:\ell_i(v)>\theta_i\}.
 \end{aligned}
 \label{eq:app-additive-inverse}
\end{equation}
This rule uses only the two observable distributions.  The normalizer is common
to every token and therefore need not be known.

\paragraph{MPAC.}
At context $c$, the key selects a position $J_K(c)$ and a vocabulary coloring
$C_K(c,\cdot)$.  Writing
$V_{K,c,s}:=\{v:C_K(c,v)=s\}$, the message-conditioned set is
\begin{equation}
 G_K^m(c)=V_{K,c,m_{J_K(c)}}.
 \label{eq:app-mpac-set}
\end{equation}
Generation uses
\begin{equation}
 \Phi_{\mathrm{MPAC}}\bigl(p,G_K^m(c)\bigr)
 =\mathsf A_\delta\bigl(p,G_K^m(c)\bigr).
 \label{eq:app-mpac-phi}
\end{equation}
Consequently, a controlled account's set $G_i(c)$ is recovered directly by
\eqref{eq:app-additive-inverse}.  The key, selected position, and absolute
color name are not recovered by this operation.

\paragraph{CycleShift.}
Let $R=\max(2^L,|\mathcal V|)$, let $\langle m\rangle\in[2^L]$ be the integer
represented by the message, and let $H_K(c)\subseteq[R]$ be the keyed base set.
The set associated with message $m$ is the cyclic shift
\begin{equation}
 \begin{aligned}
 G_K^m(c)=\{v\in\mathcal V:\;&
 (v+\langle m\rangle)\bmod R\\[-2pt]
 &{}\in H_K(c)\}.
 \end{aligned}
 \label{eq:app-cycleshift-set}
\end{equation}
Its transformation is again additive,
\begin{equation}
 \Phi_{\mathrm{CS}}\bigl(p,G_K^m(c)\bigr)
 =\mathsf A_\delta\bigl(p,G_K^m(c)\bigr),
 \label{eq:app-cycleshift-phi}
\end{equation}
so \eqref{eq:app-additive-inverse} recovers $G_i(c)$.  When $R=|\mathcal V|$,
the controlled message also recovers the base set and every target set:
\begin{equation}
 \begin{aligned}
 H_K(c)
 &=\{(v+\langle m^{(i)}\rangle)\bmod R:v\in G_i(c)\},\\
 G_K^{m'}(c)
 &=\{v:(v+\langle m'\rangle)\bmod R\in H_K(c)\}.
 \end{aligned}
 \label{eq:app-cycleshift-inverse}
\end{equation}
If $R>|\mathcal V|$, the account reveals only the coordinates of the extended
set that are represented in the model vocabulary.

\paragraph{RSBH.}
Let $\operatorname{Enc}(m)=(a_0(m),\ldots,a_{r-1}(m))$ be the public
Reed--Solomon codeword and let $\sigma(c)\in[r]$ be the public segment selected
by the context.  For each context and symbol value $a$, the key determines a
set $H_{K,c,a}\subseteq\mathcal V$.  Hence
\begin{equation}
 \begin{aligned}
 G_K^m(c)&=H_{K,c,a_{\sigma(c)}(m)},\\
 \Phi_{\mathrm{RSBH}}\bigl(p,G_K^m(c)\bigr)
 &=\mathsf A_\delta\bigl(p,G_K^m(c)\bigr).
 \end{aligned}
 \label{eq:app-rsbh-phi}
\end{equation}
Equation~\eqref{eq:app-additive-inverse} therefore recovers, from account $i$,
the set indexed by its codeword symbol $a_{\sigma(c)}(m^{(i)})$.  It does not
reveal sets belonging to other symbol values.  In the released construction,
the set for symbol value $0$ is generated without the secret multiplier and is
therefore public as well.

\paragraph{XMark.}
A message consists of $B=L/2$ symbols $m_1,\ldots,m_B\in\{0,1,2,3\}$ of two
bits each.  The block of a token is public: if $c=(u,w)$ are the two
preceding tokens, the block is $\sigma(c)=(u+w)\bmod B$.  The key seeds two
permutations of the vocabulary at $c$, each cut into four quarters, which
give every token two quarter labels $Q^1_K(c,v),Q^2_K(c,v)\in\{0,1,2,3\}$.
Symbol $s$ boosts the tokens that lie outside quarter $s$ under both
permutations, the \emph{dark set}
\begin{equation}
 D_{K,c,s}:=\{v:\ Q^1_K(c,v)\neq s\ \text{and}\ Q^2_K(c,v)\neq s\},
 \label{eq:app-xmark-dark}
\end{equation}
which holds $9/16$ of the vocabulary.  Hence
\begin{equation}
 \begin{aligned}
 G_K^m(c)&=D_{K,c,m_{\sigma(c)}},\\
 \Phi_{\mathrm{XM}}\bigl(p,G_K^m(c)\bigr)
 &=\mathsf A_\delta\bigl(p,G_K^m(c)\bigr).
 \end{aligned}
 \label{eq:app-xmark-phi}
\end{equation}
The decoder gives every scored token one vote for its quarter under the
first permutation and, if that quarter differs, one for its quarter under
the second, collects the votes per block, and outputs the symbol with the
fewest votes.  A token votes for $s$ exactly when it lies outside
$D_{K,c,s}$, so the vote count of $s$ at a block is the number of its
tokens outside the dark set of $s$, and the decoder reads the key only
through membership in the sets $G_K^m(c)$, as (P2) requires.  The
released code seeds the two permutations with fixed public constants,
which we replace by a secret key and a second key derived from it.
Equation~\eqref{eq:app-additive-inverse} recovers from account $i$ the
dark set $D_{K,c,a}$ of its symbol $a=m^{(i)}_{\sigma(c)}$ and hence its
complement $U_{K,c,a}:=\mathcal V\setminus D_{K,c,a}$, the tokens that
lie in quarter $a$ under at least one permutation.  It reveals neither the
two quarters separately nor the dark sets of the other symbol values.
Because the block is public, the sets of different accounts at the same
context combine by symbol value.  The target's dark set at a context is
available exactly when some account carries the target's symbol $s'$ at
that block, or when at most one symbol value is unseen there: if $a$ and
$b$ are two known values other than $s'$, a token in
$U_{K,c,a}\cap U_{K,c,b}$ has one quarter equal to $a$ and the other equal
to $b$, so both differ from $s'$ and the token lies in $D_{K,c,s'}$.  The
implementation forces the tokens that lie in at least two known non-target
unions, a subset of $D_{K,c,s'}$ that is certain under every labeling
consistent with the observations.  When the target's symbol is unseen and
two or more symbol values are unseen at a block, these values are
interchangeable in the attacker's view, and the orbit of the target gains a
factor equal to their number.

\paragraph{StealthInk.}
Here the exact transformation uses an ordered vocabulary partition, not only
the selected chunk as an unordered set.  Let
$\pi_K(c)=(\pi_1,\ldots,\pi_{|\mathcal V|})$ be the keyed permutation, split
into contiguous chunks.  If $s=m_{J_K(c)}$ and the index interval of chunk $s$
is $\{a+1,\ldots,b\}$, then
\begin{equation}
 G_K^m(c)=\{\pi_{a+1},\ldots,\pi_b\}.
 \label{eq:app-stealthink-set}
\end{equation}
Put $F_0=0$ and
$F_r=\sum_{j=1}^{r}p_{\pi_j}$, and let
$I=[F_a,F_b]$ and $I^\star=[1-F_b,1-F_a]$.  Define
\begin{equation}
 w_I(x):=1+\mathbf 1\{x\in I^\star\}-\mathbf 1\{x\in I\}.
 \label{eq:app-stealthink-weight}
\end{equation}
The released cumulative reweighting is exactly
\begin{equation}
 \Phi_{\mathrm{SI}}(p,\pi_K(c),s)_{\pi_r}
 =
 \int_{F_{r-1}}^{F_r}w_I(x)\,dx.
 \label{eq:app-stealthink-phi}
\end{equation}
The selected part of the probability interval receives weight $0$, its
reflection about $1/2$ receives weight $2$, and their overlap and the
remainder receive weight $1$.  Since a token in the selected chunk has its
whole cumulative interval inside $I$, whereas every other token lies outside
$I$, the exact membership implications are
\begin{equation}
 \begin{aligned}
 q_{i,v}<p_v&\ \Longrightarrow\ v\in G_i(c),\\
 q_{i,v}>p_v&\ \Longrightarrow\ v\notin G_i(c),\\
 q_{i,v}=p_v&\ \Longrightarrow\ v\text{ is unresolved}.
 \end{aligned}
 \label{eq:app-stealthink-inverse}
\end{equation}
Thus this transformation generally reveals only part of $G_i(c)$ and part of
its complement.  The implementation uses the numerically conservative
log-ratio rules $\log(q_{i,v}/p_v)<-5$ for \emph{inside} and
$\log(q_{i,v}/p_v)>0.2$ for \emph{outside}, leaving the remainder unresolved.
When applying \eqref{eq:recover-set} to this scheme, a candidate consists of a
chunk together with a compatible ordering $\pi$, because the ordering affects
the output probabilities.

\section{Pseudocode of the attack on MPAC}
\label{app:pseudocode}

Algorithm~\ref{alg:mpac-attack} lists the attack of Section~\ref{sec:method}
for binary MPAC, with the green-list recovery rule of
Appendix~\ref{app:transformations}, and Algorithm~\ref{alg:certificate} the
certificate of Section~\ref{sec:certification}.  Both are written for one
source text.  The attacker queries its accounts and the base model at a
context once and caches the recovered lists.  With exact next-token
distributions the log-ratio $\ell_i$ takes exactly two values at a context,
so the midpoint threshold recovers $G_i(c)$ exactly, and under sample access
the empirical estimates of Appendix~\ref{app:transformations} take its place.
The experiments use $S=1{,}000$ draws for the certificate.

\begin{algorithm*}[t]
\caption{Targeted displacement attack on binary MPAC}
\label{alg:mpac-attack}
\small
\begin{algorithmic}[1]
\REQUIRE source $y$, target $m'\in\{0,1\}^L$, messages $m^{(1)},\ldots,m^{(n)}$ of the controlled accounts, next-token access to the accounts $q_i(\cdot\mid c)$ and to the base model $p(\cdot\mid c)$, paraphraser $p_{\mathrm{para}}$, threshold $\tau$, bias $\delta_{\mathrm{atk}}$
\ENSURE rewrite $z$
\STATE \COMMENT{twin classes and relative bits, from the messages alone (Section~\ref{sec:synthesis})}
\FOR{$j=1,\ldots,L$}
 \STATE $x_j\gets\bigl((m^{(i)}_j+m^{(1)}_j)\bmod 2\bigr)_{i=2}^{n}$ \quad and \quad $r_j\gets(m'_j+m^{(1)}_j)\bmod 2$
\ENDFOR
\STATE $\mathcal T\gets$ partition of $[L]$ into classes of equal $x_j$
\FOR{every class $T\in\mathcal T$}
 \IF{$r_j$ is constant on $T$}
  \STATE $r_T\gets$ that value
 \ELSE
  \STATE $r_T\gets r_j$ for a uniformly random $j\in T$ \COMMENT{ambiguous class, $N(M,m')>1$, success bounded by Theorem~\ref{thm:targeting-bound}}
 \ENDIF
\ENDFOR
\STATE \COMMENT{constrained rewriting (Section~\ref{sec:constructive-attack})}
\STATE $z\gets()$
\FOR{$t=1,2,\ldots$ until the paraphraser stops}
 \STATE $p_t(\cdot)\gets p_{\mathrm{para}}(\cdot\mid y,z_{<t})$, then $p_t(v)\gets0$ for every $v$ that would complete a 4-gram already in $z_{<t}$
 \STATE $c_t\gets$ the watermark context of $z_{<t}$ (the previous token for MPAC), \quad $B_t\gets\textsc{TargetList}(c_t)$
 \IF{$B_t\neq\emptyset$ and $Z_t:=\sum_{u\in B_t}p_t(u)\geq\tau$}
  \STATE sample $z_t\sim r_t$ with $r_t(v)=p_t(v)\,\mathbf 1\{v\in B_t\}/Z_t$ \hfill \eqref{eq:hard-attack}
 \ELSE
  \STATE sample $z_t\sim r_t$ with $r_t(v)\propto p_t(v)\exp\!\bigl(\delta_{\mathrm{atk}}\mathbf 1\{v\in B_t\}\bigr)$ \hfill \eqref{eq:soft-attack}
 \ENDIF
\ENDFOR
\RETURN $z$
\STATE
\STATE \textbf{procedure} \textsc{TargetList}$(c)$ \COMMENT{the green list of $m'$ at $c$, \eqref{eq:target-set}}
\FOR{$i=1,\ldots,n$}
 \STATE $\ell_i(v)\gets\log q_i(v\mid c)-\log p(v\mid c)$ for all $v\in\mathcal V$, \quad $\theta_i\gets\tfrac12\bigl(\max_v\ell_i(v)+\min_v\ell_i(v)\bigr)$, \quad $G_i(c)\gets\{v:\ell_i(v)>\theta_i\}$
\ENDFOR
\STATE $x(c)\gets\bigl(\mathbf 1\{G_i(c)\neq G_1(c)\}\bigr)_{i=2}^{n}$
\IF{$x(c)=x_j$ for some $j\in[L]$}
 \STATE $T(c)\gets$ the class with signature $x(c)$
 \RETURN $G_1(c)$ if $r_{T(c)}=0$, else $\mathcal V\setminus G_1(c)$
\ELSE
 \RETURN $\emptyset$ \COMMENT{no class matches, the paraphraser alone decides this step}
\ENDIF
\end{algorithmic}
\end{algorithm*}

\begin{algorithm}[t]
\caption{Attacker-side certificate}
\label{alg:certificate}
\small
\begin{algorithmic}[1]
\REQUIRE rewrite $z$, the classes $\mathcal T$, relative bits $r_T$, and procedure \textsc{TargetList} of Algorithm~\ref{alg:mpac-attack}, number of draws $S$
\ENSURE $\operatorname{Cert}(z,m'\mid O)$
\IF{$N(M,m')>1$}
 \RETURN $0$
\ENDIF
\STATE $\mathcal P(z)\gets\{(c_t,z_t):1\le t\le|z|\}$ \COMMENT{distinct pairs}
\FOR{every distinct context $c$ in $\mathcal P(z)$}
 \STATE $T(c)\gets$ the class matched in \textsc{TargetList}$(c)$, or $[L]$ if none
 \STATE $\sigma(c,v)\gets+1$ if $v\in\textsc{TargetList}(c)$, else $-1$, for every $(c,v)\in\mathcal P(z)$
\ENDFOR
\STATE $\mathrm{ok}\gets0$
\FOR{$s=1,\ldots,S$}
 \STATE draw $J(c)\sim\mathrm{Unif}\,T(c)$ independently for every distinct context $c$
 \STATE $M_j(J)\gets\sum_{(c,v)\in\mathcal P(z):\,J(c)=j}\sigma(c,v)$ for $j=1,\ldots,L$
 \IF{$M_j(J)>0$ for every $j\in[L]$}
  \STATE $\mathrm{ok}\gets\mathrm{ok}+1$
 \ENDIF
\ENDFOR
\RETURN $\mathrm{ok}/S$
\end{algorithmic}
\end{algorithm}

\section{Experimental details}
\label{app:setup}

\paragraph{Corpora.}
Every corpus consists of 100 completions of C4 prompts
\citep{raffel2020exploring} with 200 tokens each, sampled at temperature 0.7
and top-$p$ 0.9 \citep{holtzman2020curious} from the generator under one
secret key and one attacker message, with binary MPAC ($\gamma=1/2$,
$\delta=2$).  Message lengths are $L\in\{8,16\}$ in the tables and up to
$L=64$ in Figure~\ref{fig:collusion}.  The released CycleShift code
enumerates $2^L$ shifts, so at 32 bits we run a chunked variant: four 8-bit
blocks, the block chosen by the previous token, each a released CycleShift
over the same secret set.

\paragraph{Attacker.}
The attacker rewrites each text with a chat model that shares the
generator's tokenizer (Llama-2-7B-chat, Llama-3.1-8B-Instruct, and the Qwen
model itself with its reasoning mode disabled) and has distribution access
to its accounts.  The paraphraser samples at temperature 0.7 and top-$p$
0.9.  We report the constrained rewriting rule of
Section~\ref{sec:constructive-attack} with $\tau=0$, and the paraphraser
additionally blocks repeated 4-grams, which keeps constrained decoding from
collapsing into loops.  Conditions: the own account with the target set to
the complement $\bar m$ or to a message one bit away, and $n$ colluding
accounts with a uniformly random target.  Fact protection
(used in the fidelity experiments of Appendix~\ref{app:fidelity}) leaves a
step unconstrained when the
paraphraser's top token has probability at least $\kappa$ and ends the text,
is whitespace, or continues a numeral, name, or negation copied from the
source.  Such a token may vote against the target, and the certificate counts
it as a rival vote.

\paragraph{Metrics.}
Let $\hat z$ be the rewrite of a source $y$ against target $m'$.  Success,
or exact match, is the fraction of texts with $D_K(\hat z)=m'$, and bit
accuracy is the fraction of the target's bits (symbols, for the 2-bit
schemes) that the decoded message matches, both over the 100 texts of a
corpus.  The bound is the mean of the appropriate symmetry upper bound over
the drawn targets; in the binary position--color case this is
$1/N(M,m')$ from Theorem~\ref{thm:targeting-bound}.  The certificate
is the Monte Carlo estimate of the posterior quantity
\eqref{eq:posterior-certificate} (Table~\ref{tab:cert}).  Fidelity to the
source uses BERTScore F1 \citep{zhang2020bertscore} and ROUGE-L F1
\citep{lin-2004-rouge} between $\hat z$ and $y$.  Two LLM judges
\citep{zheng2023judging} from different families, Llama-3.3-70B-Instruct
\citep{grattafiori2024llama} and Qwen2.5-72B-Instruct \citep{qwen2024qwen25},
both in 4-bit AWQ \citep{lin2024awq} with greedy decoding, rate on a
five-point scale how faithfully $\hat z$ keeps the facts, numbers, names,
negations, and claims of $y$ (5 for the same content, 3 when at least one
fact is changed, invented, or lost, and 1 for different content).  A
DeBERTa-v3-large NLI model \citep{he2023debertav3} gives \emph{support}, the
mean over the sentences of $\hat z$ of the probability that $y$ entails the
sentence, and \emph{coverage}, the mean over the sentences of $y$ of the
probability that $\hat z$ entails the sentence.  \emph{Facts kept} is the
fraction of the numeric expressions of $y$, with sign, currency, and
magnitude word, and of its named entities, found by a BERT NER tagger
trained on CoNLL-2003
\citep{devlin-etal-2019-bert,tjong-kim-sang-de-meulder-2003-introduction},
that $\hat z$ reproduces verbatim (numbers up to spacing and equivalent
percent and dollar notation).  Neither extractor shares code with fact
protection.  The judges agree with each other (Spearman 0.78 over the
rewrites and paraphrases, within one point on 97 percent of the pairs) and
with NLI (0.64).  For the other schemes, Table~\ref{tab:schemes} reports
BERTScore and perplexity under Llama-2-13B instead, since ROUGE-L and the
judges were computed for the MPAC rewrites.

\paragraph{Implementation details.}
\label{app:impl}
Every generation, attack, and scoring job runs on one NVIDIA H100 (94\,GB)
or A40 (48\,GB) of a Slurm cluster with 8 CPU cores and 64\,GB of host
memory (96\,GB for the judges); the LLM judges run on two A40s with tensor
parallelism and the NLI and NER scorers on one A40 (wall-clock costs in
Appendix~\ref{app:runtime}).  The code is Python~3.12 with PyTorch~2.11
\citep{paszke2019pytorch} and Transformers~5.10
\citep{wolf-etal-2020-transformers}, except that the earliest Llama-2-7B
MPAC runs at 8 and 16 bits with $\tau=10^{-3}$ used PyTorch~2.4.1 and
Transformers~4.45.2; the judges are served by vLLM~0.23
\citep{kwon2023efficient}.  The generators are the Hugging Face checkpoints
\texttt{Llama-2-7b-hf}, \texttt{Llama-3.1-8B}, \texttt{Qwen3.5-4B}, and
\texttt{Qwen3.5-9B}, and the paraphrasers \texttt{Llama-2-7b-chat-hf},
\texttt{Llama-3.1-8B-Instruct}, and the two Qwen models themselves, so one
copy of a Qwen model serves as generator, attacker's base model, and
paraphraser.  Llama-2-7B generates in float16 and the other generators in
bfloat16; during the attack every model runs in bfloat16, the Llama
paraphrasers use eager attention, a guard replaces non-finite logits during
rewriting (Llama-2-7B-chat produced NaNs in batched float16 generation),
and the learner compares log-probabilities in float32.

Corpora are sampled in batches of 10 prompts, truncated to 1{,}200 tokens
and left-padded, with exactly $T$ new tokens, top-$k$ disabled, and a
5-gram repetition block \citep{paulus2018deep}; the unwatermarked
completion of a prompt reuses the sampling seed of its watermarked twin.
Rewrites are sampled in batches of 8 with top-$k$ disabled.  The 4-gram
block of Tables~\ref{tab:models} and \ref{tab:schemes} and of
Figure~\ref{fig:collusion} is the \texttt{no\_repeat\_ngram\_size} option
of Transformers, which indexes the prompt as well, so a rewrite repeats no
4-gram of its source; the rewrite-only block of Appendix~\ref{app:fidelity}
is our own logits processor over the generated tokens.  The watermarked API
is simulated in-process: an oracle takes the generator's logits at the
attacker's prefix, applies the watermark of the queried account (the
$\delta$ bias of its set, or StealthInk's reweighting), and returns the full
log-softmax; the base-model forward pass is shared with the attacker's own
call and cached per prefix.  For the additive schemes the learner
classifies every token of the vocabulary by thresholding the log-ratio
$\log p_i(v\mid c)-\log p(v\mid c)$, which equals
$\delta\,\mathbf 1\{v\in G_i(c)\}-\log Z$ up to float32 rounding, at the
midpoint of its range; StealthInk uses the closed-form rule of
Appendix~\ref{app:transformations}.  Learned sets are cached per context
and account, and for MPAC accounts whose learned sets agree on at least 90
percent of the jointly resolved tokens form one twin class.  Each corpus
carries a pool of extra accounts with random messages (16 for MPAC at 8 and
16 bits, up to 128 for the longer messages and for the schemes with more
positions); the $n$ colluding accounts are the attacker's own account and
the first $n-1$ accounts of one fixed random ordering of the pool, so the
coalitions are nested in $n$, and the target of a text is drawn by a
generator seeded with the condition, the global seed, and the text index,
so a text faces the same target under every rewrite rule.  The certificate
\eqref{eq:posterior-certificate} uses 1{,}000 Monte Carlo draws of the
position assignment for MPAC, 300 for RSBH and StealthInk, and 200 for
XMark, and is exact for CycleShift.  Scheme parameters follow the released
code: $\delta=2$ wherever a logit bias is added (StealthInk reweights
without a bias), one-token contexts except for XMark (two tokens) and
StealthInk (three), and set fractions $\gamma=1/2$ for MPAC and RSBH,
$1/4$ for CycleShift, and $9/16$ for XMark.  Our MPAC re-implementation
draws the position and the color partition of a context from a PRNG
seeded by a hash of key and context rather than from
\texttt{torch.randperm}, so that generation and detection agree across
devices; an 8-bit corpus generated with the released MPAC code and judged
by its detector gives the same success rates as our re-implementation.
The other schemes keep the seeding and decoding of their released code
(Appendix~\ref{app:transformations}), with a secret key where the released
code uses a fixed constant.

BERTScore F1 uses \texttt{bert-score}~0.3.13 with RoBERTa-large and no
baseline rescaling, ROUGE-L F1 uses \texttt{rouge-score}~0.1.2 with Porter
stemming, and perplexity is that of Llama-2-13B in float16 given the last
3{,}000 characters of the prompt.  The judges,
\texttt{Llama-3.3-70B-Instruct-AWQ} and \texttt{Qwen2.5-72B-Instruct-AWQ},
generate at most 400 tokens in a 4{,}096-token context and answer with a
JSON object holding a score and up to three changed facts; the score is
read from the reply with a regular expression, and a reply without a score
is dropped.  The NLI model is
\texttt{DeBERTa-v3-large-\allowbreak{}mnli-fever-\allowbreak{}anli-ling-wanli} in float16 on pairs
truncated to 512 tokens, with sentences split at terminal punctuation and
kept when they have at least four words, and the NER tagger is
\texttt{bert-large-NER} with simple entity aggregation.  All randomness is
seeded (seed 0 for the secret key, the messages, sampling, the attack, and
the choice of accounts), and every reported cell is a single run over the
100 texts of its corpus.

\section{Paraphraser prompt}
\label{app:prompt}

Every rewrite in the paper, the plain paraphrase included, is generated from
the same two chat messages, and the rules of Section~\ref{sec:constructive-attack}
change only the decoding.  The source $y$ is the watermarked completion alone,
without the C4 prompt that produced it.

\begin{center}\small\setlength{\fboxsep}{5pt}\microtypesetup{protrusion=false}
\fbox{\parbox{\dimexpr\linewidth-2\fboxsep-2\fboxrule\relax}{\raggedright
\textbf{System}\par
\texttt{You are a careful editor.}\par\medskip
\textbf{User}\par
\texttt{Rewrite the following passage in your own words. Preserve every fact
and the order of ideas, keep roughly the same length, and output only the
rewritten passage with no preamble.}\par\medskip
\texttt{Passage:}\par
\textit{$\langle$watermarked text $y\rangle$}}}
\end{center}

Each paraphraser receives the two messages in its own chat template and writes
the assistant turn.  The template of Llama-3.1-8B-Instruct adds its default
lines \texttt{Cutting Knowledge Date: December 2023} and \texttt{Today Date:
26 Jul 2024} to the system message, and for Qwen3.5 the reasoning mode is
disabled, so the prompt ends with an empty reasoning block.  The answer is
sampled at temperature 0.7 and top-$p$ 0.9 with 150 to 248 new tokens (the
sources have 200), and the decoder scores it as generated, without
post-processing.

\section{Example rewrites}
\label{app:examples}

Table~\ref{tab:examples} shows two watermarked sources with their plain
paraphrase and their attack rewrite under the $\tau=0$ rule of
Table~\ref{tab:models}, both decoded to the target.  Selection: among the
792 $n$-account rewrites that reached their target, we keep the 75 that both
LLM judges score at least 4 (facts kept), leave out the Llama-2-7B-chat
paraphraser, whose $\tau=0$ rewrites carry spelling and case corruption that
the judges tolerate but perplexity (Table~\ref{tab:schemes}) does not, and
show the two closest to the source by ROUGE-L among the rest, at most one
per generator.  Over all 792 the ROUGE-L median is 0.29, and the examples score
0.40 and 0.51.  The texts are shown as the judges and the metrics saw them,
after the paraphraser's chat preamble is removed.

\begin{table*}[p]
\centering\footnotesize\setlength{\tabcolsep}{4pt}\microtypesetup{protrusion=false}
\begin{tabular}{@{}>{\raggedright\arraybackslash}p{0.315\textwidth}>{\raggedright\arraybackslash}p{0.315\textwidth}>{\raggedright\arraybackslash}p{0.315\textwidth}@{}}
\toprule
Watermarked source & Paraphrase without the attack & Attack rewrite \\
\midrule
\multicolumn{3}{@{}p{\dimexpr0.945\textwidth+4\tabcolsep\relax}@{}}{\textit{Example 1.} Qwen3.5-4B, $L=16$, 12 colluding accounts.  Target $1001000100100101$, decoded $1001000100100101$.  Paraphrase / attack rewrite: judge 5.0 / 4.0, ROUGE-L 0.53 / 0.51, BERTScore 0.93 / 0.93.} \\
\addlinespace[3pt]
Google Music will continue to be free to download in the U.S., but the service is shutting down for free users in all of Europe, except for Italy. Apple Music, now in beta testing, seems to be getting the best of all worlds. It offers a monthly subscription service like Spotify and Google Music, but the company is also offering an option to buy your music from the App Store. Apple Music is available for testing in the U.S. only for now and it will be available as an in-app purchase on both iOS and Mac for users in beta. So there are a lot of choices for mobile music consumers, and Google Music is a strong contender for those looking for a cloud music experience. Does the “cloud music” label mean that it’s only available on mobile devices? No. You can use the Google Music app on your iPhone and Android. You can also access the service through a web browser. How much of your music is stored on your computer & Google Music remains free to download within the U.S.; however, the service is ending for free users across all European countries except Italy. Apple Music, currently in beta, appears to combine the best features of both worlds: it offers a monthly streaming subscription similar to Spotify and Google, while also allowing users to purchase individual tracks from the App store. For now, Apple Music is available exclusively for testing in America, where beta users can access it as an in-purchase on both iOS devices and Mac computers. Consequently, mobile music consumers have numerous options, with Google Music serving as a strong choice for those seeking a cloud-based music experience. The term "cloud music" does not imply the service is limited to mobile devices; users can stream via the app on iPhones and Android phones, as well as through any web browser. & Users will still be able to download Google Music for free from America. However, in every part Europe other except for Italy, free access to music is shutting down.\par Apple Music, which is now in beta, appears to combine the best of both options: a subscription service that works the type of Spotify and Google plus an option to purchase songs from Apple Store.\par Apple is available only for testing right now in America. It is available as an on-the-floor, in-app store purchase to iOS and MAC users who participate in beta testing the service.\par So there are many options for mobile consumers and Google Music remains a very strong contender if people are searching out the best cloud-based music experience.\par The label "cloud music" does not mean it is only for mobile phones or tablets. You can utilize Google Music on an iOS and Android phone, plus on the web.\par So much of your personal library stored on the personal computer. \\
\midrule
\multicolumn{3}{@{}p{\dimexpr0.945\textwidth+4\tabcolsep\relax}@{}}{\textit{Example 2.} Llama-3.1-8B, $L=16$, 12 colluding accounts.  Target $1100000001010101$, decoded $1100000001010101$.  Paraphrase / attack rewrite: judge 4.0 / 4.0, ROUGE-L 0.41 / 0.40, BERTScore 0.91 / 0.89.} \\
\addlinespace[3pt]
They also collected money for those in struggle and leafleted with information. The event went on for two hours. The speakers at the rally included a young man from Afghanistan, who said, “The U.S. has been bombing my country for 16 years. We don’t want this war. We don’t support this war. We support peace, not war.” The speaker from the Philippines, who is also a veteran of the U.S.’ wars there, said, “The people of the Philippines have been resisting the U.S. for more than 120 years. The American imperialists have never been able to subjugate the Filipino people, and they never will.” The IAC also set up an anti-imperialist book table on Wilshire Boulevard at the corner of Western. The table was staffed by IAC members who spoke to passers-by about the U.S. war in Afghanistan, the U.S. wars in Korea and the Philippines, and the U.S. attack on & The event featured fundraising efforts for those affected by conflict, as well as the distribution of informational leaflets. The rally itself lasted for two hours.\par Speakers at the event included a young Afghan man, who emphasized that the U.S.'s 16-year bombing campaign in his country has been met with opposition. He stated that the Afghan people do not support the ongoing war and instead advocate for peace. A speaker from the Phillipines, who has also experienced U.S. military involvement firsthand, noted that the Phillipine people have been resisting U.S. influence for over 120 years, with the U.S.-led imperialist efforts ultimately proving unsuccessful.\par The IAC set up an information table on Wilhire Boulevard at Western Avenue, where members engaged with passersby about U.S. involvement in various conflicts, including Afghanistan, Korea, and the Phillipines. & The demonstration featured fundraising efforts for communities impacted by war. In conjunction, participants distributed leaflets with relevant informational materials during the two-hours-long protest. A speaker at the rally from Afghanistan expressed his country’s prolonged experience of conflict, having been bombed by the U. S for nearly sixteen years. He emphasized that the Afghan people's opposition to war remains unwavering, with their desire for peaceful resolution being the primary stance. A veteran from the U. s-affected Philippines also shared his views at the rally. He highlighted the long history between the U. and s the Philippine people, with their resistance dating back over more  a  century. He confidently stated that the U. could never subsume their people under its control. The IAC set-up a  book display on Wilhire Boulevard, at the intersection of Western. This display was staffed and monitored by IAC volunteers who spoke with on-looking passers-by, addressing the U. 'S involvement in conflicts in countries including Afghanistan, s Korea, s the U. the U. as and s the U, s the attack. \\
\bottomrule
\end{tabular}

\caption{Two example rewrites of binary MPAC under the $\tau=0$ rule of
Table~\ref{tab:models}: the watermarked source, the same paraphraser without
the attack, and the $n$-account attack rewrite that decoded to the target.
Each example's line gives the generator, the message length, the number of
colluding accounts, the target and decoded messages, and the judge score,
ROUGE-L and BERTScore of the paraphrase and of the attack rewrite.
Selection as described in Appendix~\ref{app:examples}.}
\label{tab:examples}
\end{table*}

\section{Results for the other schemes}
\label{app:schemes}

Table~\ref{tab:schemes} gives, for every scheme at 8 bits and for each
generator with the mean over the four, the clean decode rate, the bit
accuracy and exact match of the plain paraphrase, of the own-account rewrite,
and of the $n_{\max}$-account rewrite, and the BERTScore and perplexity of
the rewrites.  The paragraph on other schemes in
Section~\ref{sec:experiments} discusses the means.

\begin{table*}[!tp]
\centering\footnotesize\setlength{\tabcolsep}{3.5pt}
\resizebox{\textwidth}{!}{
\begin{tabular}{@{}lrlrrrrrrrrrr@{}}
\toprule
 & & & clean & no attack & \multicolumn{2}{c}{own acc.\ $\to\bar m$} & \multicolumn{2}{c}{$n_{\max}$ accounts} & \multicolumn{2}{c}{BERTScore $\uparrow$} & \multicolumn{2}{c}{perplexity $\downarrow$} \\
\cmidrule(lr){6-7}\cmidrule(lr){8-9}\cmidrule(lr){10-11}\cmidrule(lr){12-13}
Scheme & $n_{\max}$ & Generator & BA $\uparrow$ & BA $\uparrow$ & BA $\uparrow$ & EM $\uparrow$ & BA $\uparrow$ & EM $\uparrow$ & paraphrase & attack & source & attack \\
\midrule
MPAC (binary) & 8 & Llama-2-7B & 96.5 & 50.2 & 100.0 & 100 & 100.0 & 100 & 0.89 & 0.84 & 4.9 & 41.7 \\
 &  & Llama-3.1-8B & 98.0 & 49.2 & 100.0 & 100 & 100.0 & 100 & 0.90 & 0.86 & 4.4 & 19.8 \\
 &  & Qwen3.5-4B & 99.0 & 48.0 & 100.0 & 100 & 100.0 & 100 & 0.90 & 0.86 & 5.8 & 23.5 \\
 &  & Qwen3.5-9B & 98.4 & 52.5 & 100.0 & 100 & 100.0 & 100 & 0.91 & 0.86 & 5.4 & 24.3 \\
 & & \textit{mean} & 98.0 & 50.0 & 100.0 & 100 & 100.0 & 100 & 0.90 & 0.86 & 5.1 & 27.3 \\
\addlinespace
StealthInk (binary) & 8 & Llama-2-7B & 94.1 & 47.8 & 100.0 & 100 & 100.0 & 100 & 0.89 & 0.82 & 4.2 & 79.4 \\
 &  & Llama-3.1-8B & 97.2 & 50.1 & 100.0 & 100 & 100.0 & 100 & 0.90 & 0.86 & 3.9 & 29.8 \\
 &  & Qwen3.5-4B & 94.9 & 49.6 & 100.0 & 100 & 100.0 & 100 & 0.91 & 0.85 & 5.1 & 33.3 \\
 &  & Qwen3.5-9B & 95.5 & 45.5 & 100.0 & 100 & 100.0 & 100 & 0.91 & 0.85 & 4.7 & 37.5 \\
 & & \textit{mean} & 95.4 & 48.2 & 100.0 & 100 & 100.0 & 100 & 0.90 & 0.84 & 4.5 & 45.0 \\
\addlinespace
StealthInk (2-bit chunks) & 16 & Llama-2-7B & 76.5 & 25.5 & 35.8 & 1 & 100.0 & 100 & 0.90 & 0.85 & 4.0 & 42.0 \\
 &  & Llama-3.1-8B & 82.5 & 24.0 & 36.0 & 1 & 100.0 & 100 & 0.90 & 0.88 & 3.6 & 19.3 \\
 &  & Qwen3.5-4B & 75.5 & 23.0 & 36.0 & 1 & 100.0 & 100 & 0.91 & 0.87 & 4.9 & 24.3 \\
 &  & Qwen3.5-9B & 72.5 & 25.0 & 36.0 & 1 & 100.0 & 100 & 0.91 & 0.87 & 4.5 & 26.8 \\
 & & \textit{mean} & 76.8 & 24.4 & 35.9 & 1 & 100.0 & 100 & 0.90 & 0.87 & 4.3 & 28.1 \\
\addlinespace
XMark (2-bit chunks) & 16 & Llama-2-7B & 97.5 & 28.2 & 22.5 & 0 & 100.0 & 100 & 0.90 & 0.85 & 4.7 & 39.0 \\
 &  & Llama-3.1-8B & 98.2 & 26.2 & 22.0 & 0 & 100.0 & 100 & 0.90 & 0.88 & 4.3 & 17.2 \\
 &  & Qwen3.5-4B & 96.8 & 25.2 & 27.0 & 0 & 100.0 & 100 & 0.90 & 0.88 & 5.8 & 22.2 \\
 &  & Qwen3.5-9B & 96.2 & 25.0 & 23.0 & 0 & 100.0 & 100 & 0.91 & 0.87 & 5.2 & 23.7 \\
 & & \textit{mean} & 97.2 & 26.2 & 23.6 & 0 & 100.0 & 100 & 0.90 & 0.87 & 5.0 & 25.5 \\
\addlinespace
CycleShift & 1 & Llama-2-7B & 99.5 & 52.6 & 100.0 & 100 & 100.0 & 100 & 0.90 & 0.79 & 5.0 & 117.9 \\
 &  & Llama-3.1-8B & 99.5 & 51.0 & 100.0 & 100 & 100.0 & 100 & 0.90 & 0.83 & 4.7 & 41.6 \\
 &  & Qwen3.5-4B & 97.2 & 50.0 & 100.0 & 100 & 100.0 & 100 & 0.90 & 0.82 & 6.1 & 29.1 \\
 &  & Qwen3.5-9B & 100.0 & 51.9 & 100.0 & 100 & 100.0 & 100 & 0.91 & 0.82 & 5.7 & 38.3 \\
 & & \textit{mean} & 99.1 & 51.4 & 100.0 & 100 & 100.0 & 100 & 0.90 & 0.82 & 5.4 & 56.7 \\
\addlinespace
RSBH & 32 & Llama-2-7B & 97.0 & 47.9 & 38.6 & 0 & 97.6 & 93 & 0.89 & 0.85 & 4.7 & 38.3 \\
 &  & Llama-3.1-8B & 97.8 & 49.8 & 45.6 & 0 & 96.6 & 90 & 0.90 & 0.87 & 4.3 & 17.2 \\
 &  & Qwen3.5-4B & 97.2 & 52.2 & 44.8 & 0 & 96.8 & 91 & 0.91 & 0.87 & 5.7 & 20.5 \\
 &  & Qwen3.5-9B & 98.9 & 52.1 & 45.8 & 0 & 95.1 & 86 & 0.91 & 0.87 & 5.3 & 22.2 \\
 & & \textit{mean} & 97.7 & 50.5 & 43.7 & 0 & 96.5 & 90 & 0.90 & 0.86 & 5.0 & 24.5 \\
\bottomrule
\end{tabular}
}
\caption{Steering other multi-bit schemes with 8-bit messages, per generator
with the mean over the four generators.  Same accuracy metrics as
Table~\ref{tab:models}, with BERTScore and perplexity under Llama-2-13B
(source and attack rewrite) as text metrics.  $\uparrow$ ($\downarrow$):
higher (lower) is better.  The colluding column uses the
largest number of accounts run for each scheme, $n_{\max}$, and for
CycleShift a random target from the own account alone.  CycleShift: the
payload cyclically shifts one secret set, so one account reveals every
payload's set.  RSBH: public segments and an RS(4,2) code over GF(16) whose
released decoder falls back to the raw payload when decoding fails.
XMark: public blocks from the two previous tokens and two keyed quarterings
of the vocabulary, with the 2-bit symbol suppressing one quarter under both.
StealthInk: the released 2-bit chunks, for which the accuracy columns report
symbol accuracy over the four 2-bit positions, and a binary variant.}
\label{tab:schemes}
\end{table*}

\section{Per-generator collusion curves}
\label{app:collusion}

Figures~\ref{fig:collusion-mpac} to~\ref{fig:collusion-cycleshift} repeat the
panels of Figure~\ref{fig:collusion} for every generator separately, with the
same rules and bound, and the MPAC figure adds the soft bias of
\eqref{eq:soft-attack}.

\begin{figure*}[t]
\centering
\includegraphics[width=\textwidth]{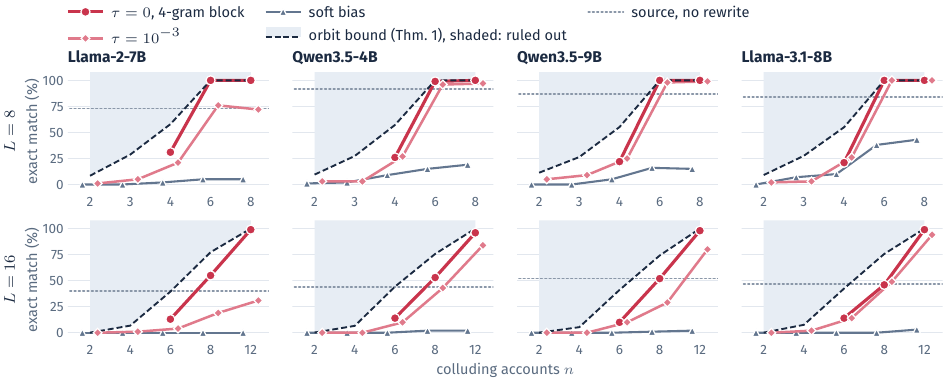}
\caption{Binary MPAC: per-generator version of Figure~\ref{fig:collusion}.
Triangles: the soft bias of \eqref{eq:soft-attack}.}
\label{fig:collusion-mpac}
\end{figure*}

\begin{figure*}[t]
\centering
\includegraphics[width=\textwidth]{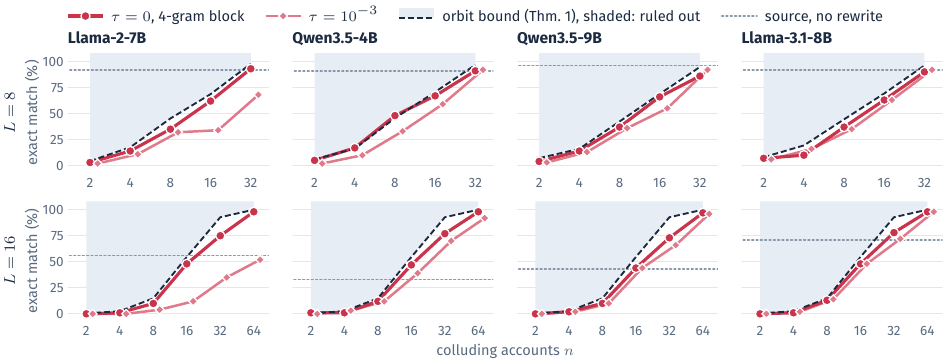}
\caption{RSBH: per-generator version of Figure~\ref{fig:collusion}.  A
target is reachable once at most one of its codeword symbols is uncovered by
the accounts, which takes 32 accounts for 8-bit messages and 64 for 16-bit
ones.}
\label{fig:collusion-rsbh}
\end{figure*}

\begin{figure*}[t]
\centering
\includegraphics[width=\textwidth]{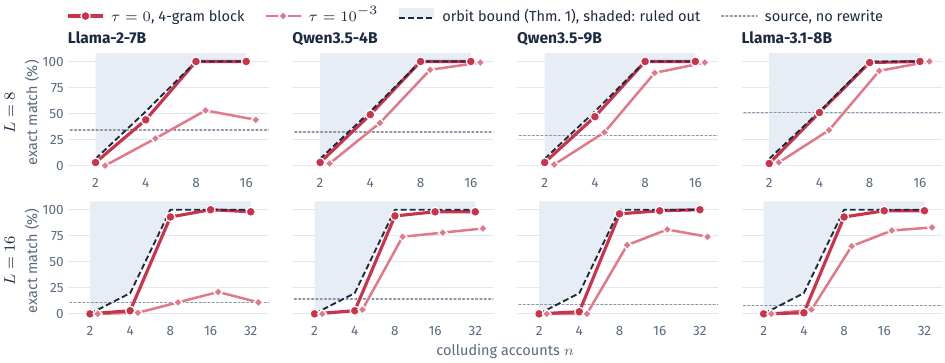}
\caption{StealthInk with its released 2-bit symbols: per-generator version
of Figure~\ref{fig:collusion}.  A target becomes reachable once every one of
its symbols has been seen at its position by some account.}
\label{fig:collusion-stealthink}
\end{figure*}

\begin{figure*}[t]
\centering
\includegraphics[width=\textwidth]{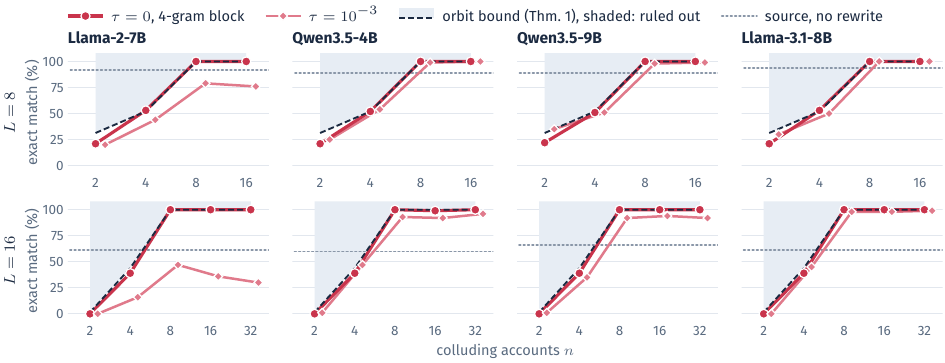}
\caption{XMark with its released 2-bit symbols: per-generator version of
Figure~\ref{fig:collusion}.  Blocks are public, so a target's symbol at a
block is reachable once some account carries it or only one symbol value is
unseen there.}
\label{fig:collusion-xmark}
\end{figure*}

\begin{figure*}[t]
\centering
\includegraphics[width=\textwidth]{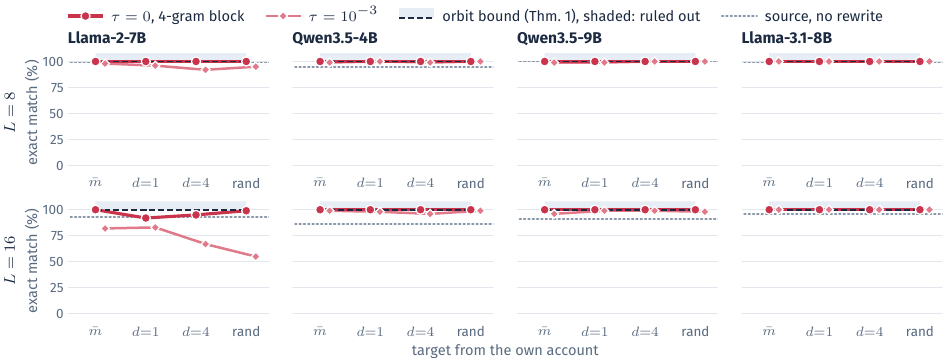}
\caption{CycleShift: per-generator version of Figure~\ref{fig:collusion}.
Every target is reachable from the own account, so the panels show the
complement $\bar m$, targets one and four bits away from $m$, and a uniformly
random one.}
\label{fig:collusion-cycleshift}
\end{figure*}

\section{Text length}
\label{app:tlen}

Figure~\ref{fig:tlen} repeats the sweep of Figure~\ref{fig:collusion} for
sources of 100, 200, and 400 tokens under the $\tau=0$ rule.  The symmetry bound
does not depend on the text length.  What changes is how many tokens each
message position receives.  Up to 16 bits the three lengths give the same
curves within a few points, except StealthInk at 16 bits and 100 tokens,
where eight to 32 accounts reach 74 to 79 percent instead of 94 to 100.  At
32 bits doubling the text to 400 tokens lifts the plateau of MPAC from 65 to
87 percent, StealthInk from 55 to 94, and XMark and RSBH to 100.  At 64 bits no scheme
carries its message through 200 tokens, untouched or rewritten (at most 23
percent, for XMark), and 400 tokens are still too few for MPAC (about 20
percent from 12 accounts on) and StealthInk (28 percent), while XMark reaches
82 to 85 percent, and the untouched watermarks decode none of these sources
exactly.

\begin{figure*}[t]
\centering
\includegraphics[width=\textwidth]{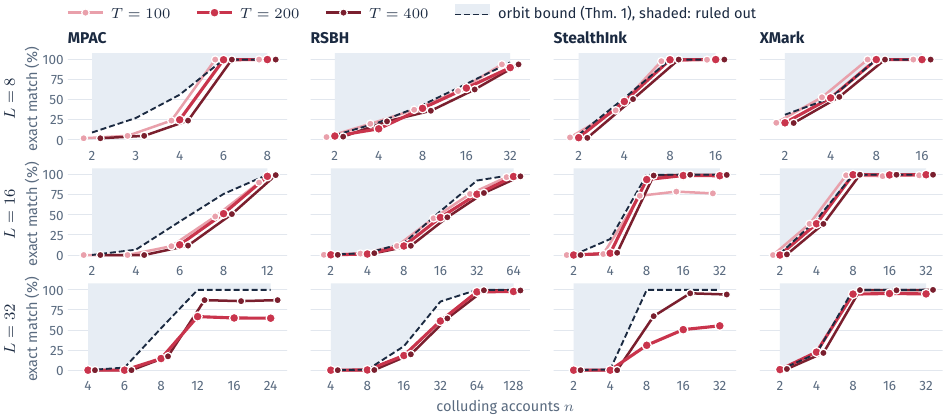}
\caption{Exact-match success of the constrained rewrite ($\tau=0$, 4-gram
block) against the number of colluding accounts for sources of $T\in\{100,
200, 400\}$ tokens, rewrites of $0.75T$ to $1.24T$ tokens, mean over the
four generators.  Dashed: the mean symmetry bound, the same for every $T$.
Shaded: ruled out.
Figures~\ref{fig:schemes-t100} and~\ref{fig:schemes-t400} show the same
runs in the layout of Figure~\ref{fig:collusion}.}
\label{fig:tlen}
\end{figure*}

\begin{figure*}[t]
\centering
\includegraphics[width=\textwidth]{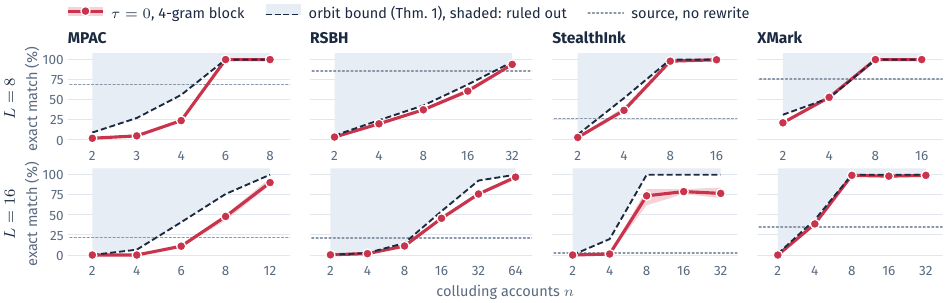}
\caption{Figure~\ref{fig:collusion} at 100 tokens per text ($\tau=0$ rule, since
the $\tau=10^{-3}$ rule and CycleShift were run at 200 tokens only).  Dotted:
the decoder's exact-match rate on the untouched 100-token source.}
\label{fig:schemes-t100}
\end{figure*}

\begin{figure*}[t]
\centering
\includegraphics[width=\textwidth]{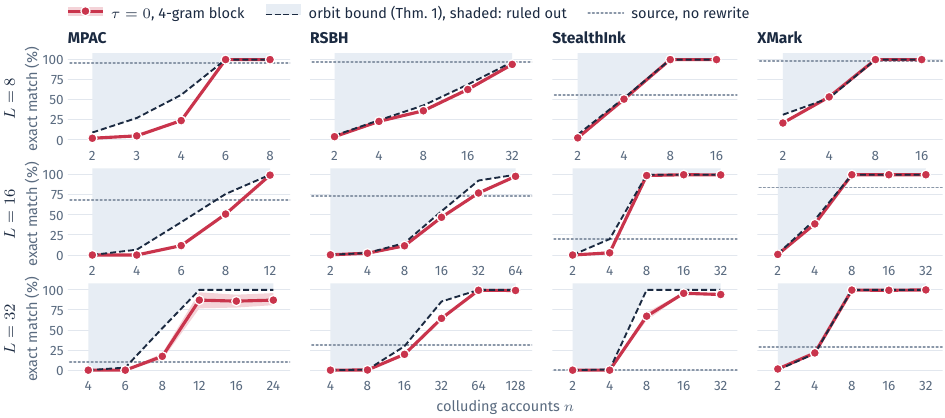}
\caption{Figure~\ref{fig:collusion} at 400 tokens per text ($\tau=0$ rule, since
the $\tau=10^{-3}$ rule and CycleShift were run at 200 tokens only).  Dotted: the decoder's exact-match rate on the untouched
400-token source.}
\label{fig:schemes-t400}
\end{figure*}

\section{Runtime}
\label{app:runtime}

Every attack job rewrites the 100 texts of one corpus under every condition
of its row in Figure~\ref{fig:collusion} on a single NVIDIA H100 (94\,GB), with the generator, the paraphraser, and the attacker's copy of the
base model loaded together and batches of 8 texts.  The API is simulated
locally with the generator itself, so the time includes the oracle's own
forward passes.  Table~\ref{tab:runtime} reports, for the settings run on the H100, the
wall-clock time per rewritten text under the $\tau=0$ rule, including
decoding, the certificate, and the learner's queries.  The $\tau=10^{-3}$ rule costs the same.  The
dominant cost is one base-model forward pass per new context, so schemes
whose context is longer than one token (XMark, StealthInk) query at almost
every step, while MPAC, CycleShift, and RSBH reuse contexts.  Queries is the
number of oracle calls per rewritten text with a plan, one per context and
account whose set the learner resolved, and corpus is the time to sample the 100
watermarked and 100 unwatermarked continuations of a corpus.

\begin{table*}[t]
\centering\footnotesize\setlength{\tabcolsep}{5pt}
\begin{tabular}{@{}lrrrrrrrr@{}}
\toprule
Scheme & $L$ & $T$ & \multicolumn{4}{c}{seconds per rewritten text $\downarrow$} & queries $\downarrow$ & corpus \\
\cmidrule(lr){4-7}
 & & & Llama-2-7B & Llama-3.1-8B & Qwen3.5-4B & Qwen3.5-9B & per rewrite & (min) \\
\midrule
MPAC & 8 & 200 & 1 & 2 & 3 & 3 & 120 & 3 \\
 & 16 & 200 & 1 & 2 & 3 & 3 & 174 & 3 \\
 & 32 & 200 & 2 & 2 & 4 & 5 & 845 & 2 \\
 & 64 & 200 & 2 & 3 & 5 & 5 & 1555 & 2 \\
 & 8 & 400 & 3 & 3 & 6 & 7 & 276 & 5 \\
 & 16 & 100 & 1 & 1 & 2 & 3 & 177 & 1 \\
 & 16 & 400 & 3 & 3 & 6 & 7 & 513 & 5 \\
 & 32 & 400 & 4 & 4 & 7 & 9 & 1585 & 5 \\
 & 64 & 400 & 5 & 5 & 8 & 9 & 2751 & 4 \\
\addlinespace
RSBH & 8 & 200 & 2 & 3 & 5 & 7 & 998 & 5 \\
 & 32 & 200 & 5 & 6 & 8 & 9 & 6520 & 4 \\
 & 8 & 400 & 3 & 4 & 8 & 9 & 1809 & 7 \\
 & 16 & 100 & 1 & 2 & 3 & 4 & 1444 & 2 \\
 & 16 & 400 & 5 & 5 & 10 & 14 & 4512 & 7 \\
 & 32 & 400 & 8 & 9 & 12 & 15 & 11364 & 7 \\
\addlinespace
StealthInk & 16 & 200 & 4 & 4 & 14 & 15 & 3038 & 3 \\
 & 32 & 200 & 4 & 4 & 14 & 14 & 3008 & 3 \\
 & 64 & 200 & 5 & 6 & 16 & 16 & 6450 & 3 \\
 & 8 & 100 & 2 & 2 & 6 & 7 & 642 & 3 \\
 & 8 & 400 & 8 & 8 & 32 & 46 & 3326 & 6 \\
 & 16 & 100 & 2 & 2 & 7 & 7 & 1443 & 2 \\
 & 16 & 400 & 9 & 9 & 33 & 38 & 6172 & 6 \\
 & 32 & 400 & 9 & 9 & 32 & 37 & 6125 & 6 \\
 & 64 & 400 & 11 & 11 & 37 & 39 & 12529 & 5 \\
\addlinespace
XMark & 16 & 200 & 2 & 2 & 9 & 9 & 146 & 3 \\
 & 32 & 200 & 2 & 2 & 9 & 9 & 156 & 3 \\
 & 64 & 200 & 2 & 2 & 9 & 9 & 154 & 3 \\
 & 8 & 400 & 4 & 4 & 15 & 18 & 302 & 6 \\
 & 16 & 100 & 1 & 1 & 5 & 5 & 80 & 2 \\
 & 16 & 400 & 4 & 4 & 16 & 17 & 263 & 6 \\
 & 32 & 400 & 4 & 4 & 16 & 18 & 280 & 6 \\
 & 64 & 400 & 4 & 4 & 17 & 18 & 275 & 6 \\
\addlinespace
CycleShift & 8 & 200 & 1 & 1 & 3 & 3 & 25 & 3 \\
 & 32 & 200 & 1 & 1 & 2 & 2 & 29 & 3 \\
\bottomrule
\end{tabular}

\caption{Wall-clock cost of the attack per rewritten text, in seconds on one
H100, for the $\tau=0$ rule, the number of oracle queries
per rewrite, and the corpus-generation time in minutes
(mean over generators).  $\downarrow$: lower is better.}
\label{tab:runtime}
\end{table*}

\section{Fidelity per generator}
\label{app:fidelity}

\begin{table*}[t]
\centering\footnotesize\setlength{\tabcolsep}{4pt}
\begin{tabular}{@{}llrrrrrrrr@{}}
\toprule
 & & \multicolumn{4}{c}{exact target match (\%) $\uparrow$} & & & \multicolumn{2}{c}{facts kept $\uparrow$} \\
\cmidrule(lr){3-6}\cmidrule(lr){9-10}
Generator & Text & own, $8$ & own, $16$ & $n$, $8$ & $n$, $16$ & judges $\uparrow$ & NLI $\uparrow$ & numbers & entities \\
\midrule
Llama-2-7B & paraphrase &  &  &  &  & 4.21 & 0.74 & 0.85 & 0.83 \\
 & $\tau=0$, block incl.\ source & 100 & 100 & 100 & 99 & 2.76 & 0.51 & 0.07 & 0.21 \\
 & $\tau=0$, rewrite-only block & 100 & 100 & 100 & 99 & 2.76 & 0.50 & 0.08 & 0.24 \\
 & \quad + fact protection & 100 & 90 & 100 & 93 & 3.10 & 0.59 & 0.66 & 0.56 \\
\addlinespace
Llama-3.1-8B & paraphrase &  &  &  &  & 4.17 & 0.75 & 0.80 & 0.77 \\
 & $\tau=0$, block incl.\ source & 100 & 100 & 100 & 99 & 3.10 & 0.60 & 0.13 & 0.34 \\
 & $\tau=0$, rewrite-only block & 100 & 100 & 100 & 100 & 3.08 & 0.60 & 0.15 & 0.37 \\
 & \quad + fact protection & 100 & 95 & 100 & 95 & 3.53 & 0.67 & 0.71 & 0.69 \\
\addlinespace
Qwen3.5-4B & paraphrase &  &  &  &  & 4.32 & 0.79 & 0.72 & 0.85 \\
 & $\tau=0$, block incl.\ source & 100 & 96 & 100 & 96 & 2.26 & 0.48 & 0.05 & 0.32 \\
 & $\tau=0$, rewrite-only block & 100 & 94 & 100 & 97 & 2.29 & 0.48 & 0.06 & 0.34 \\
 & \quad + fact protection & 98 & 88 & 98 & 89 & 2.90 & 0.61 & 0.60 & 0.66 \\
\addlinespace
Qwen3.5-9B & paraphrase &  &  &  &  & 4.54 & 0.81 & 0.80 & 0.89 \\
 & $\tau=0$, block incl.\ source & 100 & 97 & 100 & 98 & 2.51 & 0.55 & 0.08 & 0.35 \\
 & $\tau=0$, rewrite-only block & 100 & 98 & 100 & 100 & 2.54 & 0.54 & 0.06 & 0.38 \\
 & \quad + fact protection & 99 & 86 & 100 & 94 & 3.12 & 0.67 & 0.66 & 0.71 \\
\bottomrule
\end{tabular}

\caption{Per-generator fidelity results for the plain paraphrase, the rule of
Table~\ref{tab:models} ($\tau=0$, 4-gram block over
the whole input), the same rule with the block over the rewrite only, and the
latter with fact protection ($\kappa=0.5$).  Judges: mean of the two LLM judges.  NLI: mean of
support and coverage.  $\uparrow$: higher is better.}
\label{tab:fidelity-gen}
\end{table*}

\begin{figure*}[t]
\centering
\includegraphics[width=\textwidth]{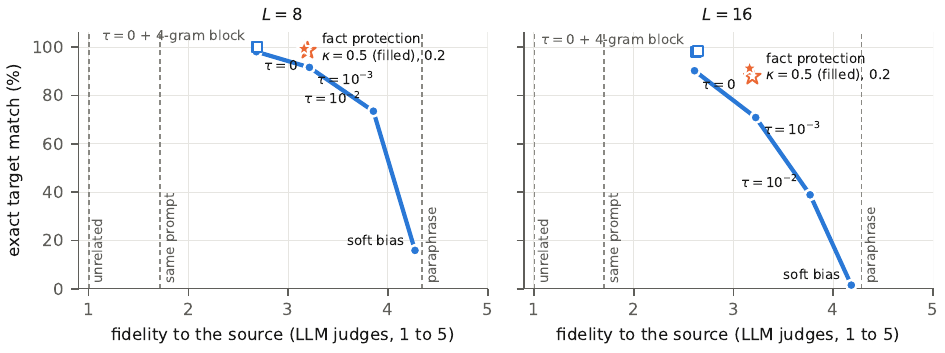}
\caption{Exact-match success against fidelity (mean of the two LLM judges)
for the rewrite rules evaluated in this section, one panel per message
length, mean over the four generators and over the own-account and $n$-account
rewrites.  Blue: the constrained rule from the soft
bias to $\tau=0$.  Squares: $\tau=0$ with either 4-gram block.  Stars: $\tau=0$
with fact protection.  Dashed lines: fidelity of the plain paraphrase, of an
unwatermarked completion of the same prompt, and of an unrelated source.}
\label{fig:fidelity}
\end{figure*}

\clearpage
\makeatletter\global\vsize\@colht\makeatother

\section{Relabelings and the general ambiguity bound}
\label{app:orbit-proof}

The main text avoids group notation and states the binary count directly in
Theorem~\ref{thm:targeting-bound}.  This appendix gives the general symmetry
argument behind that result.  It also makes explicit the assumptions on the
hidden-key distribution and on the decoder's treatment of labels, including
ties.

Let $\mathcal K$ be the key space and $\mathcal M$ the message space.  A
\emph{relabeling group} $(\mathcal G,\circ)$ acts on both spaces: every
$g\in\mathcal G$ induces maps
$\rho_g^{\mathcal K}\colon\mathcal K\to\mathcal K$ and
$\rho_g^{\mathcal M}\colon\mathcal M\to\mathcal M$.  We write both as
$\rho_g(x)$ because the argument identifies the space.  With identity $e$
and inverse $g^{-1}$, the action satisfies
\begin{equation}
 \begin{aligned}
  \rho_e(x)&=x,\\
  \rho_{g\circ h}(x)&=\rho_g\bigl(\rho_h(x)\bigr),\\
  G_{\rho_g(K)}^{\rho_g(m)}(c)&=G_K^m(c).
 \end{aligned}
 \label{eq:equivariance}
\end{equation}
The first two identities hold for $x\in\mathcal K\cup\mathcal M$ and the
last for every key, message, relabeling, and context.  Thus renaming the
hidden key structure and the message together leaves every green list
unchanged.

For the controlled messages $M=(m^{(1)},\ldots,m^{(n)})$, define the
relabelings that preserve all their public names by
\begin{equation}
 \mathcal H_M:=\{g\in\mathcal G:\rho_g(m^{(i)})=m^{(i)}
                         \text{ for every }i\}.
 \label{eq:stabilizer}
\end{equation}
The messages indistinguishable from a target $m'$ under these remaining
symmetries form
\begin{equation}
 \operatorname{Orb}_{\mathcal H_M}(m')
 :=\{\rho_g(m'):g\in\mathcal H_M\}.
 \label{eq:orbit}
\end{equation}
Appendix~\ref{app:relabeling-examples} gives concrete examples of a
relabeling that remains in $\mathcal H_M$ and one that the controlled
messages exclude.

\paragraph{Full symmetry statement for Theorem~\ref{thm:targeting-bound}.}
Fix the controlled messages $M=(m^{(1)},\ldots,m^{(n)})$, a target $m'$,
and a finite relabeling group satisfying \eqref{eq:equivariance}. Write
$H:=\mathcal H_M$ and assume the following.
\begin{enumerate}[label=\textnormal{(A\arabic*)},leftmargin=*,labelsep=0.3em]
\item \textbf{Invariant key prior.}
The key $K$ has a probability distribution $\mu$ satisfying
\begin{equation}
 \mu(\rho_g(B))=\mu(B)
 \qquad(g\in H)
 \label{eq:orbit-prior}
\end{equation}
for every measurable set of keys $B\subseteq\mathcal K$, where
$\rho_g(B):=\{\rho_g(k):k\in B\}$. Thus a set of keys and its relabeling
have the same prior probability.
\item \textbf{Observations through the controlled accounts.}
The transcript $O$ records all queries and responses available to the attack.
The accounts generate responses according to (P1), and the attack may choose
accounts, contexts, prompts, and when to stop from its earlier observations.
Exact next-token distributions are also allowed as responses.
Sample responses use fresh sampling randomness independent of the key
and of the attack's private randomness, with the conditional token laws
specified by (P1).
All queries finish and the attack outputs a finite text almost surely.
There is no additional key-dependent information outside $O$.
The attack's private random seed $R$, containing all of its random choices,
is independent of $K$ before observations are collected. Its output is
$A(O,R)$, and the notation $A(O)$ in the theorem leaves $R$ implicit.
\item \textbf{Equivariant decoder.}
The decoder returns a message or an abstention symbol $\bot$, and for every
key $k$, text $z$, message $b$, and $g\in H$,
\begin{equation}
 \begin{aligned}
 &\Pr[D_{\rho_g(k)}(z)=\rho_g(b)]\\
 &\hspace{25pt}=\Pr[D_k(z)=b].
 \end{aligned}
 \label{eq:orbit-decoder-kernel}
\end{equation}
These probabilities are only over the decoder's own fresh randomness,
independent of the attack and of generation. For a deterministic decoder,
this is the identity $D_{\rho_g(k)}(z)=\rho_g(D_k(z))$.
We extend the action by $\rho_g(\bot)=\bot$.
\end{enumerate}
Public model and sampling settings are fixed throughout, and independent public
randomness can be included in $R$. Let
\begin{equation}
 \mathcal B:=\operatorname{Orb}_{H}(m'),\qquad N:=|\mathcal B|<\infty.
 \label{eq:orbit-proof-size}
\end{equation}
Then
\begin{equation}
 \Pr[D_K(A(O,R))=m'\mid O]\le \frac1N.
 \label{eq:orbit-proof-bound}
\end{equation}
The inequality holds almost surely in $O$. The probability includes the
key, account sampling, attack randomness, and
any decoder randomness. ``Almost surely'' means that the inequality may
exclude observations in a set of total probability zero. In particular, it
holds at every discrete observation $o$ with $\Pr[O=o]>0$.

\begin{proof}
We first show that even the attack's complete view, including its seed,
leaves the key distribution invariant under $H$. We then show that the
decoder assigns equal conditional probabilities to every message in
$\mathcal B$, and sum those probabilities.

\smallskip\noindent
\textbf{Step 1: check the group operations used below.}
The action identities in \eqref{eq:equivariance} give, for any key or
message $x$,
\begin{equation}
 \begin{aligned}
 \rho_{g^{-1}}(\rho_g(x))
 &=\rho_{g^{-1}\circ g}(x)=x,\\
 \rho_g(\rho_{g^{-1}}(x))
 &=\rho_{g\circ g^{-1}}(x)=x.
 \end{aligned}
\end{equation}
Thus $\rho_g$ is a bijection with inverse $\rho_{g^{-1}}$.
The identity fixes every controlled message, so $e\in H$.
If $g,h\in H$, then for each $i$,
\begin{equation}
 \rho_{g\circ h}(m^{(i)})
 =\rho_g(\rho_h(m^{(i)}))
 =\rho_g(m^{(i)})=m^{(i)}.
\end{equation}
Hence $g\circ h\in H$. If $g\in H$, applying $\rho_{g^{-1}}$ to
$\rho_g(m^{(i)})=m^{(i)}$ gives
$\rho_{g^{-1}}(m^{(i)})=m^{(i)}$, so $g^{-1}\in H$ as well.
This verifies that $H$ is a subgroup. It also gives $m'=\rho_e(m')\in
\mathcal B$, so $N\ge1$ and division by $N$ is valid.

\smallskip\noindent
\textbf{Step 2: show that each account response is unchanged.}
Fix $g\in H$, a key $k$, an account $i$, and a context $c$.
The stabilizer condition and \eqref{eq:equivariance} give
\begin{equation}
 \begin{aligned}
 G_{\rho_g(k)}^{m^{(i)}}(c)
 &=G_{\rho_g(k)}^{\rho_g(m^{(i)})}(c)\\
 &=G_k^{m^{(i)}}(c).
 \end{aligned}
 \label{eq:orbit-fixed-account-set}
\end{equation}
The first equality uses $\rho_g(m^{(i)})=m^{(i)}$, and the second relabels
the key and message together. By (P1), for every token $v$,
\begin{equation}
 \begin{aligned}
 p_{\rho_g(k),m^{(i)}}(v\mid c)
 &=\Phi\bigl(p(\cdot\mid c),G_{\rho_g(k)}^{m^{(i)}}(c)\bigr)(v)\\
 &=\Phi\bigl(p(\cdot\mid c),G_k^{m^{(i)}}(c)\bigr)(v)\\
 &=p_{k,m^{(i)}}(v\mid c).
 \end{aligned}
 \label{eq:orbit-fixed-account-law}
\end{equation}
Here $\Phi(p,S)(v)$ denotes the probability of $v$ in the distribution
returned by $\Phi$. The base distribution and the rule $\Phi$ are the
same in both experiments because they are key-independent.

Consequently, a query returning the entire next-token distribution has
exactly the same answer. A query sampling a token has the same answer
distribution. For a completion $w=(w_1,\ldots,w_T)$, fix the query prompt
and let $c_t(w_{<t})$ be the context determined by that prompt and prefix.
The probability of this completion is
\begin{equation}
 \begin{aligned}
 &p_{\rho_g(k),m^{(i)}}(w)\\
 &=\prod_{t=1}^T
 p_{\rho_g(k),m^{(i)}}(w_t\mid c_t(w_{<t}))\\
 &=\prod_{t=1}^T
 p_{k,m^{(i)}}(w_t\mid c_t(w_{<t}))\\
 &=p_{k,m^{(i)}}(w).
 \end{aligned}
\end{equation}
The first and last equalities are the autoregressive sampling rule, and the
middle equality applies \eqref{eq:orbit-fixed-account-law} to each factor.
The identity includes zero-probability sequences. If the response stops
at an end token or a public length limit, the same stopping rule is applied
to the same sequence, so the resulting response laws still agree.

\smallskip\noindent
\textbf{Step 3: preserve this equality under adaptive queries.}
Fix the private seed at $R=r$. Once this seed is fixed, each next query
and the decision to stop are deterministic functions of the preceding
transcript. Write $T_j$ for the transcript after $j$ queries, with
$T_0$ empty. For a possible preceding transcript $t$, denote the next
query by $q_j(r,t)$. Denote its response distribution under key $k$ by
$Q_k(\cdot\mid q_j(r,t),t)$. Step 2 gives, for every response event $F$,
\begin{equation}
 \begin{aligned}
 &Q_{\rho_g(k)}(F\mid q_j(r,t),t)\\
 &\hspace{20pt}=Q_k(F\mid q_j(r,t),t).
 \end{aligned}
 \label{eq:orbit-response-kernel}
\end{equation}
For ordinary fresh sampling this kernel is the distribution from Step 2.
Writing the history explicitly also accommodates repeated contexts and
queries chosen using earlier responses. We do not require different
queries to concern different contexts.

We prove equality of transcript laws by induction on $j$.
It holds at $j=0$ because both transcripts are empty.
Suppose it holds after $j-1$ queries, and call the common law $\eta_{j-1}$.
For an event $E$ of extended transcripts, the probability after the next
query under key $k$ is
\begin{equation}
 \begin{aligned}
 &\int\!\int
 \mathbf 1\{(t,q_j(r,t),a)\in E\}\\
 &\qquad\cdot Q_k(\mathrm da\mid q_j(r,t),t)
 \,\eta_{j-1}(\mathrm dt).
 \end{aligned}
 \label{eq:orbit-transcript-induction}
\end{equation}
The expression under key $\rho_g(k)$ has the same outer measure by the
induction hypothesis, the same query because $r,t$ are fixed, and the
same inner measure by \eqref{eq:orbit-response-kernel}. The two
probabilities are therefore equal, completing the induction.

Stopping causes no exception: at a fixed $r,t$, both experiments make
the same stopping decision. Equivalently, append a fixed stop symbol
at every subsequent round once the attack stops. The preceding induction
then applies to these padded transcripts. Since termination occurs after
finitely many queries almost surely, every event of final transcripts is
the disjoint union of its parts with stopping times $0,1,2,\ldots$.
Equality for each part gives equality for their sum. Thus, for every
transcript event $E$,
\begin{equation}
 \begin{aligned}
 &\Pr[O\in E\mid K=\rho_g(k),R=r]\\
 &\hspace{20pt}=\Pr[O\in E\mid K=k,R=r].
 \end{aligned}
 \label{eq:orbit-transcript-law}
\end{equation}
This proves likelihood invariance for the whole adaptive experiment,
including exact distribution responses, whose laws are point masses.

\smallskip\noindent
\textbf{Step 4: derive invariance of the conditional key distribution.}
Let $V:=(O,R)$ be the attack's complete view, including the private seed.
Although $R$ and $K$ are independent initially, they need not remain
independent after conditioning on $O$. We therefore retain $R$ in $V$.

For an event $C$ of complete views, put
$L_C(k):=\Pr[V\in C\mid K=k]$. Let $\nu$ be the initial law of $R$ and
write $C_r:=\{o:(o,r)\in C\}$. Independence before querying gives
\begin{equation}
 L_C(k)=\int
 \Pr[O\in C_r\mid K=k,R=r],\nu(\mathrm dr).
\end{equation}
Equation~\eqref{eq:orbit-transcript-law} holds for each fixed seed, so
integrating it against this same $\nu$ yields
\begin{equation}
 L_C(\rho_g(k))=L_C(k).
 \label{eq:orbit-view-likelihood}
\end{equation}
For a set of keys $B$, the definition of conditional probability gives
\begin{equation}
 \begin{aligned}
 \Pr[K\in\rho_g(B),V\in C]
 &=\int_{\rho_g(B)}L_C(k)\,\mu(\mathrm dk)\\
 &=\int_B L_C(\rho_g(k))\,\mu(\mathrm dk)\\
 &=\int_B L_C(k)\,\mu(\mathrm dk)\\
 &=\Pr[K\in B,V\in C].
 \end{aligned}
 \label{eq:orbit-joint-invariance}
\end{equation}
The second line changes variables through the bijection $\rho_g$,
and (A1) says that this change preserves $\mu$. The third line uses
\eqref{eq:orbit-view-likelihood}.

For a discrete view $v$ with positive probability, set $C=\{v\}$ in
\eqref{eq:orbit-joint-invariance} and divide by $\Pr[V=v]$. This gives
\begin{equation}
 \Pr[K\in\rho_g(B)\mid V=v]
 =\Pr[K\in B\mid V=v].
 \label{eq:orbit-posterior-invariance}
\end{equation}
For general observations, including real-valued distribution responses,
we use a regular conditional distribution
$\mu_v(B):=\Pr[K\in B\mid V=v]$.
To see why the same conclusion holds, let $P_V$ be the law of $V$ and
rewrite \eqref{eq:orbit-joint-invariance} as
\begin{equation}
 \begin{aligned}
 &\int_C\mu_v(\rho_g(B))\,P_V(\mathrm dv)\\
 &\qquad=\int_C\mu_v(B)\,P_V(\mathrm dv).
 \end{aligned}
\end{equation}
This holds for every view event $C$. Two measurable functions with equal
integrals over every event $C$ are
equal almost everywhere: otherwise, on an event where one exceeds the
other by at least $1/j$ for some integer $j\ge1$, their integrals would
differ. Applying this fact gives \eqref{eq:orbit-posterior-invariance}
almost surely in $v$. The key and observation spaces here are finite or
countable products of discrete and real-valued spaces, on which these
conditional distributions exist. Equality can first be imposed on a
countable generating algebra of key events, then extended to all measurable
events by uniqueness of probability measures. Since $H$ is finite, one
common set of views of probability one works for every $g\in H$.

In particular, at each such view and for every bounded measurable function
$f$ of the key,
\begin{equation}
 \int f(\rho_g(k))\,\mu_v(\mathrm dk)
 =\int f(k)\,\mu_v(\mathrm dk).
 \label{eq:orbit-posterior-integral}
\end{equation}
Indeed, \eqref{eq:orbit-posterior-invariance}, applied to $g^{-1}\in H$,
says that $\rho_g(K)$ and $K$ have the same conditional distribution.
Expectations of the same function under equal distributions coincide.

\smallskip\noindent
\textbf{Step 5: make the decoder's symmetry explicit, including ties.}
For a fixed key and text, define the decoder probabilities
\begin{equation}
 d_k(z,b):=\Pr[D_k(z)=b].
 \label{eq:orbit-decoder-probability}
\end{equation}
If the decoder is deterministic, $d_k(z,b)$ is the indicator of
$D_k(z)=b$. Assumption (A3) states that
\begin{equation}
 d_{\rho_g(k)}(z,\rho_g(b))=d_k(z,b).
 \label{eq:orbit-decoder-equivariant}
\end{equation}
We explain how this condition follows for a common membership-based
scoring rule, and why its tie rule matters.

For a fixed text $z=(z_1,\ldots,z_T)$, let $c_t$ be its scoring contexts
and put
\begin{equation}
 \beta_k(z,b):=
 \bigl(\mathbf 1\{z_t\in G_k^b(c_t)\}\bigr)_{t=1}^T.
\end{equation}
For each $t$, \eqref{eq:equivariance} gives
\begin{equation}
 \mathbf 1\{z_t\in G_{\rho_g(k)}^{\rho_g(b)}(c_t)\}
 =\mathbf 1\{z_t\in G_k^b(c_t)\}.
\end{equation}
Equality at every coordinate gives
\begin{equation}
 \beta_{\rho_g(k)}(z,\rho_g(b))=\beta_k(z,b).
\end{equation}
Suppose each candidate is evaluated by the same public function,
\begin{equation}
 s_k(z,b):=s(z,\beta_k(z,b)).
\end{equation}
Substitution then yields
\begin{equation}
 s_{\rho_g(k)}(z,\rho_g(b))=s_k(z,b).
 \label{eq:orbit-score-equivariance}
\end{equation}
The use of the same function for every candidate is part of this sufficient
condition.  A candidate-dependent public preference is not justified by
key access through memberships alone.

For a maximizing decoder, write
\begin{equation}
 \mathcal E_k(z):=\arg\max_{b\in\mathcal M}s_k(z,b).
\end{equation}
If $b\in\mathcal E_k(z)$, then $s_k(z,b)\ge s_k(z,a)$ for every
$a\in\mathcal M$. Equation~\eqref{eq:orbit-score-equivariance} implies
\begin{equation}
 s_{\rho_g(k)}(z,\rho_g(b))
 \ge s_{\rho_g(k)}(z,\rho_g(a))
 \quad\text{for every }a.
\end{equation}
Because $\rho_g$ is a bijection on messages, $\rho_g(a)$ ranges over all
candidates. Thus $\rho_g(b)\in\mathcal E_{\rho_g(k)}(z)$.
This proves one inclusion in
\begin{equation}
 \mathcal E_{\rho_g(k)}(z)=\rho_g(\mathcal E_k(z)).
 \label{eq:orbit-extrema}
\end{equation}
For the reverse inclusion, take $b'\in\mathcal E_{\rho_g(k)}(z)$ and
apply the same argument to the key $\rho_g(k)$ and the element $g^{-1}$.
It gives $\rho_{g^{-1}}(b')\in\mathcal E_k(z)$, hence
$b'\in\rho_g(\mathcal E_k(z))$ as required.
For a minimizing decoder, replace each score inequality above by
$\le$, and both inclusions follow in exactly the same way.

If the extremum is unique, \eqref{eq:orbit-extrema} relabels that unique
output, proving \eqref{eq:orbit-decoder-equivariant}.
For a finite nonempty tied set, uniform tie breaking gives
\begin{equation}
 d_k(z,b)=
 \frac{\mathbf 1\{b\in\mathcal E_k(z)\}}{|\mathcal E_k(z)|}.
\end{equation}
Equation~\eqref{eq:orbit-extrema} preserves membership in the tied set
and its cardinality, so it again gives
\eqref{eq:orbit-decoder-equivariant}. If instead the decoder abstains
whenever there is a tie, both relabeled experiments abstain, since their
tied sets have the same size. With $\rho_g(\bot)=\bot$, this also
satisfies (A3). A deterministic preference for the largest or smallest
label need not do so. For other decoder architectures, (A3) is the
condition required, since (P2) alone does not establish their tie behavior.

\smallskip\noindent
\textbf{Step 6: show equality of conditional decoding probabilities.}
Fix a complete view $v=(o,r)$ in the probability-one set from Step 4.
The attack's rewritten text
\begin{equation}
 z:=A(o,r)
\end{equation}
is now fixed. Relabeling a key in its conditional distribution does
not change $o$, $r$, or this text. For every message $b$, put
\begin{equation}
 \begin{aligned}
 p_v(b)
 &:=\Pr[D_K(A(O,R))=b\mid V=v]\\
 &=\int d_k(z,b)\,\mu_v(\mathrm dk).
 \end{aligned}
 \label{eq:orbit-output-posterior}
\end{equation}
The second line averages the decoder's probabilities over the conditional
key distribution. It uses the fact that the decoder's fresh randomness
is independent of the view once its key and input are fixed.

For any $g\in H$, compute
\begin{equation}
 \begin{aligned}
 p_v(\rho_g(b))
 &=\int d_k(z,\rho_g(b))\,\mu_v(\mathrm dk)\\
 &=\int d_{\rho_g(k)}(z,\rho_g(b))\,\mu_v(\mathrm dk)\\
 &=\int d_k(z,b)\,\mu_v(\mathrm dk)\\
 &=p_v(b).
 \end{aligned}
 \label{eq:orbit-equal-probabilities}
\end{equation}
The first and last lines use \eqref{eq:orbit-output-posterior}.
The second line applies \eqref{eq:orbit-posterior-integral} to the
bounded function $f(k)=d_k(z,\rho_g(b))$.
The third line applies \eqref{eq:orbit-decoder-equivariant} at each key.
Notice that the target is not relabeled in the actual attack.  The
calculation compares the probabilities of different decoder outputs
for the same fixed text $z$.

\smallskip\noindent
\textbf{Step 7: sum over the distinct messages in the orbit.}
By definition of $\mathcal B$, each $b\in\mathcal B$ has the form
$b=\rho_g(m')$ for some $g\in H$. Substituting $m'$ for $b$ in
\eqref{eq:orbit-equal-probabilities} therefore gives
$p_v(b)=p_v(m')$ for every $b\in\mathcal B$.
There are $N$ distinct messages in this set. The events that the
decoder outputs each of these messages are pairwise disjoint: one
decoder run returns at most one message. Consequently,
\begin{equation}
 \begin{aligned}
 Np_v(m')
 &=\sum_{b\in\mathcal B}p_v(b)\\
 &=\Pr[D_K(z)\in\mathcal B\mid V=v]\\
 &\le1.
 \end{aligned}
 \label{eq:orbit-sum}
\end{equation}
We sum over distinct messages, not group elements: several relabelings
can send $m'$ to the same message and must not be counted repeatedly.
Dividing by $N\ge1$ proves
\begin{equation}
 \Pr[D_K(A(O,R))=m'\mid O,R]\le\frac1N.
 \label{eq:orbit-full-view-bound}
\end{equation}
This inequality holds almost surely in $(O,R)$.
The inequality in \eqref{eq:orbit-sum} allows the decoder to output
messages outside $\mathcal B$ or to abstain. Thus we have not assumed
that its total probability on this orbit is one.

\smallskip\noindent
\textbf{Step 8: average over the attack's private randomness.}
Let $E:=\{D_K(A(O,R))=m'\}$ be the success event.
Because conditioning on $(O,R)$ includes conditioning on $O$, the law
of iterated conditional expectation gives
\begin{equation}
 \begin{aligned}
 \Pr[E\mid O]
 &=\mathbb E[\mathbf 1_E\mid O]\\
 &=\mathbb E\!\left[
      \mathbb E[\mathbf 1_E\mid O,R]\,\middle|\,O\right]\\
 &=\mathbb E\!\left[\Pr[E\mid O,R]\,\middle|\,O\right]\\
 &\le\mathbb E\!\left[\frac1N\,\middle|\,O\right]\\
 &=\frac1N.
 \end{aligned}
\end{equation}
The inequality substitutes \eqref{eq:orbit-full-view-bound} and uses
monotonicity of conditional expectation. The last equality holds
because $M$ and $m'$ are fixed, so their orbit size $N$ is fixed.
This averaging uses the actual conditional law of $R$ given $O$,
and it does not assume that the seed remains independent of the key
after observation. Replacing $N$ by its definition in
\eqref{eq:orbit-proof-size} gives the claimed general bound.
\end{proof}

\paragraph{When the ideal-key assumption applies.}
Assumption (A1) is a distributional requirement, not a consequence of
the algebraic identity \eqref{eq:equivariance} alone. In the binary
position--color idealization, at each context the position is uniform
on $[L]$ and the balanced two-color partition is uniform, independently
of that position and of the assignments at other contexts. A position
permutation and per-position color flips define a bijection of the
possible (position, partition) pairs. Each pair originally has the same
probability, so the relabeled pair is again uniform. The transformation
acts separately at each context, preserving independence across
contexts, and repeated contexts still read the same assignment.
This proves invariance of the entire ideal key table. For a concrete
finite-seed pseudorandom implementation, a uniform seed by itself does
not prove that the induced table distribution has this exact symmetry.
The information-theoretic conclusion above uses (A1).

\paragraph{Why the condition on ties cannot be dropped.}
Consider two-bit messages, a single controlled message $(0,1)$, and
the relabeling
\begin{equation}
 \rho_g(b_1,b_2)=(1-b_2,1-b_1).
\end{equation}
This is a position swap together with a flip at each position in the
binary group described in the main text. It fixes $(0,1)$, so $g\in H$,
but it interchanges $(0,0)$ and $(1,1)$. With one controlled message,
a stabilizing relabeling has its color flips determined by its position
permutation. There are just two position permutations here: the identity
gives the identity relabeling, and the swap gives $g$. Therefore
\begin{equation}
 \operatorname{Orb}_{H}((1,1))=\{(1,1),(0,0)\}.
\end{equation}
On an empty text all vote counts are zero. A decoder that resolves every
tied bit toward $1$ returns $(1,1)$ for every key. Its success probability
for that target is $1$, exceeding the purported bound $1/2$.
The equivariance assumption excludes precisely this failure.
The example concerns exact decoding, and it does not claim that an empty
text passes watermark acceptance or preserves a source's content.

\paragraph{Source texts and unlabelled observations.}
The attacker of Section~\ref{sec:problem-formulation} also observes the
source $y$.
To apply the theorem, include it in the observation:
\begin{equation}
 O=(y,O_M),\qquad A(O,R)=A(y,m',O_M,R).
\end{equation}
Here the fixed target is left implicit on the left-hand side.
If $y$ is generated under a controlled message, Step 2 applies directly.
If $y$ is generated independently of the key and the account experiment,
it also preserves the likelihood invariance.
A source under a different unknown message requires an
additional assumption, because simply not revealing its message does not imply
symmetry.

For example, let $U=(U_1,\ldots,U_s)$ be the unknown messages of all
additional watermarked texts. Define the joint relabeling
\begin{equation}
 (K,U)\longmapsto
 (\rho_g(K),\rho_g(U_1),\ldots,\rho_g(U_s)).
\end{equation}
Suppose their joint prior is invariant under this map for every $g\in H$.
An invariant key prior with independent uniform $U_j$ on a finite message
space is one sufficient choice: the map permutes that message space,
so it preserves each uniform distribution and their product.
For each additional text, \eqref{eq:equivariance} gives
\begin{equation}
 G_{\rho_g(k)}^{\rho_g(u_j)}(c)=G_k^{u_j}(c).
\end{equation}
Thus the response likelihood is invariant when key and hidden messages
are relabeled together. Step 3 applies to this enlarged hidden state,
and Step 4 gives invariance of its posterior. Taking a posterior event
of the form $\{K\in B,\ U\in\mathcal M^s\}$ then gives invariance of
the marginal posterior of $K$, since the full message space maps to
itself. Steps 5--8 therefore prove the same bound conditional on all
these observations. If an extra source message is known, it should
instead be added to $M$, and the orbit recomputed under the resulting
smaller stabilizer. An arbitrary fixed or nonuniform unknown source
message is not covered without the stated joint symmetry.

\paragraph{Content constraints, acceptance, and the meaning of the bound.}
The proof applies to every output text $A(O,R)$, so restricting the
attacker to content-preserving rewrites of $y$, as condition (ii) of its
goal in Section~\ref{sec:problem-formulation} demands, cannot invalidate
the upper bound when $y$ is included in $O$ as above. A deployment may also apply a keyed
acceptance test $\operatorname{Acc}_K(z)\in\{0,1\}$, which decides whether a
text is watermarked at all, before it attributes the text to the decoded
message, and the attacker's goal in Section~\ref{sec:problem-formulation}
omits this test. The bound
covers it nonetheless: for $Z:=A(O,R)$,
\begin{equation}
 \begin{aligned}
 &\Pr[D_K(Z)=m',\ \operatorname{Acc}_K(Z)=1\mid O]\\
 &\qquad\le\Pr[D_K(Z)=m'\mid O]\le\frac1N.
 \end{aligned}
\end{equation}
The first inequality follows because the event on the first line is
a subset of the exact-decoding event. It requires no symmetry of the
acceptance rule when $D_K$ itself satisfies (A3). This is a bound on
joint success and acceptance, not on the success probability conditional
on acceptance.

For a released decoder with a label-dependent tie rule, apply the theorem
instead to the decoder that abstains whenever the relevant extremum is
not unique. On texts without such ties the two decoders agree, so the
joint event of exact success and no ties is bounded by $1/N$.
To transfer that bound to accepted outputs of the released decoder,
one must establish that acceptance excludes all the relevant ties,
and the presence of an acceptance threshold alone does not establish this.

Finally, when $N=1$ the bound is $1$ and places no restriction on success.
When $N>1$, Step 7 identifies the obstruction: all $N$ orbit messages
have equal conditional decoding probability. The theorem supplies an
upper bound, not an attack attaining it.  In particular, a singleton
orbit alone does not prove that a content-preserving successful rewrite
exists.

\section{Binary ambiguity count}
\label{app:closed-orbit-proof}

We make the binary position--color setting of Section~\ref{sec:method}
explicit, verify that its relabelings form a group acting as
\eqref{eq:equivariance} requires, determine the stabilizer $\mathcal H_M$
of the controlled messages exactly, and count the orbit of the target.
Throughout, $(u+w)\bmod 2$ denotes the sum of the bits $u$ and $w$
reduced modulo~2, so that $(u+u)\bmod 2=0$.  Since
$\bigl((u+w)\bmod 2+z\bigr)\bmod 2=(u+w+z)\bmod 2$, a reduction inside a
longer sum may be postponed to the end, and we do so without comment.
On bit vectors the reduction acts coordinatewise.

\paragraph{Closed-form count.}
Fix the following.
\begin{enumerate}[label=\textnormal{(S\arabic*)},leftmargin=*,labelsep=0.3em]
\item \textbf{Messages, keys, and sets.}
The message space is $\mathcal M=\{0,1\}^L$.  A key $K$ assigns to every
context $c$ a position $J_K(c)\in[L]$ and a coloring
$C_K(c,\cdot)\colon\mathcal V\to\{0,1\}$, and the green list of message
$m$ at $c$ is $G_K^m(c)=\{v:C_K(c,v)=m_{J_K(c)}\}$, as in
\eqref{eq:binary-label-set}.
\item \textbf{Relabelings.}
An element of $\mathcal G$ is a pair $g=(\sigma,f)$ of a permutation
$\sigma\in\mathfrak S_L$ of the positions and a flip vector
$f\in\{0,1\}^L$.  It acts on messages and keys as follows, where the first
line holds for every $j\in[L]$,
\begin{equation}
 \begin{aligned}
  \bigl[\rho_g(m)\bigr]_{\sigma(j)}&=(m_j+f_j)\bmod 2,\\
  J_{\rho_g(K)}(c)&=\sigma\bigl(J_K(c)\bigr),\\
  C_{\rho_g(K)}(c,v)&=\bigl(C_K(c,v)+f_{J_K(c)}\bigr)\bmod 2.
 \end{aligned}
 \label{eq:app-co-action}
\end{equation}
Thus the bit at position $j$ moves to position $\sigma(j)$ and is flipped
exactly when $f_j=1$, and the key renames its positions by $\sigma$ and
exchanges the two color names at every context whose position $j$ has
$f_j=1$.  The identity is $e=(\mathrm{id},0)$, and the product of
$g=(\sigma_g,f^g)$ and $h=(\sigma_h,f^h)$ is
\begin{equation}
 \begin{aligned}
 g\circ h&=\bigl(\sigma_g\circ\sigma_h,\,f^{g\circ h}\bigr),\\
 f^{g\circ h}_j&=\bigl(f^h_j+f^g_{\sigma_h(j)}\bigr)\bmod 2.
 \end{aligned}
 \label{eq:app-co-product}
\end{equation}
\item \textbf{Signatures and twin classes.}
For the controlled messages $M=(m^{(1)},\ldots,m^{(n)})$ and every
position $j$, the account signature is $x_j=(x_{j,2},\ldots,x_{j,n})$ with
$x_{j,i}=(m^{(i)}_j+m^{(1)}_j)\bmod 2$, as in \eqref{eq:account-signature}.
Positions with equal signatures are twins, $\mathcal T$ is the partition of
$[L]$ into twin classes, and for the target $m'$ the relative bits are
$r_j=(m'_j+m^{(1)}_j)\bmod 2$, with $a_T=\sum_{j\in T}r_j$ for
$T\in\mathcal T$.
\end{enumerate}
Then $(\mathcal G,\circ)$ is a finite group, the action
\eqref{eq:app-co-action} satisfies \eqref{eq:equivariance}, and the orbit
of $m'$ under the stabilizer $\mathcal H_M$ of \eqref{eq:stabilizer} has
size
\begin{equation}
 |\operatorname{Orb}_{\mathcal H_M}(m')|=\prod_{T\in\mathcal T}\binom{|T|}{a_T}.
 \label{eq:app-co-claim}
\end{equation}

\begin{proof}
We first check that \eqref{eq:app-co-action} is a group action with the
equivariance property, so that the general bound of
Appendix~\ref{app:orbit-proof} applies.  We then determine
$\mathcal H_M$ exactly, pass to coordinates relative to the reference
message, and count.

\smallskip\noindent
\textbf{Step 1: verify the group action and the equivariance identity.}
The identity acts trivially: with $e=(\mathrm{id},0)$,
\eqref{eq:app-co-action} gives $[\rho_e(m)]_j=m_j$, $J_{\rho_e(K)}=J_K$,
and $C_{\rho_e(K)}=C_K$, so $\rho_e(x)=x$ for every message and key $x$.
For the composition, fix $g$, $h$, and a message $m$.  Applying
\eqref{eq:app-co-action} twice,
\begin{equation}
 \begin{aligned}
 &\bigl[\rho_g(\rho_h(m))\bigr]_{\sigma_g(\sigma_h(j))}\\
 &\quad=\bigl(\bigl[\rho_h(m)\bigr]_{\sigma_h(j)}+f^g_{\sigma_h(j)}\bigr)\bmod 2\\
 &\quad=\bigl(m_j+f^h_j+f^g_{\sigma_h(j)}\bigr)\bmod 2\\
 &\quad=\bigl(m_j+f^{g\circ h}_j\bigr)\bmod 2,
 \end{aligned}
\end{equation}
which is $[\rho_{g\circ h}(m)]_{(\sigma_g\circ\sigma_h)(j)}$ by
\eqref{eq:app-co-action} and \eqref{eq:app-co-product}.  For a key,
\begin{equation}
 \begin{aligned}
 &J_{\rho_g(\rho_h(K))}(c)
  =\sigma_g\bigl(J_{\rho_h(K)}(c)\bigr)\\
 &\quad=\sigma_g\bigl(\sigma_h(J_K(c))\bigr),\\
 &C_{\rho_g(\rho_h(K))}(c,v)\\
 &\quad=\bigl(C_{\rho_h(K)}(c,v)+f^g_{J_{\rho_h(K)}(c)}\bigr)\bmod 2\\
 &\quad=\bigl(C_K(c,v)+f^h_{J_K(c)}+f^g_{\sigma_h(J_K(c))}\bigr)\bmod 2\\
 &\quad=\bigl(C_K(c,v)+f^{g\circ h}_{J_K(c)}\bigr)\bmod 2,
 \end{aligned}
\end{equation}
which is the action of $g\circ h$ on $K$.  Hence
$\rho_{g\circ h}=\rho_g\circ\rho_h$ on messages and keys, and together with
$\rho_e=\mathrm{id}$ these are the first two identities of
\eqref{eq:equivariance}.  Every $g=(\sigma,f)$ has the inverse
$g^{-1}=(\sigma^{-1},f')$ with $f'_{\sigma(j)}=f_j$, that is,
$f'_k=f_{\sigma^{-1}(k)}$.  Indeed, by \eqref{eq:app-co-product} the
product $g\circ g^{-1}$ has permutation $\sigma\circ\sigma^{-1}=\mathrm{id}$
and flips $(f'_k+f_{\sigma^{-1}(k)})\bmod 2=0$ at every $k$, and
$g^{-1}\circ g$ has permutation $\mathrm{id}$ and flips
$(f_j+f'_{\sigma(j)})\bmod 2=0$ at every $j$, so both products equal $e$.
The action on messages is faithful: if $\rho_g(m)=m$ for every $m$, then
$m=0$ gives $f_j=[\rho_g(0)]_{\sigma(j)}=0$ for every $j$, and then
$[\rho_g(m)]_{\sigma(j)}=m_j$ for every $m$ forces $\sigma(j)=j$, as the
unit vector $m$ with a single one at position $j$ shows.  Therefore
$g\mapsto\rho_g$ is an injective map from $\mathcal G$ to the bijections of
$\{0,1\}^L$ that sends $\circ$ to composition, $e$ to the identity, and
$g^{-1}$ to the inverse, so $\circ$ inherits associativity from
composition and $(\mathcal G,\circ)$ is a group.  It has $2^L\,L!$
elements and is therefore finite.
Finally, for the last identity of \eqref{eq:equivariance} put $j=J_K(c)$.
Then $J_{\rho_g(K)}(c)=\sigma(j)$, and by \eqref{eq:app-co-action} the
membership condition $C_{\rho_g(K)}(c,v)=[\rho_g(m)]_{\sigma(j)}$ reads
\begin{equation}
 \bigl(C_K(c,v)+f_j\bigr)\bmod 2=(m_j+f_j)\bmod 2.
 \label{eq:app-co-set-cond}
\end{equation}
Adding the same bit $f_j$ to both sides of an equation modulo~2 does not
change its solutions, so \eqref{eq:app-co-set-cond} holds exactly when
$C_K(c,v)=m_j$.  Hence
\begin{equation}
 \begin{aligned}
 G_{\rho_g(K)}^{\rho_g(m)}(c)
 &=\bigl\{v: C_{\rho_g(K)}(c,v)=[\rho_g(m)]_{\sigma(j)}\bigr\}\\
 &=\bigl\{v: C_K(c,v)=m_j\bigr\}=G_K^m(c).
 \end{aligned}
\end{equation}
By Step~1 of Appendix~\ref{app:orbit-proof},
which used only the identities \eqref{eq:equivariance}, every $\rho_g$ is
therefore a bijection with inverse $\rho_{g^{-1}}$, and $\mathcal H_M$ is a
subgroup of $\mathcal G$.

\smallskip\noindent
\textbf{Step 2: determine the stabilizer.}
Let $g=(\sigma,f)\in\mathcal G$.  By \eqref{eq:app-co-action},
$\rho_g(m^{(i)})=m^{(i)}$ holds exactly when
$[\rho_g(m^{(i)})]_{\sigma(j)}=m^{(i)}_{\sigma(j)}$ for every $j$, that is,
when
\begin{equation}
 (m^{(i)}_j+f_j)\bmod 2=m^{(i)}_{\sigma(j)}\qquad(j\in[L]).
 \label{eq:app-co-stab}
\end{equation}
We use \eqref{eq:app-co-stab} first for the reference account and then for
the others.  For $i=1$, adding $m^{(1)}_j$ to both sides of
\eqref{eq:app-co-stab} modulo~2 and using $(u+u)\bmod 2=0$ gives
$f_j=(m^{(1)}_j+m^{(1)}_{\sigma(j)})\bmod 2$ for every $j$.  Hence the
flips of an element of $\mathcal H_M$ are determined by its permutation,
and we write
\begin{equation}
 f^{\sigma}_j:=\bigl(m^{(1)}_j+m^{(1)}_{\sigma(j)}\bigr)\bmod 2\qquad(j\in[L]).
 \label{eq:app-co-fsigma}
\end{equation}
For $i\geq2$, substitute $f=f^{\sigma}$ into \eqref{eq:app-co-stab} and add
$m^{(1)}_{\sigma(j)}$ to both sides modulo~2.  By
\eqref{eq:app-co-fsigma} and $(u+u)\bmod 2=0$, the left side becomes
\begin{equation}
 \begin{aligned}
 &\bigl(m^{(i)}_j+f^{\sigma}_j+m^{(1)}_{\sigma(j)}\bigr)\bmod 2\\
 &\quad=\bigl(m^{(i)}_j+m^{(1)}_j\bigr)\bmod 2=x_{j,i},
 \end{aligned}
\end{equation}
and the right side becomes
\begin{equation}
 \bigl(m^{(i)}_{\sigma(j)}+m^{(1)}_{\sigma(j)}\bigr)\bmod 2=x_{\sigma(j),i},
\end{equation}
where both final equalities are the definition of the signatures in (S3).
Adding the same bit to both sides of an equation modulo~2 is reversible,
so for $i\geq2$ condition \eqref{eq:app-co-stab} is equivalent to
$x_{j,i}=x_{\sigma(j),i}$.
Requiring this for every $i\in\{2,\ldots,n\}$ says $x_{\sigma(j)}=x_j$,
that is, $\sigma(j)$ is a twin of $j$.  Every step above is an
equivalence, so we have shown
\begin{equation}
 \begin{aligned}
 (\sigma,f)\in\mathcal H_M
 \iff{}& f=f^{\sigma}\text{ and}\\
 & x_{\sigma(j)}=x_j\text{ for every }j\in[L].
 \end{aligned}
 \label{eq:app-co-stab-char}
\end{equation}
Let $\Sigma_{\mathcal T}$ be the set of permutations $\sigma\in\mathfrak S_L$
with $x_{\sigma(j)}=x_j$ for every $j$.  Such a $\sigma$ maps every twin
class $T$ into itself, and since $\sigma$ is injective and $T$ is finite,
$\sigma(T)=T$.  Thus $\sigma$ restricts to a permutation $\sigma_T$ of each
class.  Conversely, every family $(\sigma_T)_{T\in\mathcal T}$ of
permutations $\sigma_T\in\mathfrak S(T)$ defines an element of
$\Sigma_{\mathcal T}$ by $\sigma(j)=\sigma_T(j)$ for $j\in T$, because the
classes partition $[L]$ and $\sigma(j)$ then has the signature of $j$.
Identifying a permutation with its tuple of restrictions, we obtain
\begin{equation}
 \begin{aligned}
 \Sigma_{\mathcal T}&=\prod_{T\in\mathcal T}\mathfrak S(T),\\
 \mathcal H_M&=\{(\sigma,f^{\sigma}):\sigma\in\Sigma_{\mathcal T}\},
 \end{aligned}
 \label{eq:app-co-stab-set}
\end{equation}
where the second line restates \eqref{eq:app-co-stab-char}, and the map
$\sigma\mapsto(\sigma,f^{\sigma})$ is a bijection from $\Sigma_{\mathcal T}$
onto $\mathcal H_M$.

\smallskip\noindent
\textbf{Step 3: pass to coordinates relative to the reference message.}
For a message $m$ define $r(m):=(m+m^{(1)})\bmod 2$ coordinatewise, so that
$R:=r(m')=(r_1,\ldots,r_L)$ is the vector of relative target bits of (S3).
The map $r$ is a bijection of $\{0,1\}^L$, since
$r(r(m))=(m+m^{(1)}+m^{(1)})\bmod 2=m$.  For $\sigma\in\mathfrak S_L$ and
a bit vector $u$, let $\pi_\sigma(u)$ be the vector whose coordinates are
those of $u$ moved by $\sigma$,
\begin{equation}
 \bigl[\pi_\sigma(u)\bigr]_{\sigma(j)}:=u_j\qquad(j\in[L]).
 \label{eq:app-co-pi}
\end{equation}
Now let $g=(\sigma,f^{\sigma})\in\mathcal H_M$.  By \eqref{eq:app-co-action},
\eqref{eq:app-co-fsigma}, and $(u+u)\bmod 2=0$,
\begin{equation}
 \begin{aligned}
 &\bigl[r(\rho_g(m'))\bigr]_{\sigma(j)}\\
 &\quad=\bigl(\bigl[\rho_g(m')\bigr]_{\sigma(j)}+m^{(1)}_{\sigma(j)}\bigr)\bmod 2\\
 &\quad=\bigl(m'_j+f^{\sigma}_j+m^{(1)}_{\sigma(j)}\bigr)\bmod 2\\
 &\quad=\bigl(m'_j+m^{(1)}_j\bigr)\bmod 2=r_j,
 \end{aligned}
 \label{eq:app-co-relative}
\end{equation}
that is, $r(\rho_g(m'))=\pi_\sigma(R)$: an element of the stabilizer moves
the relative bits of the target along its permutation and flips none of
them.  Let $\mathcal O:=\{\pi_\sigma(R):\sigma\in\Sigma_{\mathcal T}\}$.
Since $r$ is a bijection and, by \eqref{eq:app-co-stab-set}, $\sigma$
ranges over $\Sigma_{\mathcal T}$ as $g$ ranges over $\mathcal H_M$,
\begin{equation}
 \operatorname{Orb}_{\mathcal H_M}(m')=r^{-1}(\mathcal O),
 \quad
 |\operatorname{Orb}_{\mathcal H_M}(m')|=|\mathcal O|.
 \label{eq:app-co-orbit-rel}
\end{equation}

\smallskip\noindent
\textbf{Step 4: decompose along the twin classes.}
Write $u|_T\in\{0,1\}^T$ for the restriction of a bit vector $u$ to the
coordinates in $T$.  Because the classes partition $[L]$, the map
$u\mapsto(u|_T)_{T\in\mathcal T}$ is a bijection from $\{0,1\}^L$ onto
$\prod_{T}\{0,1\}^T$.  For $\sigma\in\Sigma_{\mathcal T}$ and $k\in T$ we
have $\sigma^{-1}(k)\in T$ by Step~2, so \eqref{eq:app-co-pi} gives
$[\pi_\sigma(u)]_k=u_{\sigma^{-1}(k)}=[\pi_{\sigma_T}(u|_T)]_k$, that is,
\begin{equation}
 \pi_\sigma(u)\big|_T=\pi_{\sigma_T}\bigl(u|_T\bigr)\qquad(T\in\mathcal T).
 \label{eq:app-co-restrict}
\end{equation}
Put $\mathcal O_T:=\{\pi_\tau(R|_T):\tau\in\mathfrak S(T)\}$.  We claim
that $\mathcal O$ consists exactly of the vectors $u$ with $u|_T\in\mathcal
O_T$ for every $T$.  If $u=\pi_\sigma(R)$ with $\sigma\in\Sigma_{\mathcal T}$,
then \eqref{eq:app-co-restrict} gives $u|_T=\pi_{\sigma_T}(R|_T)\in\mathcal
O_T$ for every $T$.  Conversely, if $u|_T\in\mathcal O_T$ for every $T$,
choose $\tau_T\in\mathfrak S(T)$ with $\pi_{\tau_T}(R|_T)=u|_T$, and let
$\sigma\in\Sigma_{\mathcal T}$ be the permutation with restrictions
$\sigma_T=\tau_T$, which exists by \eqref{eq:app-co-stab-set}.  Then
\eqref{eq:app-co-restrict} gives $\pi_\sigma(R)|_T=u|_T$ for every $T$, so
$\pi_\sigma(R)=u$ and $u\in\mathcal O$.  Hence
$u\mapsto(u|_T)_{T\in\mathcal T}$ maps $\mathcal O$ bijectively onto
$\prod_T\mathcal O_T$, and
\begin{equation}
 |\mathcal O|=\prod_{T\in\mathcal T}|\mathcal O_T|.
 \label{eq:app-co-product-count}
\end{equation}

\smallskip\noindent
\textbf{Step 5: count within one twin class.}
Fix $T\in\mathcal T$, put $k=|T|$, and let $w^\star=R|_T$, which has
exactly $a_T$ coordinates equal to one by the definition of $a_T$ in (S3).
We claim
\begin{equation}
 \mathcal O_T=\bigl\{w\in\{0,1\}^T:\ w\text{ has exactly } a_T\text{ ones}\bigr\}.
 \label{eq:app-co-class}
\end{equation}
Moving coordinates does not change how many of them equal one, so
$\mathcal O_T$ is contained in the right side.  Conversely, let $w$ have
exactly $a_T$ ones, and let $P=\{j\in T:w^\star_j=1\}$ and
$Q=\{j\in T:w_j=1\}$, two subsets of $T$ with $|P|=|Q|=a_T$ and hence
$|T\setminus P|=|T\setminus Q|=k-a_T$.  Choose a bijection from $P$ onto
$Q$ and a bijection from $T\setminus P$ onto $T\setminus Q$, and let $\tau$
be their union, a permutation of $T$.  For $j\in P$,
\eqref{eq:app-co-pi} gives $[\pi_\tau(w^\star)]_{\tau(j)}=w^\star_j=1$
with $\tau(j)\in Q$, and for $j\in T\setminus P$ it gives
$[\pi_\tau(w^\star)]_{\tau(j)}=0$ with $\tau(j)\notin Q$.  Since $\tau$ is
a bijection of $T$, the vector $\pi_\tau(w^\star)$ equals one exactly on
$Q$, so $\pi_\tau(w^\star)=w$ and $w\in\mathcal O_T$.  The right side of
\eqref{eq:app-co-class} is in bijection with the $a_T$-element subsets of
the $k$-element set $T$, so
\begin{equation}
 |\mathcal O_T|=\binom{|T|}{a_T}.
 \label{eq:app-co-class-count}
\end{equation}

\smallskip\noindent
\textbf{Step 6: conclude.}
Substituting \eqref{eq:app-co-class-count} into
\eqref{eq:app-co-product-count} and the result into
\eqref{eq:app-co-orbit-rel} gives
\begin{equation}
 |\operatorname{Orb}_{\mathcal H_M}(m')|=|\mathcal O|
 =\prod_{T\in\mathcal T}\binom{|T|}{a_T},
\end{equation}
which is \eqref{eq:app-co-claim}.  Together with
\eqref{eq:ambiguity-count}, this identifies the main text's $N(M,m')$ with
the number of indistinguishable messages.  As in
Step~7 of Appendix~\ref{app:orbit-proof}, the count is over distinct
messages: several elements of $\mathcal H_M$ may send $m'$ to the same
message, which is why we counted the vectors in $\mathcal O$ rather than
the permutations in $\Sigma_{\mathcal T}$.  Substituting this identity into
the general bound of Appendix~\ref{app:orbit-proof} gives
\eqref{eq:targeting-bound} and completes the proof of
Theorem~\ref{thm:targeting-bound}.
\end{proof}

\paragraph{Consequences used in the main text.}
For $k\geq1$ the binomial coefficient $\binom{k}{a}$ equals $1$ exactly
when $a\in\{0,k\}$, and a product of positive integers equals $1$ exactly
when every factor does.  Hence the orbit is trivial if and only if
$a_T\in\{0,|T|\}$ for every class, that is, if and only if the relative
bits $r_j$ are constant on every twin class.  Choosing this constant
independently for each of the $|\mathcal T|$ classes gives
$2^{|\mathcal T|}$ vectors $r$, and $m'=(r+m^{(1)})\bmod 2$ is a bijection
between these vectors and the targets with a trivial orbit, so exactly
$2^{|\mathcal T|}$ targets are reachable.  When every position has a
distinct signature, every class is a singleton and all $2^L$ targets are
reachable.  Finally, \eqref{eq:app-co-relative} is the reason behind the
construction \eqref{eq:target-set}: for a target with a trivial orbit, $r$
takes a single value $r_T$ on each class, so the attacker needs only the
class $T(c)$ of a context to decide whether the target's set at $c$ is
$G_1(c)$ or its complement.

\section{Proof of certificate validity}
\label{app:certificate-proof}

We prove Theorem~\ref{thm:certificate-validity} for a fixed observation
$O$ and rewrite $z$.  In the ideal-key model, the hidden positions
$J_K(c)$ are initially independent and uniform on $[L]$ across distinct
contexts.  The controlled-account sets observed at $c$ reveal the account
signature $x_{J_K(c)}$, and therefore the twin class $T(c)$, but not the
position inside that class.  The within-class symmetries proved in
Appendix~\ref{app:orbit-proof} assign equal posterior probability to every
position in $T(c)$.  Independence across distinct contexts is preserved
because the ideal key assigns them independently and the observations expose
only their classes, and repeated occurrences of the same context share the same
assignment.  If no class was identified, the observation gives no position
information and the posterior remains uniform on $[L]$.  Consequently, the
conditional law of the position map $J_K$ on the contexts of $z$ is exactly
$\Pi_O$.

For a key $K$, define the true target vote of a scored pair by
\begin{equation}
 \sigma_K(c,v):=
 \begin{cases}
  +1,&v\in G_K^{m'}(c),\\
  -1,&v\notin G_K^{m'}(c).
 \end{cases}
 \label{eq:app-true-certificate-vote}
\end{equation}
When the target set at $c$ was recovered, the attacker's vote
$\sigma(c,v)$ equals $\sigma_K(c,v)$.  When it was not recovered, the
attacker sets $\sigma(c,v)=-1$, so in every case
\begin{equation}
 \sigma(c,v)\leq \sigma_K(c,v).
 \label{eq:app-conservative-vote}
\end{equation}
For the true assignment $J_K$, let
\begin{equation}
 M_j^K:=\sum_{(c,v)\in\mathcal P(z):\,J_K(c)=j}\sigma_K(c,v)
 \label{eq:app-true-certificate-margin}
\end{equation}
be the decoder's target margin at position $j$.  Equations
\eqref{eq:app-conservative-vote} and
\eqref{eq:app-true-certificate-margin} give
\begin{equation}
 M_j(J_K)\leq M_j^K\qquad(j\in[L]).
 \label{eq:app-margin-domination}
\end{equation}
Therefore, if $E$ holds for the conservative tally, every position is
covered and has $M_j^K>0$.  The strict-majority decoder must then output
$m'$ at every position.  Pointwise in the hidden key,
\begin{equation}
 \mathbf 1\{E(J_K)\}
 \leq \mathbf 1\{D_K(z)=m'\}.
 \label{eq:app-certificate-indicators}
\end{equation}
Taking conditional expectations and using that
$J_K\mid O\sim\Pi_O$ yields
\begin{equation}
 \begin{aligned}
 \operatorname{Cert}(z,m'\mid O)
 &=\Pr_{J\sim\Pi_O}[E]\\
 &=\Pr[E(J_K)\mid O]\\
 &\leq\Pr[D_K(z)=m'\mid O],
 \end{aligned}
\end{equation}
which proves the lower bound.  If every target set used by $z$ was recovered,
then \eqref{eq:app-conservative-vote} is an equality for every pair.  The
strict-majority decoder outputs $m'$ exactly when every position is covered
and has positive target margin, so
\eqref{eq:app-certificate-indicators} is also an equality.  This proves the
equality claim.

\section{The vote bound}
\label{app:theory}

The certificate of Section~\ref{sec:certificate} evaluates the decoder's
vote tally exactly under the attacker's posterior.  The following bound
complements it from the other side: it lower-bounds the success probability
in terms of the true vote counts and vote biases, which the attacker cannot
see, and explains why the constrained rule succeeds once enough contexts
carry a target vote.  We fix the setting with numbered assumptions, state
the bound, and prove it in four steps: the vote probabilities of the two
rewriting rules, the decoder's success as an intersection of per-position
majority events, Hoeffding's inequality for each failure event, and a union
bound over the positions.

\paragraph{Setting.}
Fix a source $y$, a target $m'\in\{0,1\}^L$, and a rewrite
$z=(z_1,\ldots,z_T)$ of $y$ produced by the rule of
Section~\ref{sec:constructive-attack} for any threshold $\tau\geq0$, with
or without the 4-gram block, and assume the following.
\begin{enumerate}[label=\textnormal{(V\arabic*)},leftmargin=*,labelsep=0.3em]
\item \textbf{Decoder.}
The decoder is the binary position--color decoder of
\eqref{eq:binary-label-set}.  Let $\mathcal P(z)$ be the set of distinct
context--token pairs $(c_t,z_t)$ of $z$, each pair counted once however
often it occurs.  Position $j\in[L]$ receives the pairs with $J_K(c)=j$,
and for a color $b\in\{0,1\}$ its vote count is
\begin{equation}
 \begin{aligned}
 V_j(b):=\#\bigl\{(c,v)\in\mathcal P(z):&J_K(c)=j,\\
 &C_K(c,v)=b\bigr\},
 \end{aligned}
 \label{eq:app-vb-votes}
\end{equation}
so that $N_j:=V_j(0)+V_j(1)$ is the number of pairs at position $j$.  The
decoder outputs at position $j$ the color $b$ with $V_j(b)>V_j(1-b)$
whenever such a color exists, and $D_K(z)=m'$ requires this output to be
$m'_j$ at every position.
\item \textbf{Rewrite rule and recovered sets.}
The orbit of $m'$ is trivial and the attacker has recovered every target
set $G_K^{m'}(c)$ exactly, so that at every step $t$ the target-compatible
set is $B_t=G_K^{m'}(c_t)$, since the decoder
of (V1) rewards membership in the target's set.  Token $z_t$ is drawn from
the constrained distribution \eqref{eq:hard-attack} when $q_t\geq\tau$ and
$B_t$ contains an available token, and otherwise from the soft bias
\eqref{eq:soft-attack}, where $p_t$ is the paraphraser's distribution at
step $t$, after the 4-gram block has removed its tokens if the block is
in use, and $q_t=\sum_{u\in B_t}p_t(u)$.
\item \textbf{Vote variables.}
Enumerate the pairs of position $j$ as $(c_{j,s},v_{j,s})$ for
$s=1,\ldots,N_j$, and put
\begin{equation}
 \begin{aligned}
 X_{j,s}&:=\mathbf 1\{C_K(c_{j,s},v_{j,s})=m'_j\}\\
 &\phantom{:}=\mathbf 1\{v_{j,s}\in G_K^{m'}(c_{j,s})\},
 \end{aligned}
 \label{eq:app-vb-x}
\end{equation}
the two forms being equal by \eqref{eq:binary-label-set} because
$J_K(c_{j,s})=j$.
\item \textbf{Independence and bias.}
Conditionally on the vote counts $(N_j)_{j\in[L]}$, the variables
$X_{j,s}$ are independent, and there are numbers $\rho_j>1/2$ with
$\Pr[X_{j,s}=1]\geq\rho_j$ for every $j$ and $s$.  All probabilities
below are conditional on $(N_j)_{j\in[L]}$.
\end{enumerate}
\begin{theorem}[Vote bound]
\label{thm:vote-certificate}
Under (V1) to (V4), a pair drawn by \eqref{eq:hard-attack} has $X_{j,s}=1$
with certainty, a pair drawn by \eqref{eq:soft-attack} at step $t$ has
$X_{j,s}=1$ with probability
$q_te^{\delta_{\mathrm{atk}}}/(q_te^{\delta_{\mathrm{atk}}}+1-q_t)$, and
\begin{equation}
 \Pr[D_K(z)=m']\geq 1-\sum_{j=1}^{L}e^{-2N_j(\rho_j-1/2)^2}.
 \label{eq:vote-bound}
\end{equation}
\end{theorem}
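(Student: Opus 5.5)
The plan follows the four steps announced before the statement: the vote probabilities of the two sampling rules, a reduction of decoding success to per-position majority events, Hoeffding's inequality for each position, and a union bound over positions.

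\textbf{Vote probabilities.} By (V2) the attacker's set is exactly $B_t=G_K^{m'}(c_t)$. Under \eqref{eq:hard-attack}, $r_t$ is supported on $B_t$, so $z_t\in G_K^{m'}(c_t)$ with certainty. By (V3) this means the vote variable of the pair $(c_t,z_t)$ equals one. Under \eqref{eq:soft-attack}, summing $r_t$ over $B_t$ gives mass $q_te^{\delta_{\mathrm{atk}}}$ before normalization. The complement gives mass $1-q_t$ when $p_t$ is the normalized blocked distribution; otherwise I would renormalize $p_t$ first, which does not change $r_t$. Dividing these two masses gives the stated probability. One subtlety is that a pair in $\mathcal P(z)$ may occur at several steps. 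Since the pair is fixed once it occurs, its vote value is the same at each occurrence, so the probability statement refers to the step at which the pair was drawn.

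\textbf{Reduction to majorities.} By (V1), for each $j$ the decoder's output equals $m'_j$ exactly when $V_j(m'_j)>V_j(1-m'_j)$. By \eqref{eq:app-vb-x}, $V_j(m'_j)=\sum_s X_{j,s}$ and $V_j(1-m'_j)=N_j-\sum_s X_{j,s}$. Hence
\[
\{D_K(z)=m'\}=\bigcap_{j}\Bigl\{\textstyle\sum_{s=1}^{N_j}X_{j,s}>N_j/2\Bigr\}.
\]
If $N_j=0$ this event fails, and the bound term for $j$ is $e^0=1$, so the inequality holds trivially. I may therefore assume $N_j\ge1$.

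\textbf{Hoeffding and union bound.} Condition on $(N_j)_j$, where (V4) gives independence. For each $j$, let $S_j=\sum_s X_{j,s}$, whose conditional mean is at least $N_j\rho_j$. The failure event $S_j\le N_j/2$ is contained in $S_j-\mathbb E S_j\le -N_j(\rho_j-1/2)$. Hoeffding's inequality for $[0,1]$-valued independent variables bounds this by $e^{-2N_j(\rho_j-1/2)^2}$. The inequality applies because the $X_{j,s}$ need only be independent, not identically distributed. A union bound over the complements of the $L$ majority events then gives \eqref{eq:vote-bound}.

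\textbf{Main obstacle.} The hard part is not the calculation but justifying the use of (V4). The votes are generated adaptively, and the counts $N_j$ depend on the same randomness. I will not try to derive independence. Instead I will state that the bound is conditional on $(N_j)$ and rests on (V4) exactly as assumed. The first two steps only motivate the choice of $\rho_j$, for example by taking $\rho_j$ to be the minimum soft-bias probability among the position's pairs, or $1$ if every pair used the hard rule.
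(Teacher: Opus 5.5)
Your proposal is correct and follows the paper's proof essentially step for step. The steps are the same: the hard-rule and soft-rule vote probabilities, the reduction to the per-position failure events $\{S_j\le N_j/2\}$ (with $N_j=0$ contributing the trivial term $e^0=1$), Hoeffding with $u=N_j(\rho_j-1/2)$, and the union bound, resting on (V4) as an assumption rather than deriving it. One nit: you state the success event as an equality with $\bigcap_j\{S_j>N_j/2\}$, but only the inclusion $\supseteq$ is needed, and that is what the paper uses. The inclusion also keeps the argument independent of how the decoder resolves ties.
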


\begin{proof}
\smallskip\noindent
\textbf{Step 1: compute the vote probabilities of the two rules.}
Let $t$ be a step whose pair $(c_t,z_t)$ belongs to position $j$, so that
$J_K(c_t)=j$ and $B_t=G_K^{m'}(c_t)$ by (V2).  If $z_t$ is drawn from
\eqref{eq:hard-attack}, then $r_t(v)=0$ for every $v\notin B_t$, so
$z_t\in B_t$ with probability one and the pair's vote variable equals
$\mathbf 1\{z_t\in G_K^{m'}(c_t)\}=1$.  If $z_t$ is drawn from
\eqref{eq:soft-attack}, then $r_t(v)=p_t(v)e^{\delta_{\mathrm{atk}}}/Z_t$
for $v\in B_t$ and $r_t(v)=p_t(v)/Z_t$ for $v\notin B_t$, with the
normalizer
\begin{equation}
 \begin{aligned}
 Z_t&=\sum_{v\in B_t}p_t(v)e^{\delta_{\mathrm{atk}}}+\sum_{v\notin B_t}p_t(v)\\
 &=q_te^{\delta_{\mathrm{atk}}}+1-q_t,
 \end{aligned}
\end{equation}
because $\sum_{v\notin B_t}p_t(v)=1-q_t$.  Summing $r_t$ over $B_t$,
\begin{equation}
 \Pr[z_t\in B_t]
 =\frac{q_te^{\delta_{\mathrm{atk}}}}{q_te^{\delta_{\mathrm{atk}}}+1-q_t},
 \label{eq:app-vb-soft}
\end{equation}
which is the stated probability.  When the 4-gram block removes tokens,
$p_t$ denotes the blocked and renormalized distribution, and both
computations apply unchanged with its $q_t$.  Since
$e^{\delta_{\mathrm{atk}}}>1$, the probability \eqref{eq:app-vb-soft}
exceeds $q_t$, and it exceeds $1/2$ exactly when
$q_te^{\delta_{\mathrm{atk}}}>1-q_t$.

\smallskip\noindent
\textbf{Step 2: express the decoder's success through the vote sums.}
Fix a position $j$ and put $S_j:=\sum_{s=1}^{N_j}X_{j,s}$.  Every pair at
position $j$ has one of the two colors, so \eqref{eq:app-vb-votes} and
\eqref{eq:app-vb-x} give
\begin{equation}
 V_j(m'_j)=S_j,\qquad V_j(1-m'_j)=N_j-S_j.
\end{equation}
By (V1), the decoder outputs $m'_j$ at position $j$ whenever
$V_j(m'_j)>V_j(1-m'_j)$, that is, whenever $S_j>N_j-S_j$, which is
equivalent to $S_j>N_j/2$.  Define the failure event
\begin{equation}
 F_j:=\{S_j\leq N_j/2\}.
 \label{eq:app-vb-fail}
\end{equation}
If $N_j=0$, then $S_j=0$ and $F_j$ is certain, which matches the fact that
an empty position has no majority.  If no $F_j$ occurs, every position
outputs $m'_j$ and $D_K(z)=m'$.  Hence
\begin{equation}
 \begin{aligned}
 \{D_K(z)=m'\}&\supseteq\bigcap_{j=1}^{L}F_j^{\mathrm c},\\
 \Pr[D_K(z)=m']&\geq 1-\Pr\Bigl[\bigcup_{j=1}^{L}F_j\Bigr].
 \end{aligned}
 \label{eq:app-vb-success}
\end{equation}
The inclusion is an equality when the decoder never outputs $m'_j$ from a
tie, which the strict majority rule of (V1) guarantees, but the inequality
is all the proof needs.

\smallskip\noindent
\textbf{Step 3: bound each failure event by Hoeffding's inequality.}
Hoeffding's inequality \citep{hoeffding1963probability} states that for
independent random variables $Y_1,\ldots,Y_N$ with $0\leq Y_s\leq1$ and
$S=\sum_{s=1}^{N}Y_s$,
\begin{equation}
 \Pr\bigl[S-\mathbb E[S]\leq -u\bigr]\leq e^{-2u^2/N}\qquad(u>0).
 \label{eq:app-vb-hoeffding}
\end{equation}
Fix $j$ with $N_j\geq1$.  By (V4) the $X_{j,s}$ are independent, take
values in $\{0,1\}$, and satisfy
\begin{equation}
 \mathbb E[S_j]=\sum_{s=1}^{N_j}\Pr[X_{j,s}=1]\geq N_j\rho_j.
 \label{eq:app-vb-mean}
\end{equation}
On $F_j$ we have $S_j\leq N_j/2$, so by \eqref{eq:app-vb-mean}
\begin{equation}
 S_j-\mathbb E[S_j]\leq\frac{N_j}{2}-N_j\rho_j=-N_j\Bigl(\rho_j-\frac12\Bigr),
\end{equation}
and $u:=N_j(\rho_j-1/2)$ is positive because $\rho_j>1/2$.  Applying
\eqref{eq:app-vb-hoeffding} with $N=N_j$ and this $u$,
\begin{equation}
 \begin{aligned}
 \Pr[F_j]&\leq\Pr\bigl[S_j-\mathbb E[S_j]\leq -u\bigr]\\
 &\leq e^{-2u^2/N_j}=e^{-2N_j(\rho_j-1/2)^2}.
 \end{aligned}
 \label{eq:app-vb-perpos}
\end{equation}
If $N_j=0$, then $\Pr[F_j]=1=e^{0}$, so \eqref{eq:app-vb-perpos} holds
for every position.

\smallskip\noindent
\textbf{Step 4: combine the positions.}
The union bound gives
$\Pr[\bigcup_{j}F_j]\leq\sum_{j=1}^{L}\Pr[F_j]$, and inserting
\eqref{eq:app-vb-perpos} into \eqref{eq:app-vb-success} yields
\begin{equation}
 \Pr[D_K(z)=m']\geq 1-\sum_{j=1}^{L}e^{-2N_j(\rho_j-1/2)^2},
\end{equation}
which is \eqref{eq:vote-bound}.  The bound is
informative only when the sum is below one, and it needs no assumption on
how the decoder resolves ties, because Step~2 counted a tie as a failure.
\end{proof}

\paragraph{Reading the bound.}
Under the rule with $\tau=0$ and every set recovered, Step~1 gives
$X_{j,s}=1$ for every pair, so $\rho_j=1$ is admissible and the bound reads
$1-\sum_j e^{-N_j/2}$, a smooth relaxation of the exact statement that the
decoder then succeeds as soon as every position receives a pair.  Under the
fallback \eqref{eq:soft-attack}, $\rho_j$ can be taken as the smallest
probability \eqref{eq:app-vb-soft} over the fallback steps of position $j$,
provided it exceeds $1/2$, and the bound then says how many votes a
position needs before its majority is reliable.  The attacker cannot
evaluate \eqref{eq:vote-bound} itself, because it does not know the
positions $J_K(c)$ and hence the counts $N_j$, which is why
Section~\ref{sec:certificate} certifies a rewrite from the posterior over
the twin classes instead.

\section{Worked examples of relabelings and stabilizers}
\label{app:relabeling-examples}

This appendix works through two relabelings of a binary position--color
watermark.  The first survives after the controlled-account messages are
known and therefore lies in $\mathcal H_M$, while the second is a valid symmetry of
the watermark, but is ruled out by those messages.  Both are pure position
relabelings: their per-position color-flip vector is zero.

\subsection{Common setup}

Let the message space be $\mathcal M=\{0,1\}^3$ and the vocabulary be
$\mathcal V=\{a,b,c,d\}$.  At every context $c$, the key selects a message
position $J_K(c)\in\{1,2,3\}$ and assigns each token a color
$C_K(c,v)\in\{0,1\}$.  The green list is
\begin{equation}
 G_K^m(c)
 =\{v\in\mathcal V:C_K(c,v)=m_{J_K(c)}\}.
 \label{eq:app-worked-preferred-set}
\end{equation}
For a concrete context, suppose
\begin{equation}
 J_K(c)=2,\qquad m=(0,1,1),
 \label{eq:app-worked-context}
\end{equation}
and let the token colors be
\begin{center}
\small
\renewcommand{\arraystretch}{1.15}
\begin{tabular}{@{}ccccc@{}}
\toprule
token $v$ & $a$ & $b$ & $c$ & $d$ \\
\midrule
$C_K(c,v)$ & $0$ & $0$ & $1$ & $1$ \\
\bottomrule
\end{tabular}
\end{center}
The key reads $m_{J_K(c)}=m_2=1$, so
\begin{equation}
 G_K^m(c)=\{v:C_K(c,v)=1\}=\{c,d\}.
 \label{eq:app-worked-concrete-set}
\end{equation}
Finally, suppose that the attacker controls two accounts with messages
\begin{equation}
 m^{(1)}=(0,1,1),\qquad m^{(2)}=(1,0,0),
 \label{eq:app-worked-accounts}
\end{equation}
and write $M=(m^{(1)},m^{(2)})$.

\subsection{A relabeling retained by the stabilizer}

Let $g$ swap the names of positions $2$ and $3$.  Write this permutation as
\begin{equation}
 \sigma_g(1)=1,\qquad \sigma_g(2)=3,\qquad \sigma_g(3)=2.
 \label{eq:app-g-permutation}
\end{equation}

\begin{enumerate}[label=\textbf{Step \arabic*.},leftmargin=*]
\item \textbf{Relabel the message.}
The bit formerly called position $j$ must appear at the new position
$\sigma_g(j)$.  Hence
\begin{equation}
 \begin{aligned}
  \bigl[\rho_g(m)\bigr]_{\sigma_g(j)}&=m_j,\\
  \rho_g(m_1,m_2,m_3)&=(m_1,m_3,m_2).
 \end{aligned}
 \label{eq:app-g-message}
\end{equation}
For example, $(0,1,0)$ becomes $(0,0,1)$.

\item \textbf{Relabel the key.}
The key must follow the same renaming of positions, while the token colors
remain unchanged:
\begin{equation}
 \begin{aligned}
  J_{\rho_g(K)}(c)&=\sigma_g(J_K(c)),\\
  C_{\rho_g(K)}(c,v)&=C_K(c,v).
 \end{aligned}
 \label{eq:app-g-key}
\end{equation}
In the concrete context of \eqref{eq:app-worked-context}, the new key therefore
selects position $3$.

\item \textbf{Check the concrete green list.}
Here $\rho_g(m)=(0,1,1)$ and
\begin{equation}
 \begin{aligned}
  \bigl[\rho_g(m)\bigr]_{J_{\rho_g(K)}(c)}
  &=\bigl[\rho_g(m)\bigr]_3\\
  &=1=m_2.
 \end{aligned}
\end{equation}
The colors have not changed, so the relabeled system again selects
$\{c,d\}$.

\item \textbf{Verify the relabeling for every key, message, and context.}
Fix arbitrary $K,m,c$ and put $j=J_K(c)$.  Equations
\eqref{eq:app-g-message}--\eqref{eq:app-g-key} give
\begin{equation}
 \begin{aligned}
 \bigl[\rho_g(m)\bigr]_{J_{\rho_g(K)}(c)}
 &= \bigl[\rho_g(m)\bigr]_{\sigma_g(j)}\\
 &=m_j=m_{J_K(c)}.
 \end{aligned}
 \label{eq:app-g-selected-bit}
\end{equation}
Substituting this identity and the unchanged colors into
\eqref{eq:app-worked-preferred-set} yields
\begin{equation}
 \begin{aligned}
 G_{\rho_g(K)}^{\rho_g(m)}(c)
 &=\left\{v:
   \substack{C_{\rho_g(K)}(c,v)\\
   =\,[\rho_g(m)]_{J_{\rho_g(K)}(c)}}\right\}\\
 &=\{v:C_K(c,v)=m_{J_K(c)}\}\\
 &=G_K^m(c).
 \end{aligned}
 \label{eq:app-g-equivariance}
\end{equation}
Moreover, $\sigma_g$ is its own inverse.  Thus $g$ is an invertible,
set-preserving relabeling, so $g\in\mathcal G$.

\item \textbf{Check the controlled messages.}
Both controlled messages are fixed because their second and third bits agree:
\begin{center}
\small
\renewcommand{\arraystretch}{1.15}
\begin{tabular}{@{}ccc@{}}
\toprule
account & $m^{(i)}$ & $\rho_g(m^{(i)})$ \\
\midrule
$1$ & $(0,1,1)$ & $(0,1,1)$ \\
$2$ & $(1,0,0)$ & $(1,0,0)$ \\
\bottomrule
\end{tabular}
\end{center}
Therefore $\rho_g(m^{(i)})=m^{(i)}$ for every controlled account, which is
exactly the stabilizer condition in \eqref{eq:stabilizer}.  Consequently,
\begin{equation}
 \boxed{g\in\mathcal G\quad\text{and}\quad g\in\mathcal H_M.}
 \label{eq:app-g-conclusion}
\end{equation}
\end{enumerate}

\subsection{A relabeling excluded by the stabilizer}

Now let $h$ swap the names of positions $1$ and $2$:
\begin{equation}
 \sigma_h(1)=2,\qquad \sigma_h(2)=1,\qquad \sigma_h(3)=3.
 \label{eq:app-h-permutation}
\end{equation}

\begin{enumerate}[label=\textbf{Step \arabic*.},leftmargin=*]
\item \textbf{Relabel the message.}
As before, the old bit $m_j$ is moved to position $\sigma_h(j)$:
\begin{equation}
 \begin{aligned}
  \bigl[\rho_h(m)\bigr]_{\sigma_h(j)}&=m_j,\\
  \rho_h(m_1,m_2,m_3)&=(m_2,m_1,m_3).
 \end{aligned}
 \label{eq:app-h-message}
\end{equation}

\item \textbf{Relabel the key.}
Apply the same position renaming to the key and leave its colors unchanged:
\begin{equation}
 \begin{aligned}
  J_{\rho_h(K)}(c)&=\sigma_h(J_K(c)),\\
  C_{\rho_h(K)}(c,v)&=C_K(c,v).
 \end{aligned}
 \label{eq:app-h-key}
\end{equation}
Thus, in the concrete context, the old choice $J_K(c)=2$ becomes
$J_{\rho_h(K)}(c)=1$.

\item \textbf{Check the concrete green list.}
The message becomes $\rho_h(m)=(1,0,1)$, but the new key reads
\begin{equation}
 \begin{aligned}
  \bigl[\rho_h(m)\bigr]_{J_{\rho_h(K)}(c)}
  &=\bigl[\rho_h(m)\bigr]_1\\
  &=1=m_2.
 \end{aligned}
\end{equation}
The selected bit and token colors are therefore unchanged, so the green
list is again $\{c,d\}$.

\item \textbf{Verify the relabeling for every key, message, and context.}
For arbitrary $K,m,c$, put $j=J_K(c)$.  Then
\begin{equation}
 \begin{aligned}
 \bigl[\rho_h(m)\bigr]_{J_{\rho_h(K)}(c)}
 &= \bigl[\rho_h(m)\bigr]_{\sigma_h(j)}\\
 &=m_j=m_{J_K(c)}.
 \end{aligned}
\end{equation}
Together with $C_{\rho_h(K)}(c,v)=C_K(c,v)$, this gives, for every $K,m,c$,
\begin{equation}
 G_{\rho_h(K)}^{\rho_h(m)}(c)=G_K^m(c)
 \label{eq:app-h-equivariance}
\end{equation}
The permutation $\sigma_h$ is also its own inverse, so $h$ is a valid
relabeling and $h\in\mathcal G$.

\item \textbf{Check the controlled messages.}
This time neither controlled message is fixed:
\begin{center}
\small
\renewcommand{\arraystretch}{1.15}
\begin{tabular}{@{}ccc@{}}
\toprule
account & $m^{(i)}$ & $\rho_h(m^{(i)})$ \\
\midrule
$1$ & $(0,1,1)$ & $(1,0,1)$ \\
$2$ & $(1,0,0)$ & $(0,1,0)$ \\
\bottomrule
\end{tabular}
\end{center}
Already $\rho_h(m^{(1)})\ne m^{(1)}$, so $h$ fails the defining condition of
$\mathcal H_M$.  Hence
\begin{equation}
 \boxed{h\in\mathcal G\quad\text{but}\quad h\notin\mathcal H_M.}
 \label{eq:app-h-conclusion}
\end{equation}
\end{enumerate}

\subsection{What the two examples show}

\begin{center}
\small
\renewcommand{\arraystretch}{1.2}
\begin{tabular}{@{}clcc@{}}
\toprule
map & \shortstack{position\\renaming} & in $\mathcal G$? & in $\mathcal H_M$? \\
\midrule
$g$ & $2\leftrightarrow3$ & yes & yes \\
$h$ & $1\leftrightarrow2$ & yes & no \\
\bottomrule
\end{tabular}
\end{center}
Thus $\mathcal G$ contains all valid symmetries of the watermark, whereas
$\mathcal H_M$ contains only those symmetries that the known controlled-account
messages do not rule out.  These examples use pure position permutations.  The
full binary position--color group also permits a position permutation combined
with per-position color flips, and such a combination is a different group
element.  Therefore $h\notin\mathcal H_M$ excludes the particular pure swap
defined above, not every relabeling whose position component swaps $1$ and $2$.

\section{Use of AI assistants}
\label{app:ai-assistants}
We used large language models to help polish the presentation and writing
of this manuscript and to assist in drafting portions of the implementation
code.  All substantive research decisions and core technical contributions,
however, were made by the authors.

\end{document}